\newcommand*{\rom}[1]{\expandafter\@slowromancap\romannumeral #1@}
\newtheorem{theorem}{Theorem}
\newtheorem{lemma}[theorem]{Lemma}
\newtheorem{definition}{Definition}
\newtheorem{example}[theorem]{Example}
\newtheorem{assumption}{Assumption}
\newtheorem{proposition}{Proposition}
\newtheorem{remark}{Remark}
\newcommand{\E}{\mathbb{E}}
\newcommand{\R}{\mathbb{R}}
\newcommand{\sA}{\mathcal{A}}
\DeclareMathOperator*{\argmax}{arg\,max}
\DeclareMathOperator*{\argmin}{arg\,min}
\title{Learning to Steer Learners in Games}
\author{
    Yizhou Zhang \thanks{California Institute of Technology; e-mail: {\tt yzhang8@caltech.edu}}
    ~~
    Yi-An Ma \thanks{University of California San Diego; e-mail: {\tt yianma@ucsd.edu}}
    ~~
    Eric Mazumdar \thanks{California Institute of Technology; e-mail: {\tt mazumdar@caltech.edu}}
}
\begin{document}

\maketitle

\begin{abstract}
    We consider the problem of learning to exploit learning algorithms through repeated interactions in games. Specifically, we focus on the case of repeated two player, finite-action games, in which an optimizer aims to steer a no-regret learner to a Stackelberg equilibrium without knowledge of its payoffs. We first show that this is impossible if the optimizer only knows that the learner is using an algorithm from the general class of no-regret algorithms. This suggests that the optimizer requires more information about the learner's objectives or algorithm to successfully exploit them. Building on this intuition, we reduce the problem for the optimizer to that of recovering the learner's payoff structure. We demonstrate the effectiveness of this approach if the learner's algorithm is drawn from a smaller class by analyzing two examples: one where the learner uses an ascent algorithm, and another where the learner uses stochastic mirror ascent with known regularizer and step sizes.
\end{abstract}

\section{Introduction}

Learning algorithms and AI agents are increasingly being deployed into environments where they interact with other learning agents---be they people or other algorithms. This is already a reality in wide-ranging application areas such as ad auctions, self-driving, automated trading, and cybersecurity. In each of these problem areas, the presence of other agents with potentially misaligned objectives renders the problem game-theoretic in nature. Algorithms deployed in such environments are therefore faced with a generalization of the classic exploration-exploitation trade-off in online learning: On one hand, they must take actions to learn the underlying structure of the game (i.e., its payoff and potentially its opponents' objectives), and on the other, they must reason strategically about the game-theoretic implications of its actions to maximize utility. Thus, the problem becomes trading off between exploration and \emph{competition}. 

To address this problem, the dominant approach to learning in games has been formulating it as an adversarial online learning problem. In this framing, to handle opponents with unknown objectives and learning rules, each player assumes they are faced with an arbitrary (and potentially adversarial) sequence of payoffs and seeks to find an algorithm that maximizes their own utility. Given this setup, a natural class of algorithms to choose from is the class of no-regret algorithms, which guarantees asymptotically optimal performance compared to the best fixed action in hindsight \citep{cesa2006prediction}.
If all players use no-regret algorithms, it is well known that the average action of the players over the entire time horizon converges to a (coarse) correlated equilibrium \citep{foster1998asymptotic, hart2000simple}.
However, since the opponents are also learners rather than adversaries, adopting a no-regret algorithm may not be optimal.

In this paper, we seek to understand how one player should \emph{deviate} from choosing a no-regret learning algorithm in games. We focus on a simplified abstraction of this problem---the repeated two player games with finite actions, in which each player only knows their own utility function (i.e., their payoff matrix), but not their opponent's.

This problem has been well studied in recent years---though primarily in the case where the payoff matrix, and thus the objective of the opponent player, is known. Under such assumptions, it was shown in \citep{deng2019strategizingnoregretlearners} that if one player (\textbf{the optimizer}) deviates from using a no-regret algorithm, it can (under mild assumptions about the game instance) guarantee an asymptotic average payoff that is arbitrarily close to the \textit{Stackelberg value} of the game by \emph{steering} the no-regret player (\textbf{the learner}) to the \textit{Stackelberg equilibrium} of the underlying matrix game. This is the highest attainable value if the optimizer plays one fixed mixed strategy. Further studies have focused on analyzing the steerability of smaller classes of algorithms, such as no-swap-regret~\citep{brown2023learninggamesgoodlearners} or mean-based algorithms~\citep{arunachaleswaran2024paretooptimalalgorithmslearninggames}, providing more insights into steering no-regret learners. Crucially, all these works assume knowledge of the learner's objectives---knowledge that the optimizer itself may not even have. 

In real-world applications, instead of knowing the entire game (i.e., knowledge of both payoff matrices), it is often the case that each player only knows their own payoff matrix. Despite this, few works have studied the setting where the payoff structure of the learner is initially unknown (or only partially known) to the optimizer. Thus, in this work we focus on answering this question: 

\begin{center} \textit{Without full knowledge of the game, can an optimizer learn to steer a no-regret learner to an (approximate) Stackelberg equlilibrium?} \end{center}

As we will show, a crucial question that emerges from studying this problem is the \emph{learnability} of the learner's objective through repeated interaction. Giving rise to the second main question that we answer:

\begin{center}\textit{What information does the optimizer need in order to achieve this goal?}\end{center}

\subsection{Our Contributions}
We answer both questions in the setting of a repeated two player bimatrix game over a fixed (but assumed large) time horizon $T$. Our results can be summarized as follows:
\begin{itemize}
    \item We provide a \textit{negative} answer to the first question. More specifically, we show that when the learner's payoff matrix is unknown, no matter what algorithm the optimizer uses, there exists a no-regret algorithm for the learner that prevents the optimizer from achieving its approximate Stackelberg value. This happens because the optimizer cannot accurately learn the learner's payoff. This result suggests that we cannot hope to design an algorithm that asymptotically achieves the same performance as the Stackelberg equilibrium against \textit{all} no-regret algorithms and \text{all} payoff matrices of the learner. This highlights a fundamental difference from the case where the learner's payoff is known---where the asymptotic Stackelberg value is attainable---due to the lack of information.
    \item Given the impossibility result above, we shift our focus to \emph{what} information is needed to steer the learner. We show that in order to achieve asymptotic Stackelberg value, instead of exactly recovering the learner's payoff structure, it suffices for the optimizer to first obtain a reasonably accurate estimation of the payoff matrix (or equivalently, the best-response structure) through some learning method and then incorporate the idea of \textit{pessimism} to steer the learner to the Stackelberg equilibrium by leveraging its no-regret nature. We show that as long as the estimation process takes no more than $o(T)$ steps, the optimizer is able to steer the learner to the Stackelberg equilibrium and consequently (asymptotically) achieve its Stackelberg value by using an explore-then-commit style algorithm.
    \item Building on the previous result, we show by two concrete examples that, when some information about the learner's update rule is known, it is possible to learn the learner's payoff structure and thus to steer them to the Stackelberg equilibrium. One example assumes the learner only has two pure strategies and is using \emph{any} ascent algorithm where its payoff increases at each step and the other assumes that the learner is using stochastic mirror ascent with known step sizes and regularizer. We note that most existing no-regret algorithms share similar dynamics with these two cases.
\end{itemize}

\section{Related Works}
Before presenting our results, we discuss relevant related works.
\paragraph{Steering no-regret learners.}
The problem of steering a no-regret learner in a repeated game has been the focus of several recent works, though often under strong assumptions on what information is available to the optimizer.
Assuming known learner payoffs, \citet{braverman2017sellingnoregretbuyer} first introduced the problem of steering a learner in an auction setting. Subsequently, \citet{deng2019strategizingnoregretlearners} showed that the optimizer can guarantee at least the Stackelberg value against the learner in bimatrix games and \citet{assos2024maximizingutilitymultiagentenvironments} studied the problem of utility maximization under the additional assumptions on the learner's algorithm. \citet{brown2023learninggamesgoodlearners} also focused on smaller classes of no-regret algorithms such as no-swap-regret, anytime no-regret and no-adaptive-regret algorithms. While sharing a similar goal with our work, all of these works use the known learner payoff to design steering algorithms, which is not available in our setting.
Without the knowledge of learner payoff, \citet{brânzei2024duelingdessertmasteringart} showed the steerability of the learner in a cake-cutting model, which can be viewed as a 2D special case of our problem.
\citet{lin2024generalized} proposed a principal-agent framework and studied the optimal average utility that can be obtained by the optimizer assuming either a no-regret or a no-swap-regret algorithm under known learner payoff.
There is also a broader line of works regarding more general properties of interacting with no-regret learners, including \citep{zhang2024steeringnoregretlearnersdesired} that considered the steering problem through direct payments to the learner and \citep{arunachaleswaran2024paretooptimalalgorithmslearninggames} that studied the problem of pareto-optimality in the space of learning algorithms.

\paragraph{Learning in Stackelberg games.}
The problem of steering is closely related to the problem of learning to play a Stackelberg equilibrium through repeated interactions. Thus, a particularly related line of work involves learning in unknown repeated Stackelberg games, where the decisions are made sequentially in each round. This problem has been well-studied in its own right but often under simplifying assumptions on the games or the responses of the follower (the learner in our framing of the problem). \citet{letchford2009learning} and \citet{Peng_Shen_Tang_Zuo_2019} proposed algorithms for learning through interaction with a myopic best-responding agent, and \citet{haghtalab2022learning} extended this framework to non-myopic agents with discounted utilities over time--- a different setup from the one we consider.
Other works have analyzed similar problems under different assumptions on the underlying game. In the control literature, \citet{lauffer2022noregretlearningdynamicstackelberg} studied the problem of learning in dynamic Stackelberg games and showed that one could learn the Stackelberg equilibrium.  \citet{maheshwaristack} studied a similar problem of learning Stackelberg equilibria in the context of continuous games; \citet{goktas2022robustnoregretlearningminmax} studied the behavior of no-regret learning in a smaller class of zero-sum Stackelberg games, and the problem of steering learning agents has also emerged in the literature on strategic classification \citep{zrnic2021who}. In each of these problems, the additional structure introduced into the games simplifies the task of learning the Stackelberg equilibrium through e.g., convexity or smoothness of the underlying optimization problem. Unfortunately, in bimatrix games, the Stackelberg optimization problem is both highly non-convex and discontinuous, vastly complicating the task of learning Stackelberg equilibria.
Learning in Stackelberg games has also been studied in various application areas including security \citep{blum2014learning, balcan2015commitment}, calibration \citep{haghtalab2023calibratedstackelberggameslearning} and learning with side information \citep{harris2024regretminimizationstackelberggames}. Most results proposed in these works assume the follower uses (nearly) myopic best response dynamics, which does not fit into our setting.

\paragraph{Other related works.}
Several other relevant works do not directly fit into either `steering' or `learning'. \citet{10.1145/1134707.1134717} first proposed an efficient algorithm of computing Stackelberg equilibria when the game is known through solving and combining several small linear programs, motivating our steering approach based on payoff matrix recovery. \citet{gan2023robust} studied the notion of robust Stackelberg equilibrium allowing the follower to respond with any $\delta$-optimal response. Despite being formulated differently, this shares the same high-level idea with our approach of inducing pessimism to steer learners. \citet{collina2024efficientstackelbergstrategiesfinitely} considered the setting of finitely repeated Stackelberg games where the leader commitments and follower actions are adaptive to the gameplay history and proposed an efficient algorithm for approximating Stackelberg equilibria in the space of adaptive game playing algorithms. There are also works from an empirical perspective that considers (and leverages) the learning behavior of other agents to achieve higher payoff and more stable learning process \citep{foerster2017learning, lu2022model}.

\section{Notations and Preliminaries}
Throughout this paper, we use $\Delta_m$ to denote the probability simplex in $\R^m$. We use $[m]$ to denote the set $\{1,2,\dots ,m\}$, and $\{x_t\}_{t=1}^T$ to denote the set $\{x_1,x_2,\dots, x_T\}$. We use $e_i$ to denote the $i$-th one-hot vector, whose $j$-th element is $1\{i=j\}$. We use $\|\cdot\|_1,\|\cdot\|_\infty$ to denote the $L_1$ and $L_\infty$ norm of a matrix/vector, and $\|\cdot\|_{\max}$ to denote the max norm of a matrix, which is given by the maximum absolute value among all entries. We use $0_n$ and $1_n$ to denote the all-zero and all-one vectors in $\R^n$ respectively. For two vectors $a,b\in\R^n$, $a\leq b$ indicates $a_i\leq b_i, \forall i\in[n]$.
For a matrix $M$, $M_{i, :}$ (and abbreviation $M_i$) denotes the $i$-th row of $M$ and $M_{:, j}$ denotes the $j$-th column of a matrix. As an extension, we use $M_{\mathcal{I}, :}$ (and abbreviation $M_{\mathcal{I}}$) and $M_{:, \mathcal{J}}$ to denote the matrix obtained by combining the rows in an index set $\mathcal{I}$ (columns in an index set $\mathcal{J}$) respectively, and $M_{\mathcal{I},\mathcal{J}}$ similarly denote the matrix obtained by choosing rows in $\mathcal{I}$ and columns in $\mathcal{J}$. For an index set $\mathcal{I}$, let $\mathcal{I}_i$ denote the $i$-th element in $\mathcal{I}$.

\subsection{Problem Setup}
We consider a repeated general-sum bimatrix game $G$ with two players, referred to as $P_1$ (the optimizer) and $P_2$ (the learner). There are $T$ rounds of repeated interaction, in each round $t$, $P_1$ and $P_2$ play \emph{mixed} actions $x_t\in\Delta_m$ and $y_t\in\Delta_n$ simultaneously. We use $A,B\in\R^{m\times n}$ to denote the payoff matrices of $P_1$ and $P_2$ respectively, thus their utility in round $t$ could be written as $x_t^TAy_t$ and $x_t^TBy_t$. 

Within the interaction process, players use \textit{algorithms} to decide the action they play. An \textit{algorithm} takes the interaction sequence $\{x_\tau,y_\tau\}_{\tau=1}^{t-1}$ as input, and outputs the actions $x_t$ (for $P_1$) or $y_t$ (for $P_2$). We allow the algorithm of each player to be randomized, and the goal of each player is to obtain a higher expected total utility across all time steps, which can be written as:
\begin{equation*}\begin{aligned}
\textstyle
    U(P_1)=\E\left[\sum_{t=1}^T x_t^TAy_t\right];
    U(P_2)=\E\left[\sum_{t=1}^T x_t^TBy_t\right].
\end{aligned}\end{equation*}

\subsection{Regret and No-regret Algorithms}
Given a fixed sequence of actions $\{x_t\}_{t=1}^T$ played by $P_1$, a natural metric of the performance of $P_2$ is the regret, which compares to the best action in hindsight. We define three forms of regret, one regret associated with a trajectory, one incurred by an algorithm, and one incurred in a game.
\begin{definition}
    Given an interaction history $\{x_t,y_t\}_{t=1}^T$, the learner regret of $P_2$ on the \emph{trajectory} is defined as:
        \begin{equation}
        Reg_2(\{x_t,y_t\}_{t=1}^T):=\max_{y\in\Delta_n}\sum_{t=1}^T x_t^TBy-\sum_{t=1}^T x_t^TBy_t.
    \end{equation} 
    Given a sequence of optimizer actions $\{x_t\}_{t=1}^T$, the learner regret of $P_2$ under the \emph{learner algorithm} $\mathcal{A}_2$ is defined as:
        \begin{equation*}
        Reg_2(\mathcal{A}_2,\{x_t\}_{t=1}^T):=\max_{y\in\Delta_n}\sum_{t=1}^T x_t^TBy-\E_{\mathcal{A}_2}\left[\sum_{t=1}^T x_t^TBy_t\right].
    \end{equation*} 
    Given the optimizer algorithm $\mathcal{A}_1$ and the learner algorithm $\mathcal{A}_2$, the learner regret of $P_2$ is the expected learner regret under $\mathcal{A}_1,\mathcal{A}_2$:
    \begin{equation*}
        Reg_2(\mathcal{A}_1,\mathcal{A}_2)
        :=\mathbb{E}_{\{x_t,y_t\}_{t=1}^T\sim \mathcal{A}_1,\mathcal{A}_2} Reg_2(\{x_t,y_t\}_{t=1}^T).
    \end{equation*}
\end{definition}

The learner would like to choose an algorithm that achieves a low regret on different possible optimizer trajectories $\{x_t\}_{t=1}^T$, while being flexible to prevent the optimizer from efficiently learning its payoff matrix. Faced with such trade-off, the learner may set a regret budget $f$, and aim to act against possible optimizer exploration strategies while keeping its regret under this budget. To better characterize these circumstances, we make the following definition of fine-grained no-regret algorithms:

\begin{definition}\label{def:f_no_regret_algorithms}
Given some function $f:\mathbb{N}\rightarrow \mathbb{R}$ such that $f(T)=o(T)$ and an optimizer action sequence $\{x_t\}_{t=1}^T$, an interaction sequence $\{x_t,y_t\}_{t=1}^T$ is $f$-no-regret for the learner (with constant $C$) if
\begin{equation}
    Reg_2(\{x_t,y_t\}_{t=1}^T)\leq C\cdot f(T)
\end{equation}
for some constant $C$, a learner algorithm $\mathcal{A}_2$ is $f$-no-regret (with constant $C$) on $\{x_t\}_{t=1}^T$ if
    \begin{equation}
        Reg_2(\mathcal{A}_2,\{x_t\}_{t=1}^T)\leq C\cdot f(T)
    \end{equation}
    for some constant $C$ as $T\rightarrow \infty$. An algorithm is $f$-no-regret if it is $f$-no-regret on all possible optimizer action sequences. An algorithm is no-regret if it is $f$-no-regret for some $f(T)=o(T)$.
\end{definition}

\subsection{Stackelberg Equilibrium and Stackelberg Regret}
Consider a simple optimizer strategy that plays a fixed action $x$ at each time step, if the learner wants to be no-regret on the resulting trajectory, its action sequence $\{y_t\}_{t=1}^T$ will converge to a \textit{best response} to $x$, defined as:
\begin{definition}
    For an action $x$ of $P_1$, the best response of $P_2$ given its payoff matrix $B$ is the set of actions maximizing its payoff:
    \begin{equation*}
        BR(B, x):=\{y\in\Delta_n: x^TBy\geq x^TBy', \forall y'\in \Delta_n\}.
    \end{equation*}
\end{definition}

If the optimizer simply commits to a fixed action and the learner best-responds, the choice that maximizes its utility should yield a \textit{Stackelberg equilibrium}:
\begin{definition}
    An action pair $(x^*,y^*)$ is a Stackelberg equilibrium if it is a solution to the following optimization problem:
    \begin{equation}\label{eq:Stackelberg_definition}
        \begin{aligned}
            \text{maximize} \quad & x^T Ay\\
            \text{subject to} \quad & y\in BR(B,x), x\in\Delta_m.
        \end{aligned}
    \end{equation}
    The Stackelberg value for $P_1$ is defined as the optimal value of \eqref{eq:Stackelberg_definition}, denoted by $V(A,B)$. 
\end{definition}

Notice that in general, the set $BR(B,x)$ can contain more than one element. This would yield potentially different Stackelberg values among the best-response set. We use the rule of \textit{optimistic tie-breaking} to adopt the one with the highest optimizer payoff among all best responses to define the Stackelberg equilibrium. However, we do not impose the assumption that the learner will also use optimistic tie-breaking when deciding between indifferent actions.

Since the Stackelberg value is the highest possible average-reward it can get through fixing action among all time steps, we use \textit{Stackelberg regret} to measure its performance:
\begin{definition}
    Given an interaction history $\{x_t,y_t\}_{t=1}^T$, the Stackelberg regret of $P_1$ is:\begin{equation*}
        StackReg_1(\{x_t,y_t\}_{t=1}^T):=T\cdot V(A,B)-\sum_{t=1}^T x_t^TAy_t
    \end{equation*}
    Given the optimizer algorithm $\mathcal{A}_1$ and the learner algorithm $\mathcal{A}_2$, the Stackelberg regret of $P_1$ is the expected Stackelberg regret under $\mathcal{A}_1,\mathcal{A}_2$:
    \begin{equation}\begin{aligned}
        &StackReg_1(\mathcal{A}_1,\mathcal{A}_2)\\
        &:=\mathbb{E}_{\{x_t,y_t\}_{t=1}^T\sim \mathcal{A}_1,\mathcal{A}_2} StackReg_1(\{x_t,y_t\}_{t=1}^T)
    \end{aligned}\end{equation}
\end{definition}

In the following sections we build upon these preliminaries to study the problem of steering the learner to the Stackelberg equilibrium. For brevity and ease of exposition, we defer all proofs to the Appendices.

\section{Impossibility of Learning to Steer Agents Who Use General No-regret Algorithms}\label{sec:general_noregret_not_leaking}
When the learner payoff matrix $B$ is known to the optimizer, \citet{deng2019strategizingnoregretlearners} proved that under mild assumptions, there exists an optimizer algorithm $\mathcal{A}_1$ that guarantees a Stackelberg regret of at most $StackReg(\mathcal{A}_1,\mathcal{A}_2)\leq \epsilon T+o(T)$ for an arbitrarily small constant $\epsilon$. However, if the optimizer doesn't have full knowledge of $B$, extracting Stackelberg value becomes harder.
It is natural to wonder if we do not impose any extra assumptions (other than being no-regret) on $P_2$, is it still possible for $P_1$ to learn the payoff structure of $P_2$ through the interaction process and thereby extract the Stackelberg value for all possible game instances? The following result shows that this is impossible in general:
\begin{theorem}\label{thm:impossibility_wo_update_rule}
    There exists a pair of game instances $G_1=(A,B_1)$ and $G_2=(A,B_2)$ with the same optimizer payoff matrix $A$, such that for all optimizer algorithms $\mathcal{A}_1$, there exists a no-regret algorithm $\mathcal{A}_2$ for the learner satisfying: $StackReg_1(\mathcal{A}_1,\mathcal{A}_2)=o(T)$ on $G_1$ and $StackReg_1(\mathcal{A}_1,\mathcal{A}_2)=cT$ for some constant $c$ on $G_2$.
\end{theorem}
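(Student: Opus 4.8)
The plan is to construct two games that are indistinguishable to the optimizer until it has committed to a strategy that is good for one game but bad for the other, and then exhibit a learner algorithm that exploits this ambiguity. Concretely, I would pick a single optimizer payoff matrix $A$ and two learner payoff matrices $B_1, B_2$ such that the Stackelberg equilibria of $(A,B_1)$ and $(A,B_2)$ require the optimizer to steer the learner toward \emph{different} actions (say column $j_1$ in $G_1$ and column $j_2 \neq j_1$ in $G_2$), and such that the optimizer payoff from steering to the ``wrong'' target is bounded away from the Stackelberg value by a constant. A clean way to do this is to make $B_1$ and $B_2$ agree on most of the simplex — so that the learner's best-response behavior looks the same under a large family of optimizer action sequences — but differ precisely on the region the optimizer would need to probe to distinguish them.

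The key steps, in order: (i) Fix the construction of $A, B_1, B_2$ and verify by direct computation that $V(A,B_1)$ and $V(A,B_2)$ are attained by steering to incompatible learner actions, with a constant payoff gap when the optimizer steers to the wrong one. (ii) Given an arbitrary optimizer algorithm $\mathcal{A}_1$, define the learner algorithm $\mathcal{A}_2$ to behave identically on $G_1$ and $G_2$ for as long as the optimizer's play does not ``commit'' — more precisely, $\mathcal{A}_2$ runs some fixed no-regret algorithm (e.g., a mean-based / regularized-leader algorithm) whose updates depend on $B$ only through the sequence $\{x_t\}$, and on the constructed $B_1,B_2$ these updates coincide unless the optimizer plays in the distinguishing region. (iii) Argue a dichotomy: either $\mathcal{A}_1$ eventually plays in the distinguishing region only $o(T)$ of the time — in which case its behavior is (near-)identical across the two games, so it cannot be steering to $j_1$ in $G_1$ \emph{and} to $j_2$ in $G_2$, forcing $\Omega(T)$ Stackelberg regret on at least one of them — or $\mathcal{A}_1$ plays in the distinguishing region $\Omega(T)$ of the time, in which case on the game where that region is ``bad'' the optimizer directly loses $\Omega(T)$ utility. (iv) Check that the resulting $\mathcal{A}_2$ is genuinely no-regret against \emph{every} optimizer sequence (not just the one induced by $\mathcal{A}_1$), which follows if we build it from a standard no-regret algorithm and only ever use its freedom in tie-breaking or in the indistinguishable region, so the regret bound is inherited.

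The main obstacle I anticipate is step (ii)–(iii): making the learner's algorithm simultaneously (a) no-regret on all optimizer sequences, (b) genuinely unable to be pinned down — i.e., the optimizer's information about which game it is in must remain insufficient unless it pays $\Omega(T)$ — and (c) consistent, in the sense that it corresponds to one fixed algorithm that does not ``know'' which of $B_1, B_2$ is the true matrix (since $\mathcal{A}_2$ is allowed to depend on $B$, but the impossibility must hold for the \emph{same} $\mathcal{A}_2$ on both games only in the sense that $\mathcal{A}_1$ cannot tell them apart). The delicate point is that $\mathcal{A}_2$ \emph{is} game-dependent, so the real content is an information-theoretic / adversarial argument: the optimizer's action at time $t$ is a function only of the observed history $\{x_\tau, y_\tau\}_{\tau<t}$, and we must ensure the learner can make these histories statistically close (or identical) across $G_1$ and $G_2$ along any trajectory that does not already incur linear regret. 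I would handle this by choosing $B_1, B_2$ so that the best responses — and hence the learner's no-regret trajectory — coincide on a full-dimensional cone of optimizer strategies, reducing the distinguishing region to a ``thin'' set that the optimizer cannot dwell in without sacrificing $\Omega(T)$ payoff, and then running a careful case analysis on the fraction of rounds $\mathcal{A}_1$ spends there.
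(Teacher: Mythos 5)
Your high-level plan matches the paper's: two games sharing $A$ whose Stackelberg equilibria demand incompatible commitments, a learner whose observable behavior coincides across the two games unless the optimizer pays for the distinction, and a dichotomy on how much the optimizer probes. However, there is a genuine gap in steps (ii) and (iv), and it is precisely the part the paper works hardest on. You propose to instantiate the learner as a standard mean-based/FTRL-type no-regret algorithm and to rely on its ``freedom in tie-breaking'' to keep the two games observationally identical. This does not work: a responsive no-regret algorithm's internal state depends on the cumulative payoff vectors $\sum_\tau x_\tau^T B$, so a \emph{single} probe in the distinguishing region (e.g., one round of $x_t=(1,0)^T$ against $B_1=\bigl[\begin{smallmatrix}0&\epsilon\\0&1\end{smallmatrix}\bigr]$ versus $B_2=I$) already shifts the learner's subsequent play in one game but not the other, since the preference there is strict, not a tie. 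The optimizer can then identify the game at $O(1)$ cost and commit to the correct Stackelberg strategy in each, defeating the lower bound. Tie-breaking freedom is not enough; the learner must be willing to play a \emph{strictly suboptimal} response for a controlled, sublinear amount of cumulative regret, and must decide when to stop doing so.

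The paper's resolution is the missing idea: the learner plays a constant action $y=(0,1)^T$ (which is dominant under $B_1$ and is the best response under $B_2$ to the region where any no-Stackelberg-regret optimizer on $G_1$ must concentrate), and it \emph{monitors a budget} --- the running Stackelberg regret of the optimizer's trajectory evaluated on $G_1$ --- switching to a genuine no-regret algorithm (mirror ascent) only once this statistic exceeds $\sqrt{Tg(T)}$, where $g(T)$ bounds $\mathcal{A}_1$'s Stackelberg regret on $G_1$. This makes $\tilde{\mathcal{A}}_2$ no-regret against \emph{every} optimizer sequence (the stubborn phase can only accumulate $O(\sqrt{Tg(T)})$ learner regret before the switch triggers), while a Markov-inequality argument shows the switch fires with probability $O(\sqrt{g(T)/T})$ against the actual $\mathcal{A}_1$, preserving the $\Theta(T)$ Stackelberg regret on $G_2$. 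Note also that the paper's logic is cleaner than your region-based dichotomy: it assumes WLOG that $\mathcal{A}_1$ has $o(T)$ Stackelberg regret on $G_1$ against the constant learner, deduces that its average play must converge to $(0,1)^T$, and reads off both the $\Theta(T)$ loss on $G_2$ and the learner's $o(T)$ regret on $G_2$ from that single fact --- no argument about ``time spent in a thin region'' is needed, and indeed in the paper's construction the distinguishing region is half the simplex, not thin. Your worry in the final paragraph about $\mathcal{A}_2$ being game-dependent is handled correctly in the paper: a different $\tilde{\mathcal{A}}_2$ may be chosen per game, and the point is only that the induced histories against $\mathcal{A}_1$ coincide.
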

The proof of \Cref{thm:impossibility_wo_update_rule} is deferred to \Cref{app:proof_impossibility_wo_update_rule}.

\Cref{thm:impossibility_wo_update_rule} suggests that even if the optimizer knows that the learner payoff is one of the two different candidates $B_1$ or $B_2$, whatever algorithm $\mathcal{A}_1$ they try to come up with, there exists a no-regret algorithm $\mathcal{A}_2$ for the learner that can induce an $\Theta(T)$ Stackelberg regret to the optimizer in one of the two game instances. The proof of \Cref{thm:impossibility_wo_update_rule} relies on first designing a game instance such that $G_1$ and $G_2$ has different Stackelberg equilibrium, and use a simple algorithm $\mathcal{A}_2'$ against which the optimizer is not able to distinguish whether the realized learner payoff matrix is $B_1$ or $B_2$, before finally modifying it to be no-regret. The idea of constructing a pair of game instances that have different equilibrium but the same payoff function for one player is also used in the proof of Theorem 3 in \citep{10735356}, in which they showed that in a repeated two-player game where the opponent strategy is not known, no algorithm can achieve a bounded competitive ratio against itself (used by the opponent) for all such game instances.

This suggests that we cannot hope to design a no-Stackelberg-regret algorithm for the optimizer that works simultaneously well on all game instances without any additional assumption on the learner besides it being no-regret. Interestingly, Theorem 6 in \citep{brown2023learninggamesgoodlearners} suggests that the optimizer is able to learn and steer a no-adaptive-regret learner, indicating the fundamental difference of learning to steer learners with different algorithm classes.

\section{Steering through Facets and Payoff Matrix Recovery}\label{sec:facets_approx_SE}

Given \Cref{thm:impossibility_wo_update_rule}, a natural question that emerges is \emph{what} information the optimizer needs to acquire to be able steer no-regret learners to Stackelberg equilibrium. In this section we present two alternative sufficient conditions under which the optimizer can steer the learner to the Stackelberg equilibrium and achieve $o(T)$ Stackelberg regret by simply fixing one action at different time steps. The first sufficient condition is the approximate pessimistic recovery of facets---the best response regions for each pure learner strategy, and the second sufficient condition is an approximation of the learner's payoff matrix, up to an equivalence class.

\subsection{Facets and Equivalence Classes of Payoff Matrices}\label{subsec:facets_and_equivalence_classes}
From the optimizer's perspective, the matrix $B$ does not directly show up on its payoff. The only way that $B$ influences the Stackelberg equilibrium and the value is through the induced best response set $BR(B,x)$. Therefore, intuitively all the information that the optimizer needs to characterize the response dynamics is encoded in the best response for each $x\in\Delta_m$. We characterize each point $x$ in the optimizer's simplex $\Delta_m$ by which best response it could induce, as indicated in the following definition:
\begin{definition}
    For any possible payoff matrix $B$ of the learner, the facet $E_i\subseteq \Delta_m$ corresponding to the $i$-th learner action $e_i$ is defined as the set of optimizer actions $x\in\Delta_m$ such that $e_i$ is a best response to $x$:
    \begin{equation}
        E_i:=\{x\in\Delta_m: e_i\in BR(B,x)\}.
    \end{equation}
    We sometimes use $E_i(B)$ to explicitly indicate that $E_i$ is induced by $B$.
\end{definition}
Intuitively, the boundary of a facet specifies the critical hyperplane in the space of mixed strategies $\Delta_m$ of the optimizer, where the learner is indifferent between two (or more) pure strategies. In general, a Stackelberg equilibrium strategy $x^*$ stays at an extreme point of one facet, and therefore, in order to extract Stackelberg from the learner, the optimizer must be able to (or potentially implicitly) reconstruct the facet boundaries around the equilibrium point. We illustrate the definition of facets by the following example:
\begin{example}\label{example:1}
    Let $m=n=3$ and consider the learner payoff matrix $B=I$, with the optimizer action $x=(x_1,x_2,x_3)^T$, the facets $E_1,E_2$ and $E_3$ are plotted with different colors in $\Delta_3$ in \Cref{fig:illustration1}.
    \begin{figure}[htb]
        \centering
        \includegraphics[width=0.5\linewidth]{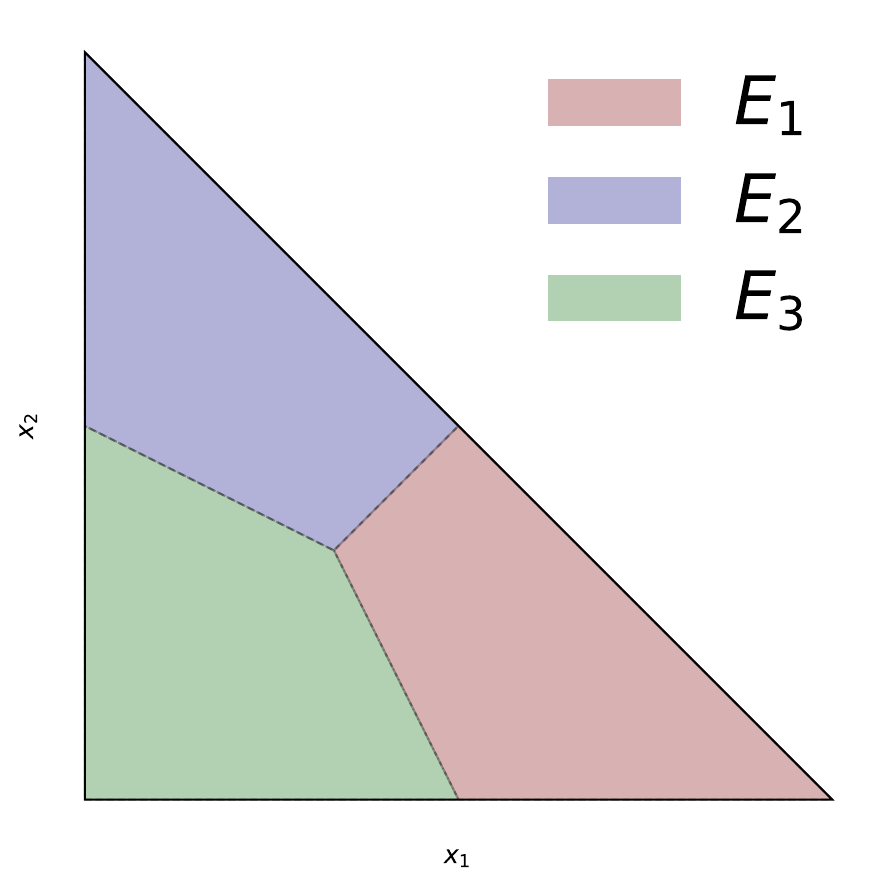}
        \caption{Plot of facets $E_1,E_2,E_3$.}
        \label{fig:illustration1}
    \end{figure}
    The dashed lines indicate the boundary between facets.
\end{example}
While similar definitions are made in \citep{letchford2009learning, Peng_Shen_Tang_Zuo_2019, lattimore2019information}, we use the word \textit{facet} because each $E_i$ induced by any matrix $B$ is indeed a polytope that is a subset of $\Delta_m$ and their union $\cup_{i\in[n]} E_i=\Delta_m$.

Following \citet{10.1145/1134707.1134717}, once we have identified the facets for all $i\in[n]$, we can compute the Stackelberg equilibrium in the following way: First solve the linear program:
\begin{equation}\label{opt:exact_facet}
    \begin{aligned}
        \max_{x\in E_i(B)} \quad & V_i(A,B)=x^T A_{:,i}
    \end{aligned}
\end{equation}
for each $i\in[n]$. Since $P_2$ plays a pure strategy at equilibrium, the Stackelberg value is then given by:
\begin{equation}\label{eq:Stackelberg_facet}
    V(A,B)=\max_{i\in[n]} V_i(A,B)
\end{equation}
with the corresponding solution $(x^*,i^*)$ being the Stackelberg equilibrium.

At a Stackelberg equilibrium the learner is usually indifferent between multiple pure strategies, which means that switching from the equilibrium pure strategy $y^*$ to another indifferent pure strategy $y'$ does not incur additional regret to the learner, while potentially vastly degrading the optimizer's payoff. Therefore, instead of simply selecting the equilibrium point, the optimizer must choose a \textit{pessimistic} equilibrium point that sacrifices some utility from Stackelberg value to guarantee the learner responds with $y^*$. We define \textit{pessimistic facets} as:
\begin{definition}
    Given a facet $E_i\subseteq \Delta_m$, a pessimistic facet $E_i^-$ is a subset of $E_i$. We say $E_i^-$ is $d$-pessimistic if it is non-empty and $d_H(E_i,E_i^-)\leq d$, where $d_H(\cdot,\cdot)$ is the Hausdorff distance with respect to the $L_1$ norm.
\end{definition}

If the optimizer plays a `safe' version of Stackelberg equilibrium by modifying \eqref{opt:exact_facet} to:
\begin{equation}\label{opt:pessimistic_facet}
    \begin{aligned}
        \max_{x\in E_i^-} \quad & V_i^-(A,B)=x^T A_{:,i}
    \end{aligned}
\end{equation}
and obtains a pessimistic value with respect to \eqref{eq:Stackelberg_facet} as:
\begin{equation}\label{eq:pessimistic_facet}
    V^-(A,B)=\max_{i\in[n]} V_i^-(A,B),
\end{equation}
the obtained value $V^-(A,B)$ will be close to $V(A,B)$ if $d$ is small, stated formally as follows:
\begin{proposition}\label{prop:pessimistic_value_facet}
    Consider the optimization problem \eqref{opt:pessimistic_facet}. If the facet $E_i^-$ is $d$-pessimistic, the pessimistic Stackelberg value $V_i^-(A,B)$ satisfies
    \begin{equation}
        V_i(A,B)-d\|A_{:,i}\|_\infty\leq V_i^-(A,B)\leq V_i(A,B),
    \end{equation}
    and therefore if $E_i^-$ is $d$-pessimistic for all $i\in[n]$,
    \begin{equation}
        V(A,B)-d\|A\|_{\max}\leq V^-(A,B)\leq V(A,B).
    \end{equation}
\end{proposition}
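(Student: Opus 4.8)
The plan is to treat the two inequalities of \Cref{prop:pessimistic_value_facet} separately. The upper bound $V_i^-(A,B)\le V_i(A,B)$ is immediate: since $E_i^-$ is by definition a subset of $E_i=E_i(B)$, the feasible set of \eqref{opt:pessimistic_facet} is contained in that of \eqref{opt:exact_facet}, so maximizing the same linear objective $x\mapsto x^TA_{:,i}$ over the smaller set can only decrease its value. Taking the maximum over $i\in[n]$ then gives $V^-(A,B)\le V(A,B)$.

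For the lower bound I would argue per facet and then aggregate. Recall from \Cref{subsec:facets_and_equivalence_classes} that each facet is a polytope, hence compact, so I can fix a maximizer $x^\star\in E_i$ of \eqref{opt:exact_facet}, i.e.\ $(x^\star)^TA_{:,i}=V_i(A,B)$. Because $E_i^-\subseteq E_i$, the Hausdorff distance reduces to the one-sided quantity $d_H(E_i,E_i^-)=\sup_{x\in E_i}\inf_{z\in E_i^-}\|x-z\|_1\le d$, and since $E_i^-$ is nonempty and closed there is some $x'\in E_i^-$ with $\|x^\star-x'\|_1\le d$. Writing $(x')^TA_{:,i}=(x^\star)^TA_{:,i}+(x'-x^\star)^TA_{:,i}$ and bounding the second term by H\"older's inequality $|u^Tv|\le\|u\|_1\|v\|_\infty$ yields
\begin{equation*}
V_i^-(A,B)\ \ge\ (x')^TA_{:,i}\ \ge\ V_i(A,B)-\|x'-x^\star\|_1\,\|A_{:,i}\|_\infty\ \ge\ V_i(A,B)-d\,\|A_{:,i}\|_\infty,
\end{equation*}
which is the claimed per-facet bound.

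To obtain the aggregate statement, I would pick $i^\star\in\argmax_{i\in[n]}V_i(A,B)$ so that $V(A,B)=V_{i^\star}(A,B)$, and combine the per-facet bound at $i=i^\star$ with the elementary estimate $\|A_{:,i^\star}\|_\infty\le\|A\|_{\max}$ to get $V^-(A,B)\ge V_{i^\star}^-(A,B)\ge V_{i^\star}(A,B)-d\|A_{:,i^\star}\|_\infty\ge V(A,B)-d\|A\|_{\max}$. There is no genuine obstacle in this argument; the only points needing a little care are reading the Hausdorff distance correctly (it is one-sided here because of the inclusion $E_i^-\subseteq E_i$) and the attainment of the relevant extrema, both of which are handled by compactness of the facets.
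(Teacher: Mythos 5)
Your proof is correct and follows essentially the same route as the paper's: the upper bound via $E_i^-\subseteq E_i$, and the lower bound by locating a point of $E_i^-$ within $L_1$-distance $d$ of the exact maximizer and applying H\"older's inequality. Your added remarks on compactness, the one-sidedness of the Hausdorff distance under inclusion, and the explicit aggregation over $i$ only make precise steps the paper leaves implicit.
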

The proof is deferred to \Cref{app:proof_pessimistic_value_facet}.
 As shown in \Cref{prop:pessimistic_value_facet}, as long as the optimizer knows a complete set of $d$-pessimistic facets for each $i\in[n]$, it is able to compute an approximate Stackelberg equilibrium up to an error at the scale of $d$. We now extend this result to its ability to extract Stackelberg value against no-regret learners:
\begin{theorem}\label{thm:stackreg_given_facet_estimation}
    If $P_1$ has a set of $d_1$-pessimistic facets $E_i^-, \forall i\in [n]$ such that $\forall i\neq j$, $\inf_{x\in E_i^-,x'\in E_j} \|x-x'\|_1\geq d_2$, as long as $P_2$ is using an $f$-no-regret algorithm $\mathcal{A}_2$, $P_1$ can guarantee a Stackelberg regret of
    \begin{equation}
    \textstyle
    StackReg_1(\mathcal{A}_1,\mathcal{A}_2)=O(\frac{f(T)}{d_2}+d_1T)
    \end{equation}
    by sticking to the corresponding $x^-$ obtained from \eqref{opt:pessimistic_facet} and \eqref{eq:pessimistic_facet}. Here we keep $d_1$ and $d_2$ inside the $O(\cdot)$ notation to allow their choice to be dependent on $T$.
\end{theorem}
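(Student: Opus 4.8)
The plan is to fix an index $i^\star$ achieving the pessimistic maximum in \eqref{eq:pessimistic_facet}, let $x^- \in E_{i^\star}^-$ be the corresponding maximizer of \eqref{opt:pessimistic_facet}, and have $P_1$ play $x_t = x^-$ at every round $t \in [T]$. The Stackelberg regret then splits into two pieces:
\[
StackReg_1 = T\cdot V(A,B) - \sum_{t=1}^T (x^-)^T A y_t
= \underbrace{T\bigl(V(A,B)-V^-(A,B)\bigr)}_{\text{(I) pessimism gap}} + \underbrace{\Bigl(T\cdot V^-(A,B) - \sum_{t=1}^T (x^-)^T A y_t\Bigr)}_{\text{(II) steering error}}.
\]
Term (I) is immediately $O(d_1 T)$ by \Cref{prop:pessimistic_value_facet} (using $d_1\|A\|_{\max}$, with $\|A\|_{\max}$ absorbed into the constant). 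So the whole argument reduces to bounding term (II) by $O(f(T)/d_2)$, i.e. showing that the no-regret learner's play $y_t$ is ``mostly'' concentrated on the pure strategy $e_{i^\star}$, since $V^-(A,B) = (x^-)^T A_{:,i^\star} = (x^-)^T A e_{i^\star}$.

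The key geometric step is a \emph{strict best-response margin}: because $x^- \in E_{i^\star}^-$ and $\inf_{x\in E_{i^\star}^-,\, x'\in E_j}\|x-x'\|_1 \ge d_2$ for every $j\neq i^\star$, the point $x^-$ is at $L_1$-distance at least $d_2$ from every other facet $E_j$. I would translate this distance-to-facet bound into a payoff gap: there is a constant $\gamma = \gamma(B,d_2)$, linear in $d_2$ (something like $d_2/(2\|B\|_{\max})$ after normalizing, or $\gamma \ge c\, d_2$), such that
\[
(x^-)^T B e_{i^\star} \;\ge\; (x^-)^T B e_j + \gamma \qquad \text{for all } j \neq i^\star.
\]
The cleanest way to see this: if the gap were smaller than $\gamma$ for some $j$, one could move $x^-$ a distance $O(\gamma/\|B\|_{\max})$ toward making $e_j$ (weakly) a best response, i.e. land in $E_j$, contradicting the $d_2$ separation — so choosing $\gamma$ proportional to $d_2$ works. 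Consequently, for any round $t$, the learner's instantaneous regret of playing $y_t$ instead of $e_{i^\star}$ is at least $\gamma \cdot (1 - y_t[i^\star])$, the mass $y_t$ puts off of coordinate $i^\star$.

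Summing over $t$, the learner's trajectory regret against the fixed sequence $x_t \equiv x^-$ satisfies
\[
Reg_2(\{x_t,y_t\}_{t=1}^T) \;\ge\; \sum_{t=1}^T \bigl((x^-)^T B e_{i^\star} - (x^-)^T B y_t\bigr) \;\ge\; \gamma \sum_{t=1}^T \bigl(1 - y_t[i^\star]\bigr).
\]
Since $\mathcal{A}_2$ is $f$-no-regret, the left side is $O(f(T))$ (in expectation, after taking expectations through — the definition is stated in expectation, so term (II) should be handled in expectation as well), hence $\sum_t (1 - y_t[i^\star]) = O(f(T)/\gamma) = O(f(T)/d_2)$. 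Finally, $T\cdot V^-(A,B) - \sum_t (x^-)^T A y_t = \sum_t (x^-)^T A (e_{i^\star} - y_t) \le \|A\|_{\max}\sum_t \|e_{i^\star}-y_t\|_1 \le 2\|A\|_{\max}\sum_t(1-y_t[i^\star]) = O(f(T)/d_2)$, which bounds term (II). Adding the two terms gives $StackReg_1(\mathcal{A}_1,\mathcal{A}_2) = O(f(T)/d_2 + d_1 T)$.

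The main obstacle I anticipate is making the margin lemma (the $\gamma \gtrsim d_2$ claim) precise and clean. It requires care about norms — the facets are defined by linear inequalities $x^T B(e_i - e_j)\ge 0$, and one must argue that $L_1$-distance $d_2$ from the region $\{x : e_j \in BR(B,x)\}$ forces the corresponding payoff slack $x^T B(e_{i^\star}-e_j)$ to be bounded below, uniformly over $x\in E_{i^\star}^-$, with the dependence on $B$ (e.g. $\|B\|_{\max}$ or widths of the relevant linear forms) tracked into the constant. One has to handle the subtlety that $E_{i^\star}^-$ being at distance $\ge d_2$ from $E_j$ for \emph{each individual} $j$ gives a per-$j$ gap, and then take the minimum over the finitely many $j\neq i^\star$; the indifference/tie-breaking convention for the learner is not needed here since we have a strict margin. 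Everything else — the decomposition, Hölder-type bounds, and pushing expectations through — is routine.
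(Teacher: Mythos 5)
Your proposal follows essentially the same route as the paper's proof: the same decomposition into a pessimism gap of order $d_1 T$ (via \Cref{prop:pessimistic_value_facet}) and a steering error, the same margin claim that the $d_2$-separation from every other facet forces a payoff gap $(x^-)^T B(e_{i^-}-e_j)\geq \epsilon d_2$, the same use of the $f$-no-regret property to bound $\sum_t (1-y_{t,i^-})$ by $O(f(T)/d_2)$, and the same H\"older step to convert that into an optimizer payoff bound. Note that the paper also only asserts the margin step ("for some constant $\epsilon$") without the detailed argument you flag as the main obstacle, so your proposal matches the paper's proof in both structure and level of rigor.
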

A refined version of \Cref{thm:stackreg_given_facet_estimation} that expands the big $O(\cdot)$ notation as well as its proof can be found in \Cref{app:proof_stackreg_given_facet_estimation}.
Here we can see if we take $d_1=d_2=\sqrt{f(T)/T}$, we obtain an optimizer Stackelberg regret of $O\left(\sqrt{Tf(T)}\right)$, which is $o(T)$.

In \Cref{thm:stackreg_given_facet_estimation} we require a condition of $\inf_{x\in E_i^-,x'\in E_j} \|x-x'\|_1\geq d_2$, ensuring the pessimistic facets are disjoint (and at least $d_2$ distance away) from other facets, and once the optimizer selects a point within $E_i^-$, the learner has a unique best response $i^-$, and deviating from $i^-$ incurs a regret proportional to $d_2$ at each step.

While a proper estimation of pessimistic facets suffices to steer the learner, under some specific cases, it may be easier for the optimizer to reconstruct the learner's payoff matrices, and it suffices to restrict our attention to those matrices which could induce different best response sets, leading to the following definition of equivalent payoff matrix classes:
\begin{definition}\label{def:equivalence_class}
    For any two $m\times n$ matrices $B$ and $B'$, if there exists some $c\in\R^+,\mu\in \R^m$ such that
    \begin{equation}
        B=cB'+\mu \mathrm{1}_n^T,
    \end{equation}
    we say that $B$ and $B'$ are equivalent.
\end{definition}
It's not hard to see that if two matrices $B$ and $B'$ are equivalent, the induced best response set $BR(B,x)=BR(B',x)$ for all $x\in\Delta_m$. Indeed, we show in \Cref{app:proof_equivalence_class_preserves_noregret} that for all equivalent matrix pairs $(B_1,B_2)$, if a learner algorithm is $f$-no-regret on one, there exists a corresponding algorithm that is $f$-no-regret on the other, which indicates that the optimizer is in general not able to distinguish between these two matrices without knowing $\mathcal{A}_2$. Therefore, restricting our attention from payoff matrices to equivalence classes won't affect the optimizer's ability to steer learners.

To describe an equivalence class $\mathcal{B}$, observe that for all matrices $B$ in $\mathcal{B}$, the matrix $\mathcal{B}_i^\circ\in \R^{m\times (n-1)}$ defined by $(\mathcal{B}_i^\circ)_{:,k}=(B_{:,k}-B_{:,i})/\max_{j_1,j_2} \|B_{:,j_1}-B_{:,j_2}\|_\infty$ for some $k\neq i$ in each column will be the same for any fixed index $i\in [m]$ with the convention that $0/0=0$, so we can use $\mathcal{B}_i^\circ$ to represent the entire equivalence class. Based on this, we can also define the difference between two equivalence classes:
\begin{definition}
    For two equivalence classes $\mathcal{B}_1$ and $\mathcal{B}_2$, their difference on index $i$ is defined as $d_i(\mathcal{B}_1,\mathcal{B}_2):=\mathcal{B}_{1,i}^\circ-\mathcal{B}_{2,i}^\circ$.
\end{definition}

\subsection{Steering with Payoff Class Estimation}\label{subsec:approx_stack_w_mat_estimation}
If $P_1$ has an estimation $\mathcal{B}$ that perfectly recovers the underlying payoff matrix class of $B$, we could rewrite \eqref{opt:exact_facet} as:
\begin{equation}\label{opt:exact}
\textstyle
    \begin{aligned}
        \max_{x\in \Delta_m} \, \,  V_i(A,\mathcal{B})=x^T A_{:,i} \quad
        \text{s.t.:}  \, \, \, (\mathcal{B}_i^\circ)^Tx \leq 0_{n-1}
    \end{aligned}
\end{equation}
for each $i\in[n]$. Again, each linear program solves for the best action when $e_i$ is the best response to action $x$ played by $P_1$. Consequently \eqref{eq:Stackelberg_facet} becomes $V(A,\mathcal{B})=\max_{i}V_i(A,\mathcal{B})$ and the Stackelberg equilibrium point would be the corresponding solution.

With an estimation $\hat{\mathcal{B}}$ that has some error within margin $d$, we can define an optimistic version of \eqref{opt:exact}:\begin{equation}\label{opt:optimistic}
\textstyle
    \begin{aligned}
        \max_{x\in \Delta_m} \, \,   V^+_i(A,\hat{\mathcal{B}})=x^T A_{:,i}\quad
        \text{s.t:}  \, \, \, (\hat{\mathcal{B}}_i^\circ)^Tx \leq d \mathrm{1},
    \end{aligned}
\end{equation}
and a set of pessimistic version:
\begin{equation}\label{opt:pessimistic}
\textstyle
    \begin{aligned}
         \max_{x\in \Delta_m} \, \,   V^-_i(A,\hat{\mathcal{B}})=x^T A_{:,i}\quad
        \text{s.t:}  \, \, \, (\hat{\mathcal{B}}_i^\circ)^Tx \leq -d \mathrm{1}.
    \end{aligned}
\end{equation}

The optimistic (cf. pessimistic) problem relaxes (cf. tightens) the condition by a margin $d$. Notice that the feasible set of each optimization problem characterized by \eqref{opt:exact}, \eqref{opt:optimistic} and \eqref{opt:pessimistic} are also variants of the aforementioned concept of facets, we overload the notation: 

\begin{definition}\label{def:facets}
    Given a payoff matrix class $\mathcal{B}$, the facet $E_i$ corresponding to the $i$-th action $e_i$ of $P_2$ is defined as:
    \begin{equation}
        E_i(\mathcal{B}):=\left\{x\in\Delta_m:(\mathcal{B}_i^\circ)^Tx \leq 0_{n-1} \right\}.
    \end{equation}
    Similarly, given an estimation of payoff matrix class $\hat{\mathcal{B}}$ and an error margin $d$, the optimistic facet $E_i^+$ and the pessimistic facet $E_i^-$ corresponding to the $i$-th action $e_i$ of $P_2$ is defined respectively as:
    \begin{equation}
        E_i^+(\hat{\mathcal{B}},d):=\left\{x\in\Delta_m:(\hat{\mathcal{B}}_i^\circ)^Tx \leq d1_{n-1} \right\};
    \end{equation}
    \begin{equation}
        E_i^-(\hat{\mathcal{B}},d):=\left\{x\in\Delta_m:(\hat{\mathcal{B}}_i^\circ)^Tx \leq -d1_{n-1} \right\}.
    \end{equation}
\end{definition}

The definition of pessimistic facets in \Cref{def:facets} enforces strict dominance of the corresponding action by at least some margin $d$. We show in \Cref{prop:opt_pes_margin} that when the error margin $d$ is larger than the scale of the difference in equivalence classes, the optimistic (cf. pessimistic) problems are indeed relaxations (cf. tightenings) of \eqref{opt:exact}.
\begin{proposition}\label{prop:opt_pes_margin}
    If the error margin $d$ satisfies
    \begin{equation}\label{eq:opt_pes_cond}
        d\geq \|d_i(\mathcal{B},\hat{\mathcal{B}})\|_{\max},
    \end{equation}
    then: $E_i^-(\hat{\mathcal{B}},d)\subseteq E_i(\mathcal{B})\subseteq E_i^+(\hat{\mathcal{B}},d)$. Further, since \eqref{opt:exact}, \eqref{opt:optimistic} and \eqref{opt:pessimistic} maximize the same objective, if $E_i^-(\hat{\mathcal{B}},d)$ is non-empty then: $
        V_i^-(A,\hat{\mathcal{B}})\leq V_i(A,\mathcal{B})\leq V_i^+(A,\hat{\mathcal{B}})$.
\end{proposition}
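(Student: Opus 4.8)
The plan is to reduce everything to one elementary estimate: since every $x\in\Delta_m$ is a convex combination of the coordinate vectors, multiplying a matrix (transposed) on the right by such an $x$ cannot increase its $\|\cdot\|_{\max}$ norm. Write $\delta_i:=d_i(\mathcal{B},\hat{\mathcal{B}})=\mathcal{B}_i^\circ-\hat{\mathcal{B}}_i^\circ\in\R^{m\times(n-1)}$, so that hypothesis \eqref{eq:opt_pes_cond} says $\|\delta_i\|_{\max}\leq d$. First I would show that for every $x\in\Delta_m$ one has $-d\,1_{n-1}\leq \delta_i^Tx\leq d\,1_{n-1}$ entrywise: the $k$-th coordinate of $\delta_i^Tx$ equals $\sum_{j\in[m]}(\delta_i)_{jk}\,x_j$, which is a convex combination of the numbers $\{(\delta_i)_{jk}\}_{j\in[m]}$ because $x_j\geq 0$ and $\sum_j x_j=1$, hence lies in $[-\|\delta_i\|_{\max},\|\delta_i\|_{\max}]\subseteq[-d,d]$.

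Given this, the two containments follow directly from the identity $(\mathcal{B}_i^\circ)^Tx=(\hat{\mathcal{B}}_i^\circ)^Tx+\delta_i^Tx$. For the lower inclusion, take $x\in E_i^-(\hat{\mathcal{B}},d)$, i.e. $(\hat{\mathcal{B}}_i^\circ)^Tx\leq -d\,1_{n-1}$; adding the bound $\delta_i^Tx\leq d\,1_{n-1}$ gives $(\mathcal{B}_i^\circ)^Tx\leq 0_{n-1}$, so $x\in E_i(\mathcal{B})$. For the upper inclusion, take $x\in E_i(\mathcal{B})$, i.e. $(\mathcal{B}_i^\circ)^Tx\leq 0_{n-1}$; then $(\hat{\mathcal{B}}_i^\circ)^Tx=(\mathcal{B}_i^\circ)^Tx-\delta_i^Tx\leq 0_{n-1}+d\,1_{n-1}$, so $x\in E_i^+(\hat{\mathcal{B}},d)$. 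This establishes $E_i^-(\hat{\mathcal{B}},d)\subseteq E_i(\mathcal{B})\subseteq E_i^+(\hat{\mathcal{B}},d)$.

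The value inequalities are then just monotonicity of a maximum under set inclusion: the three programs \eqref{opt:exact}, \eqref{opt:optimistic}, \eqref{opt:pessimistic} maximize the same linear functional $x\mapsto x^TA_{:,i}$, and their feasible regions are precisely the nested sets just found, each a compact polytope. If $E_i^-(\hat{\mathcal{B}},d)$ is non-empty then, by the inclusions, so are $E_i(\mathcal{B})$ and $E_i^+(\hat{\mathcal{B}},d)$; all three maxima are therefore attained, and enlarging the feasible set can only raise the optimum, yielding $V_i^-(A,\hat{\mathcal{B}})\leq V_i(A,\mathcal{B})\leq V_i^+(A,\hat{\mathcal{B}})$. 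I do not anticipate a genuine obstacle here: the argument is entirely elementary, and the only points worth a sentence of care are the convex-combination estimate in the first step and the observation that non-emptiness propagates upward along the chain of inclusions, so that all three optimal values are well defined.
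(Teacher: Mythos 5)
Your proof is correct and follows essentially the same route as the paper's: establish the entrywise bound $-d\,1_{n-1}\leq (\mathcal{B}_i^\circ-\hat{\mathcal{B}}_i^\circ)^Tx\leq d\,1_{n-1}$ for $x\in\Delta_m$, deduce both inclusions from the decomposition $(\mathcal{B}_i^\circ)^Tx=(\hat{\mathcal{B}}_i^\circ)^Tx+\delta_i^Tx$, and obtain the value chain by monotonicity of the maximum under set inclusion. If anything, your convex-combination justification of the key bound (using $\sum_j x_j=1$ rather than merely $x_j\in[0,1]$) is slightly more careful than the paper's one-line remark, and your explicit note that non-emptiness propagates up the chain of inclusions is a small but welcome addition.
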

We provide the following example:
\begin{example}
    Following \Cref{example:1}, consider
\begin{equation}
    \hat{B} = \begin{bmatrix}
        1.05 & 0.05 & 0 \\
-0.05 & 1.05 & 0 \\
0.05 & 0 & 0.95
    \end{bmatrix},
\end{equation}
    for facet $E_1$ and the corresponding $\mathcal{B},\hat{\mathcal{B}}$, we have that $\|d_1(\mathcal{B},\hat{\mathcal{B}})\|_{\max}=0.1$. we plot $E_1^-(\hat{B},d)$ and the boundaries of $E_1(B),E_1(\hat{B})$ in \Cref{fig:illustration2}.
    \begin{figure}[htb]
        \centering
        \includegraphics[width=0.5\linewidth]{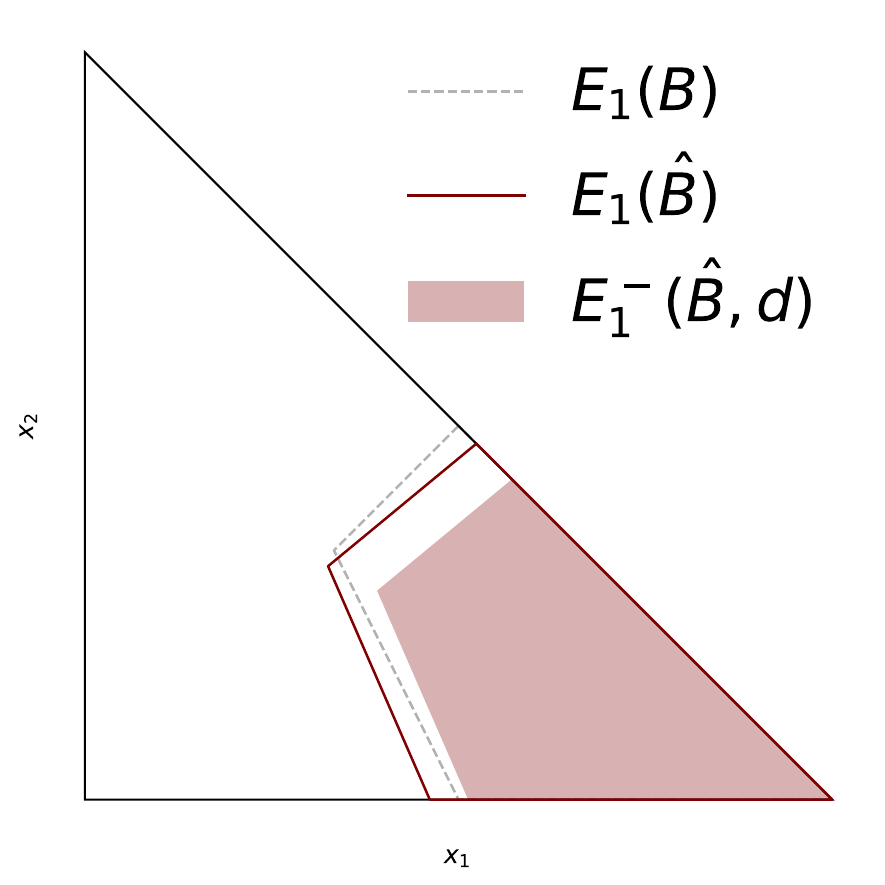}
        \caption{Comparison among $E_1(B),E_1(\hat{B})$ and $E_1^-(\hat{B},d)$.}
        \label{fig:illustration2}
    \end{figure}
    
    We can see that although $E_1(\hat{B})\nsubseteq E_1(B)$, we indeed have $E_1^-(\hat{B},d)\subseteq E_1(B)$.
\end{example}

\Cref{prop:opt_pes_margin} suggests that given an estimation $\hat{\mathcal{B}}$ and a proper margin $d$, if $P_1$ plays according to the solution to \eqref{opt:pessimistic} (assuming $E_i^-(\hat{\mathcal{B}},d)$ is not empty), the corresponding learner best response in the underlying game would be $e_i$. However, a pessimistic facet may not be feasible when the original facet is feasible. If $E_i^-(\hat{\mathcal{B}},d)$ is empty, we cannot deduce that $E_i(\mathcal{B})$ is also empty. Instead, to certify the emptiness of $E_i(\mathcal{B})$, we need $E_i^+(\hat{\mathcal{B}},d)$ to be empty as well. We use the following version of the definition given by \citet{gan2023robust} to capture the emptiness of $E_i^+$ and $E_i^-$.
\begin{definition}[\citep{gan2023robust}, Definition 3]\label{def:response_optimality_gap}
    Given a payoff matrix class $\mathcal{B}$ of $P_2$, define the inducibility gap $C_i$ with respect to the $i$-th action $e_i$ of $P_2$ to be:
    \begin{equation}
    \textstyle
        C_i:=\min_{x\in\Delta_m} \max_{j} x^T (\mathcal{B}^\circ_i)_{:,j}.
    \end{equation}
\end{definition}

We can see from \Cref{def:response_optimality_gap} that $C_i\leq 0$ if and only if \eqref{opt:exact} is feasible. We show in \Cref{app:proof_relaxed_emptyset_condition} that if $C_i>0$, \eqref{opt:exact} is infeasible and there exists a gap where it can be relaxed while still being infeasible.
If $C_i=0$, however, \Cref{def:response_optimality_gap} indicates that $\forall x,\max_j x^T\mathcal{B}_i^\circ(e_j-e_i)\geq 0$, and $\exists x\in \Delta_m, j\in [n]$ such that $x^TB_i(e_j-e_i)=0$. Under this case, the facet $E_i(\mathcal{B})$ has zero volume and as long as the estimation $\hat{\mathcal{B}}$ is not precise, the facet $E_i^-(\hat{\mathcal{B}},d)$ could always be empty. Also, even if the optimizer knows the real underlying $\mathcal{B}$, since $e_i$ is weakly dominated, the learner is not steerable if $e_i$ happens to be the Stackelberg equilibrium since the learner is indifferent between $e_i$ and $e_j$. To avoid this special case (which occurs with probability 0 for uniformly randomly generated $B$ matrices \citep{von2010leadership}), we make the following assumption, as is standard (see e.g., \citep{gan2023robust,deng2019strategizingnoregretlearners,brown2023learninggamesgoodlearners}):\begin{assumption}\label{as:Ci_neq_0}
    The learner payoff matrix class $\mathcal{B}$ satisfies $C_i\neq 0$ for all $i\in[n]$.
\end{assumption}
\begin{remark}
    Our construction of game instances when proving \Cref{thm:impossibility_wo_update_rule} both satisfy this assumption. That is saying, if the optimizer knows $B_1$ and $B_2$, it is able to steer the learner to Stackelberg equilibrium, indicating that the impossibility lies in 'learning' instead of 'steering'.
\end{remark}
With \Cref{as:Ci_neq_0} we are ready to show that if the estimation $\hat{\mathcal{B}}$ is accurate enough, all the facets are identifiable:

\begin{proposition}\label{thm:facet_identification}
    Given an estimation $\hat{\mathcal{B}}$ satisfying \eqref{eq:opt_pes_cond}:
    \begin{enumerate}
        \item $E_i(\mathcal{B})=\emptyset$ and $d \leq \frac{C_i}{4}$, then $E_i^+(\hat{\mathcal{B}},d)=\emptyset$;
        \item $E_i(\mathcal{B})\neq\emptyset$ and $d\leq -\frac{C_i}{2}$, then $E_i^-(\hat{\mathcal{B}},d)\neq\emptyset$.
    \end{enumerate}
\end{proposition}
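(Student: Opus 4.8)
The plan is to work directly with the inducibility gap $C_i = \min_{x\in\Delta_m}\max_j x^T(\mathcal{B}_i^\circ)_{:,j}$ and its perturbed analogues, and to use \Cref{prop:opt_pes_margin} together with the triangle inequality $\|d_i(\mathcal{B},\hat{\mathcal{B}})\|_{\max}\le d$ throughout. The key observation is that $E_i^+(\hat{\mathcal{B}},d)=\emptyset$ is equivalent to the statement that for every $x\in\Delta_m$ there is some coordinate $j$ with $x^T(\hat{\mathcal{B}}_i^\circ)_{:,j} > d$, i.e. to $\widehat{C}_i := \min_{x\in\Delta_m}\max_j x^T(\hat{\mathcal{B}}_i^\circ)_{:,j} > d$; symmetrically $E_i^-(\hat{\mathcal{B}},d)\neq\emptyset$ is equivalent to $\widehat{C}_i \le -d$. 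So the whole proposition reduces to controlling how far $\widehat{C}_i$ can move from $C_i$ when $\hat{\mathcal{B}}_i^\circ$ differs from $\mathcal{B}_i^\circ$ by at most $d$ entrywise.

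First I would establish the Lipschitz-type bound $|\widehat{C}_i - C_i|\le d$. This follows because for any fixed $x\in\Delta_m$ and any coordinate $j$, $|x^T(\hat{\mathcal{B}}_i^\circ)_{:,j} - x^T(\mathcal{B}_i^\circ)_{:,j}| \le \|x\|_1 \cdot \|(\hat{\mathcal{B}}_i^\circ)_{:,j} - (\mathcal{B}_i^\circ)_{:,j}\|_\infty \le 1\cdot d$; taking $\max_j$ and then $\min_x$ preserves the inequality in both directions, yielding $C_i - d \le \widehat{C}_i \le C_i + d$. (Alternatively, one can avoid defining $\widehat{C}_i$ and argue containment of facets directly, but routing through the scalar $\widehat{C}_i$ keeps the bookkeeping cleanest.)

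Now the two cases are immediate. For case 1: $E_i(\mathcal{B})=\emptyset$ means $C_i > 0$, and since under \Cref{as:Ci_neq_0} we must in fact have $C_i>0$ strictly — actually emptiness of $E_i(\mathcal{B})$ together with the characterization $C_i\le 0 \iff$ feasibility already gives $C_i>0$ — the hypothesis $d\le C_i/4$ gives $\widehat{C}_i \ge C_i - d \ge C_i - C_i/4 = 3C_i/4 > C_i/4 \ge d$, hence $E_i^+(\hat{\mathcal{B}},d)=\emptyset$. (Here $C_i/4$ versus $C_i/2$ just buys slack; only $\widehat{C}_i > d$ is needed, which $d < C_i/2$ would already supply, so the constant is non-tight on purpose.) For case 2: $E_i(\mathcal{B})\neq\emptyset$ gives $C_i \le 0$, and with \Cref{as:Ci_neq_0} we have $C_i < 0$; then $d \le -C_i/2$ gives $\widehat{C}_i \le C_i + d \le C_i - C_i/2 = C_i/2 \le -d$ (using $-C_i/2 \ge d$, i.e. $C_i/2 \le -d$), so $E_i^-(\hat{\mathcal{B}},d)\neq\emptyset$. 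I expect no serious obstacle here; the one place to be careful is the reduction from "facet empty/non-empty" to the sign and magnitude of $C_i$ — in particular confirming that when $E_i(\mathcal{B})\neq\emptyset$ \Cref{as:Ci_neq_0} forces $C_i<0$ (not $C_i=0$), which is exactly what that assumption is there to guarantee, and likewise that emptiness forces $C_i>0$ rather than $C_i=0$. Making the $\min$–$\max$ duality/compactness argument for the "$\widehat{C}_i\le -d \iff E_i^-\neq\emptyset$" equivalence fully rigorous (the $\min$ is attained since $\Delta_m$ is compact and the objective continuous) is the only mildly technical point, and it is standard.
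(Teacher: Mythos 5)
Your proof is correct, and it reorganizes the argument around a device the paper does not use: the perturbed inducibility gap $\widehat{C}_i := \min_{x\in\Delta_m}\max_j x^T(\hat{\mathcal{B}}_i^\circ)_{:,j}$, together with the Lipschitz bound $|\widehat{C}_i - C_i|\le \|d_i(\mathcal{B},\hat{\mathcal{B}})\|_{\max}\le d$ and the (compactness-based) equivalences $E_i^+(\hat{\mathcal{B}},d)=\emptyset \iff \widehat{C}_i>d$ and $E_i^-(\hat{\mathcal{B}},d)\neq\emptyset \iff \widehat{C}_i\le -d$. The paper instead treats the two parts asymmetrically: part 1 is proved by set-containment manipulations resting on an auxiliary result (\Cref{prop:relaxed_emptyset_condition}, showing $C_i>0$ implies $\{x:(\mathcal{B}_i^\circ)^Tx\le \tfrac{C_i}{2}1_{n-1}\}=\emptyset$), and part 2 by exhibiting the explicit witness $x_0=\argmin_x\max_j x^T(\mathcal{B}_i^\circ)_{:,j}$ and checking it lands in $E_i^-(\hat{\mathcal{B}},d)$. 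Both routes ultimately rest on the same entrywise perturbation bound \eqref{eq:estimation_difference_bound_by_maxnorm}, so the mathematical content is identical; what your version buys is a single unified scalar argument that handles both directions symmetrically, makes transparent that the constant $\tfrac{C_i}{4}$ in part 1 has slack (anything below $\tfrac{C_i}{2}$ works), and dispenses with the separate relaxed-emptiness proposition. What the paper's version buys is that part 2 produces a concrete feasible point without invoking attainment of the min--max, and part 1 isolates a reusable statement about robust infeasibility. The only point you flagged as needing care---that the min over the compact simplex is attained so the emptiness equivalences are genuine biconditionals---is indeed the one step absent from the paper's formulation, and your treatment of it is adequate.
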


\Cref{thm:facet_identification} shows that under \Cref{as:Ci_neq_0}, when the estimation error is small enough, either both $E_i^+$ and $E_i^-$ are empty, or none of them is empty. Therefore, given $\hat{\mathcal{B}}$ that is accurate enough, the optimizer will finally be able to decide whether $E_i(\mathcal{B})$ is empty or not.

Since \Cref{prop:opt_pes_margin} suggests $V_i^-(A,\hat{\mathcal{B}})\leq V_i(A,\mathcal{B})\leq V_i^+(A,\hat{\mathcal{B}})$,
if the optimizer chooses the solution to \eqref{opt:pessimistic}, its suboptimality can be bounded by $V_i^+(A,\hat{\mathcal{B}})-V_i^-(A,\hat{\mathcal{B}})$. To bound this difference term, we make the following definition to capture the sensitivity of this problem:
\begin{definition}\label{def:matrix_sensitivity}
    Given a matrix $\mathcal{M}\in \R^{(n-1)\times m}$, define the sensitivity constant $Sen(\mathcal{M})$ as:
    \begin{equation}
    \textstyle
        Sen(\mathcal{M}):=\min_{\epsilon\neq 0}\max_{\mathcal{P},\mathcal{Q}}\left\|\begin{bmatrix}
        \mathcal{M}\\
            \epsilon1_m^T
    \end{bmatrix}_{\mathcal{P},\mathcal{Q}}^{-1} \right\|_\infty,
    \end{equation}
    where the maximization is over all $\mathcal{P}$ and $\mathcal{Q}$ that satisfies
    \begin{equation*}
    \textstyle
        \mathcal{P}\subseteq [n],\mathcal{Q}\subseteq [m], |\mathcal{P}|=|\mathcal{Q}|, \begin{bmatrix}
        \mathcal{M}\\
            \epsilon 1_m^T
    \end{bmatrix}_{\mathcal{P},\mathcal{Q}} \text{ invertible.}
    \end{equation*}
\end{definition}

In \Cref{def:matrix_sensitivity}, we take the maximum over all invertible square submatrices, if we take $\mathcal{M}=(\hat{\mathcal{B}}_i^\circ)^T$, we can interpret $\mathcal{P}$ as choosing active constraints within the columns of $\hat{\mathcal{B}}_i^\circ$ and $\mathcal{Q}$ can be interpreted as choosing nonzero entries of $x$.
Based on \Cref{def:matrix_sensitivity}, we obtain \Cref{thm:sensitivity_analysis}:
\begin{lemma}\label{thm:sensitivity_analysis}
Suppose both the optimistic and pessimistic problems are feasible, then the difference between the optimal solution $V_i^+(A,\hat{\mathcal{B}})$ to \eqref{opt:optimistic} and the optimal solution $V_i^-(A,\hat{\mathcal{B}})$ to \eqref{opt:pessimistic} can be upper bounded by:
    \begin{equation*}
        V_i^+(A,\hat{\mathcal{B}})-V_i^-(A,\hat{\mathcal{B}})\leq 4d \|A_{:,i}\|_\infty Sen((\mathcal{B}_i^\circ)^T),
    \end{equation*}
    as long as $\|d_i(\mathcal{B},\hat{\mathcal{B}})\|_{\infty}\leq d \leq \frac{1}{2Sen((\mathcal{B}_i^\circ)^T)}$.
\end{lemma}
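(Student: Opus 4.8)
The plan is to show that both the optimistic optimal point $x^+$ (solving \eqref{opt:optimistic}) and the pessimistic optimal point $x^-$ (solving \eqref{opt:pessimistic}) can be taken to be vertices of their respective feasible polytopes, and that each such vertex is the unique solution of a square linear system obtained by selecting a set of active constraints. Concretely, a vertex of $E_i^+(\hat{\mathcal{B}},d)$ is pinned down by requiring $x$ to lie in $\Delta_m$ (the constraint $1_m^Tx=1$ together with some coordinates $x_j=0$, i.e.\ $x$ supported on an index set $\mathcal{Q}\subseteq[m]$) and by setting some subset $\mathcal{P}$ of the rows of $(\hat{\mathcal{B}}_i^\circ)^T$ to be tight at level $d$; the nonzero part of $x^+$ then equals $M_{\mathcal{P},\mathcal{Q}}^{-1}$ applied to the corresponding right-hand side, where $M = \bigl[\begin{smallmatrix}(\hat{\mathcal{B}}_i^\circ)^T\\ \epsilon 1_m^T\end{smallmatrix}\bigr]$ for the appropriate normalization $\epsilon$. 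First I would make this vertex characterization precise, handling the fact that the simplex constraint contributes one row $1_m^T$ and the active-set submatrix must be invertible — exactly the situation that $Sen((\mathcal{B}_i^\circ)^T)$ is defined to control.

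Next, the key comparison step: I want to perturb $x^-$ into a feasible point of the optimistic problem, or vice versa, while tracking how much the objective $x^TA_{:,i}$ can change. The cleanest route is to move from $x^+$ toward feasibility for the pessimistic problem. Starting from the vertex $x^+$ with active set $(\mathcal{P},\mathcal{Q})$, the pessimistic problem has the same active constraints but with right-hand side shifted from $d1$ to $-d1$ on the $\mathcal{P}$-rows (a change of magnitude $2d$ in each coordinate) and unchanged ($=1$) on the simplex row. The corresponding vertex of the pessimistic polytope (which exists and is feasible by the hypothesis that the pessimistic problem is feasible, together with \Cref{prop:opt_pes_margin}-type reasoning) is $M_{\mathcal{P},\mathcal{Q}}^{-1}$ applied to the shifted right-hand side, so it differs from $x^+$ by $M_{\mathcal{P},\mathcal{Q}}^{-1}$ applied to a vector whose entries are $0$ (simplex row) or at most $2d$ in absolute value (the $\mathcal{P}$-rows). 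Hence the $\ell_1$-distance — or rather the change when paired against $A_{:,i}$ — is bounded by $2d\cdot m \cdot \|A_{:,i}\|_\infty \cdot \|M_{\mathcal{P},\mathcal{Q}}^{-1}\|_\infty$ up to the right bookkeeping of norms; the factor $4$ in the statement absorbs the difference between bounding $|x^{+T}A_{:,i}-x^{-T}A_{:,i}|$ directly and going through an intermediate feasible point, and the condition $d\le \tfrac{1}{2Sen}$ is what guarantees the perturbed point still lies in the simplex (all coordinates stay nonnegative and the support does not shrink, so the same active set remains valid). I would also need to replace $Sen((\hat{\mathcal{B}}_i^\circ)^T)$ by $Sen((\mathcal{B}_i^\circ)^T)$ — these are close since $\|d_i(\mathcal{B},\hat{\mathcal{B}})\|_\infty\le d$ is small — or, more likely, argue the sensitivity bound using the true class $\mathcal{B}_i^\circ$ throughout by first relating the optimistic/pessimistic feasible regions of $\hat{\mathcal{B}}$ to facets of $\mathcal{B}$ via \Cref{prop:opt_pes_margin}.

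The step I expect to be the main obstacle is controlling the \emph{change of active set}: when we shift the right-hand side of the optimistic vertex toward the pessimistic level, the resulting point might violate a previously-inactive constraint (some $x_j$ could go negative, or a non-tight $\hat{\mathcal{B}}$-row could become violated), so the pessimistic optimum may sit at a vertex with a genuinely different active set $(\mathcal{P}',\mathcal{Q}')$, and then $x^+$ and $x^-$ are governed by two different inverse submatrices rather than one. The way I would handle this is the reason $Sen$ is a $\max$ over \emph{all} admissible $(\mathcal{P},\mathcal{Q})$ pairs: I would argue that along the line segment connecting $x^+$ to a feasible pessimistic point one passes through at most finitely many vertices, at each of which the local perturbation is controlled by the inverse of \emph{that} active submatrix, all of which are bounded by $Sen$; then a telescoping/triangle-inequality argument over the segment — or, more simply, directly bounding $V_i^+ - V_i^- = (x^+)^TA_{:,i} - (x^-)^TA_{:,i}$ by exhibiting one feasible pessimistic point close to $x^+$ and one feasible optimistic point close to $x^-$ and using optimality of each — yields the stated $4d\|A_{:,i}\|_\infty Sen((\mathcal{B}_i^\circ)^T)$ bound. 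The condition $d\le \tfrac{1}{2Sen}$ enters precisely to ensure these nearby feasible points actually exist (the perturbation is small enough not to exit $\Delta_m$).
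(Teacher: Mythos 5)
You correctly identify the active-set/vertex structure, the role of $Sen(\cdot)$ as a uniform bound over all invertible active submatrices, and even the main obstacle (the optimistic and pessimistic optima may sit at vertices with different active sets), but your proposal does not resolve that obstacle, and this is a genuine gap. Your first workaround (telescoping along the segment from $x^+$ to a pessimistic-feasible point, "passing through vertices") is not a workable argument: a segment does not pass through vertices, and there is no per-vertex decomposition of the objective change to telescope. Your second workaround (exhibit a pessimistic-feasible point close to $x^+$ and invoke optimality) is the right target inequality, but the existence of such a point at distance $O(d\cdot Sen)$ is precisely what needs proving; the natural candidate (solve the shifted system at the same active set $(\mathcal{P},\mathcal{Q})$) can violate a previously inactive constraint, which is the very problem you flagged. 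The paper closes this gap with LP duality (\Cref{lem:linopt_perturbation_bound}): the dual feasible region $\{y : A^Ty = c,\ y\ge 0\}$ is independent of the right-hand side, both optimal values are minima of $b^Ty$ over the \emph{same} finite set of dual basic feasible solutions, and hence $V_i^+ - V_i^- \le \max_{\mathcal{I}} c^T A_{\mathcal{I}}^{-1}\delta_{\mathcal{I}}$ uniformly over all active sets --- exactly the quantity that $Sen$ controls via \Cref{lem:inverse_upper_bound}. Some version of this duality (or parametric-LP) step is the missing idea; without it the active-set-change issue is fatal to the purely primal perturbation argument.

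A secondary but substantive misattribution: you ascribe the factor $4$ to slack in the intermediate-point comparison and the condition $d\le \tfrac{1}{2Sen((\mathcal{B}_i^\circ)^T)}$ to keeping the perturbed point inside $\Delta_m$. In the paper, the duality step yields $V_i^+ - V_i^- \le 2d\|A_{:,i}\|_\infty Sen((\hat{\mathcal{B}}_i^\circ)^T)$ (the $2d$ coming from the shift of the right-hand side from $d\mathrm{1}$ to $-d\mathrm{1}$), and the remaining factor of $2$ comes from the Neumann-series bound of \Cref{lem:inverse_matrix_perturbation_bound}, which gives $Sen((\hat{\mathcal{B}}_i^\circ)^T)\le Sen((\mathcal{B}_i^\circ)^T)/\bigl(1-d\,Sen((\mathcal{B}_i^\circ)^T)\bigr)\le 2\,Sen((\mathcal{B}_i^\circ)^T)$; the hypothesis $d\le \tfrac{1}{2Sen((\mathcal{B}_i^\circ)^T)}$ exists precisely to make this last denominator at least $\tfrac12$. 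You do note that $Sen((\hat{\mathcal{B}}_i^\circ)^T)$ must be replaced by $Sen((\mathcal{B}_i^\circ)^T)$, so this part is repairable, but as written both the constant and the role of the hypothesis are accounted for incorrectly.
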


\Cref{thm:sensitivity_analysis} suggests that once $P_1$ has a small estimation error of the payoff matrix class $\mathcal{B}$ of $P_2$, the value and the corresponding action obtained by solving \eqref{opt:pessimistic} will guarantee a bounded suboptimality proportional to the error scale. Based on this, if $P_1$ has an estimation $\hat{B}_t$ that is accurate enough, $P_1$ can commit to a fixed strategy given by the solution to the pessimistic optimization problem and could obtain a sublinear Stackelberg regret in the long run as $T\rightarrow \infty$. We state this result as follows:

\begin{theorem}\label{thm:stackreg_given_estimation}
    Under \Cref{as:Ci_neq_0}, if $P_1$ has an estimator $\hat{\mathcal{B}}$ of $\mathcal{B}$ such that $\|d_i(\mathcal{B},\hat{\mathcal{B}})\|_\infty \leq\epsilon=O(g(T)/T), \forall i$ for some $g(T)=o(T)$, then if $P_2$ is using a $f$-no-regret algorithm $\mathcal{A}_2$, there exists an algorithm $\mathcal{A}_1$ satisfying:
    \begin{equation}
    StackReg_1(\mathcal{A}_1,\mathcal{A}_2)=O(\sqrt{Tf(T)}+g(T)).
    \end{equation}
\end{theorem}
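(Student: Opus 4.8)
The plan is to establish the bound via a pure-commitment (degenerate explore-then-commit) strategy: using the supplied estimator $\hat{\mathcal{B}}$ and an error margin $d:=\max\{2\epsilon,\,c\sqrt{f(T)/T}\}$ for a small absolute constant $c$, the optimizer solves the pessimistic programs \eqref{opt:pessimistic} for every $i$, lets $i^-\in\argmax_i V_i^-(A,\hat{\mathcal{B}})$, and plays the corresponding maximizer $x^-$ at every round $t=1,\dots,T$. (If $\hat{\mathcal{B}}$ must itself be produced by interaction, a short exploration phase of $O(f(T))$ rounds is prepended; the accounting below absorbs it.) Writing, for each round, $V(A,B)-(x^-)^TAy_t=\bigl(V(A,B)-V^-(A,\hat{\mathcal{B}})\bigr)+(x^-)^TA(e_{i^-}-y_t)$ splits $StackReg_1$ into a \emph{pessimism} term and a \emph{learner-deviation} term, which I bound separately.

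First I would control the pessimism term and check that the truly optimal facet is not lost. Since $\|d_i(\mathcal{B},\hat{\mathcal{B}})\|_\infty\le\epsilon\le d$, \Cref{prop:opt_pes_margin} gives $E_i^-(\hat{\mathcal{B}},d)\subseteq E_i(\mathcal{B})$ and $V_i^-(A,\hat{\mathcal{B}})\le V_i(A,\mathcal{B})\le V_i^+(A,\hat{\mathcal{B}})$ for all $i$. Let $i^*$ attain $V(A,B)=V_{i^*}(A,\mathcal{B})$, so $E_{i^*}(\mathcal{B})\neq\emptyset$ and hence $C_{i^*}<0$ under \Cref{as:Ci_neq_0}; for $T$ large enough that $d\le\min\{-C_{i^*}/2,\,1/(2Sen((\mathcal{B}_{i^*}^\circ)^T))\}$, \Cref{thm:facet_identification}(2) makes $E_{i^*}^-(\hat{\mathcal{B}},d)$ nonempty and \Cref{thm:sensitivity_analysis} gives $V_{i^*}^+(A,\hat{\mathcal{B}})-V_{i^*}^-(A,\hat{\mathcal{B}})\le 4d\|A_{:,i^*}\|_\infty Sen((\mathcal{B}_{i^*}^\circ)^T)$. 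Chaining these, $V(A,B)\le V_{i^*}^+(A,\hat{\mathcal{B}})\le V_{i^*}^-(A,\hat{\mathcal{B}})+O(d)\le V^-(A,\hat{\mathcal{B}})+O(d)$, so the pessimism term summed over $T$ rounds is $O(dT)$. Moreover, since $x^-\in E_{i^-}^-(\hat{\mathcal{B}},d)$, every coordinate of $(\hat{\mathcal{B}}_{i^-}^\circ)^Tx^-$ is $\le-d$; replacing $\hat{\mathcal{B}}_{i^-}^\circ$ by $\mathcal{B}_{i^-}^\circ$ perturbs each coordinate by at most $\|x^-\|_1\epsilon=\epsilon$, and undoing the normalization by $s:=\max_{j_1,j_2}\|B_{:,j_1}-B_{:,j_2}\|_\infty>0$ shows $e_{i^-}$ is a strict best response to $x^-$ under the true $B$, with per-round dominance margin $\gamma:=(x^-)^TBe_{i^-}-\max_{k\ne i^-}(x^-)^TBe_k\ \ge\ (d-\epsilon)s=\Omega(d)$.

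For the learner-deviation term I would use $(x^-)^TA(e_{i^-}-y_t)\le 2\|A\|_{\max}(1-(y_t)_{i^-})$ together with $(x^-)^TB(e_{i^-}-y_t)\ge\gamma(1-(y_t)_{i^-})$. Because $\mathcal{A}_1$ plays the constant sequence $x_t\equiv x^-$ and $e_{i^-}\in BR(B,x^-)$, the $f$-no-regret guarantee gives $\sum_{t=1}^T(x^-)^TB(e_{i^-}-y_t)=Reg_2(\{x_t,y_t\}_{t=1}^T)\le Cf(T)$, hence $\sum_t(1-(y_t)_{i^-})\le Cf(T)/\gamma$ and the learner-deviation term is $O(f(T)/d)$. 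Combining, $StackReg_1(\mathcal{A}_1,\mathcal{A}_2)=O\!\bigl(dT+f(T)/d\bigr)$, and plugging in $d=\max\{2\epsilon,c\sqrt{f(T)/T}\}$ with $\epsilon=O(g(T)/T)$ gives $dT=O(\sqrt{Tf(T)}+g(T))$ and $f(T)/d\le f(T)/\sqrt{f(T)/T}=\sqrt{Tf(T)}$, i.e.\ the claimed $O(\sqrt{Tf(T)}+g(T))$. This is the same $dT$ versus $f(T)/d$ trade-off as in \Cref{thm:stackreg_given_facet_estimation} (with $d_1,d_2=\Theta(d)$), the only new feature being that the accuracy $\epsilon$ imposes a floor on $d$ that contributes the additive $g(T)$.

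The step I expect to be the bottleneck is the geometry/perturbation work in the second paragraph: turning the entrywise estimate $\|d_i(\mathcal{B},\hat{\mathcal{B}})\|_\infty\le\epsilon$ into both a pessimism gap $V(A,B)-V^-(A,\hat{\mathcal{B}})=O(d)$ and a dominance margin $\gamma=\Omega(d)$ at the committed point, with all hidden constants finite and $T$-independent. The value gap amounts to stitching \Cref{prop:opt_pes_margin}, \Cref{thm:facet_identification}, and \Cref{thm:sensitivity_analysis} together; the delicate points are (i) verifying that feasibility survives at the true optimal index $i^*$ — which is exactly where \Cref{as:Ci_neq_0} and the ``$T$ large'' regime are needed — and (ii) carefully carrying the column-difference normalization inside $\mathcal{B}_i^\circ$ through the definition of $E_i^-$ so that $\gamma$ stays proportional to $d$ rather than degrading. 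A minor secondary point, relevant only when $\hat{\mathcal{B}}$ comes from a genuine preliminary exploration phase of length $T_0$ rather than as an oracle, is that the learner can accumulate an extra $O(T_0)$ of deviation budget during that phase, so $Cf(T)$ must be replaced by $Cf(T)+O(T_0)$; taking $T_0=O(f(T))$ keeps both this and its direct cost within $O(\sqrt{Tf(T)})$.
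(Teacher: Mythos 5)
Your proposal is correct and follows essentially the same route as the paper's proof: commit to the maximizer of the pessimistic program with margin $\Theta(\epsilon+\sqrt{f(T)/T})$, bound the pessimism gap by chaining \Cref{prop:opt_pes_margin}, \Cref{thm:facet_identification}, and \Cref{thm:sensitivity_analysis} at the true optimal index $i^*$, and convert the $f$-no-regret guarantee into a bound on the learner's deviation mass via the $\Omega(d)$ strict-dominance margin at $x^-$. The paper's choice of margin $\epsilon+\tilde f(T)/T$ with $\tilde f(T)=\sqrt{Tf(T)}$ is the same as your $d=\max\{2\epsilon,c\sqrt{f(T)/T}\}$ up to constants, and the resulting $dT$ versus $f(T)/d$ trade-off is identical.
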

Similarly, the refined version of \Cref{thm:stackreg_given_estimation} with explicit Stackelberg regret bound and its proof can be found in \Cref{app:proof_stackreg_given_estimation}.

\subsection{Lower Bound on Stackelberg Regret}\label{subsec:info_theoretic_lower_bound}
To illustrate the tightness of our result, we provide the lower bound on the Stackelberg regret of the optimizer in \Cref{app:proof_info_theoretic_lower_bound}, which shows that our rate $\sqrt{Tf(T)}$ is essentially optimal against $f$-no-regret learners.

\section{Learning to Steer Classes of Learners}\label{sec:Stack_with_update_rule}
We have shown in \Cref{sec:facets_approx_SE} that if $P_1$ can recover the set of pessimistic facets or approximate payoff matrix class, it would be able to steer the learner to a Stackelberg equilibrium. Therefore it is natural for the optimizer to adopt an explore-then-commit style algorithm that first learns either the facets or the approximate payoff matrix, and then commits to a pessimistic Stackelberg equilibrium. In this section we show in two concrete examples that when some information about the update rule of the learner's algorithm is known, $P_2$ leaks information about its payoff which allows $P_1$ to learn the desired payoff structure and thus steer the learner to Stackelberg equilibrium.

We provide numerical experiments to illustrate the effectiveness of the algorithms in \Cref{app:numerical_experiments}.

\subsection{Learning to Steer Ascending Learners with $n=2$}\label{subsec:PAAL}
In this section we assume that the learner is using an \textit{ascent} algorithm, where the learner's action greedily improves its payoff based on the last round's optimizer action:
\begin{definition}
    A learner algorithm $\mathcal{A}_2$ is an ascent algorithm if $x_t^TBy_t-x_t^TBy_{t+1}\leq 0$ for all $t$, and $x_t^TBy_t-x_t^TBy_{t+1}= 0$ if and only if $y_t\in BR(B,x_t)$.
\end{definition}

For simplicity we restrict our attention to the case where $m=n=2$, where we can see that the direction that $y_{t+1}$ moves from $y_t$ directly reflects the best response to $x_t$. Based on this observation, we propose \Cref{alg:PAAL} as shown in \Cref{app:proof_PAAL_stackreg_bound}. The idea behind the algorithm is that the optimizer first performs a binary serach across its simplex $[0,1]$ and then apply pessimism to get an estimated $E_1^-$ and $E_2^-$ before finally committing to the solution obtained through \eqref{opt:pessimistic_facet} and \eqref{eq:pessimistic_facet}. We show in the following theorem that the algorithm obtains a sublinear Stackelberg regret:

\begin{theorem}\label{thm:PAAL_stackreg_bound}
    Suppose $m=n=2$ and the payoff matrix $B$ does not contain identical columns. For some chosen parameter $d$, if either one facet is empty, or each facet has diameter at least $d$ and $P_2$ uses an ascent algorithm $\mathcal{A}_2$ that is $f$-no-regret, \Cref{alg:PAAL} with accuracy margin $d$ achieves a Stackelberg regret of at most $O(\frac{f(T)}{d}+dT-\log d)$ as long as $d=\Omega(\exp(-f(T)))$.
\end{theorem}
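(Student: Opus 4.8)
The plan is to exploit the one–dimensional geometry of the $m=n=2$ case to reduce facet recovery to estimating a single scalar, and then invoke the machinery of \Cref{sec:facets_approx_SE}. \emph{One-dimensional reduction.} Parameterize $P_1$'s simplex by $p\in[0,1]$ via $x=(p,1-p)$ and set $s(p):=x^T(B_{:,1}-B_{:,2})$, an affine function of $p$ that is not identically zero since $B$ has distinct columns. Then $e_1\in BR(B,x)\iff s(p)\ge 0$, so $E_1=\{p:s(p)\ge 0\}$ and $E_2=\{p:s(p)\le 0\}$. Either $s$ has no root in $[0,1]$, in which case one facet is empty, or it has a unique root $p^*\in[0,1]$ and, WLOG, $E_1=[0,p^*]$ and $E_2=[p^*,1]$, whose $L_1$-diameters are $2p^*$ and $2(1-p^*)$; the hypothesis that each facet has diameter at least $d$ then reads $p^*\in[d/2,\,1-d/2]$. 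This reduces the geometric problem to locating the number $p^*$.

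\emph{Exploration by binary search.} The crux is that a single round leaks the sign of $s$: when $P_1$ plays $x=(p,1-p)$ at round $t$ and we write $y_t=(\eta_t,1-\eta_t)$, the learner's payoff is $x^TB_{:,2}+\eta_t\,s(p)$, so the ascent property forces $\operatorname{sign}(\eta_{t+1}-\eta_t)=\operatorname{sign}(s(p))$ whenever $y_t\notin BR(B,x)$, while $y_{t+1}=y_t$ exactly when $y_t$ is already the best-response vertex (which still reveals which facet contains $p$) or $p=p^*$ exactly (a measure-zero event, handled trivially). Hence every query correctly reports whether $p<p^*$ or $p>p^*$, and binary search run for $\lceil\log_2(1/\rho)\rceil=O(\log(1/d))$ rounds with $\rho=\Theta(d)$ returns an interval $[l,r]\ni p^*$ of width at most $\rho$ (in the empty-facet case $[l,r]$ is merely pushed against an endpoint of $[0,1]$). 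Since each exploration round incurs at most $2\|A\|_{\max}=O(1)$ Stackelberg regret, the phase contributes only $O(\log(1/d))$.

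\emph{Pessimistic commit.} From $[l,r]$ define $E_1^-:=\{p\le l-cd\}$ and $E_2^-:=\{p\ge r+cd\}$ for a small absolute constant $c$. In the non-empty case the diameter hypothesis together with $\rho=\Theta(d)$ guarantees $l-cd>0$ and $r+cd<1$, so both are non-empty, satisfy $E_i^-\subseteq E_i$ and $d_H(E_i,E_i^-)=O(d)$, and are $\Omega(d)$-separated from the opposite facet; in the empty case one $E_i^-$ is empty and $P_1$ uses only the other, whose complementary facet is genuinely empty so that $|s|$ is bounded below on $[0,1]$ by a game constant. $P_1$ then commits to the pessimistic Stackelberg action $x^-$ produced by \eqref{opt:pessimistic_facet}--\eqref{eq:pessimistic_facet}. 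By \Cref{prop:pessimistic_value_facet} the pessimism costs $O(d)$ per round, i.e.\ $O(dT)$ in total; and invoking \Cref{thm:stackreg_given_facet_estimation} on the commit sub-horizon with $d_1,d_2=\Theta(d)$, the $f$-no-regret learner can play anything other than the unique best response during the commit phase for at most $O(f(T)/d)$ rounds, each costing $O(1)$ — here we use that the learner's commit-phase regret exceeds its total regret by at most $O(\log(1/d))=O(f(T))$. Summing, $StackReg_1=O(f(T)/d+dT)+O(\log(1/d))=O(f(T)/d+dT-\log d)$, and the hypothesis $d=\Omega(\exp(-f(T)))$, i.e.\ $-\log d=O(f(T))=o(T)$, is exactly what keeps the exploration short enough to fit inside the horizon.

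\emph{Main obstacle.} Two points need care. First, certifying that one round reveals $\operatorname{sign}(s(p))$ even against a randomized, deliberately slow ascent rule: the unconditional ascent inequality leaves no room to hide the sign, but the boundary cases (the learner sitting at a vertex, a query landing on $p^*$) must be checked. Second, and more delicate, choosing the binary-search precision $\rho$ and the shrink constant $c$ so that, under the ``diameter $\ge d$'' hypothesis, the $E_i^-$ stay non-empty and $\Omega(d)$-separated, together with the separate treatment of the empty-facet case where the relevant best-response slope is bounded away from zero only by a game-dependent constant (absorbed into $O(\cdot)$). Threading these together — the exploration length, the values of $\rho,d_1,d_2$, and the application of \Cref{thm:stackreg_given_facet_estimation} to the post-exploration sub-horizon — is where the bookkeeping lives, though each individual estimate is routine.
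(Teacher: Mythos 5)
Your proposal is correct and follows essentially the same route as the paper's proof: a binary search over the one-dimensional simplex that extracts $\operatorname{sign}(s(p))$ from the ascent inequality in $O(\log(1/d))$ rounds, followed by a pessimistic commit analyzed by combining \Cref{thm:stackreg_given_facet_estimation} (with $d_1,d_2=\Theta(d)$) and the observation that an $O(f(T))$-length exploration prefix does not change the asymptotic learner-regret budget, with the dominated-action/empty-facet case handled separately via a game-dependent lower bound on the best-response gap. The resulting decomposition $O(-\log d)+O(dT)+O(f(T)/d)$ and the role of the hypothesis $d=\Omega(\exp(-f(T)))$ match the paper exactly.
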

A more detailed version of \Cref{thm:PAAL_stackreg_bound} with its proof can be found in \Cref{app:proof_PAAL_stackreg_bound}. As an example, here if $f(T)=T^\alpha$ and we take $d=\sqrt{f(T)/T}$, we achieve a bound on the optimizer Stackelberg regret of $O\left(\sqrt{Tf(T)}-\log \sqrt{f(T)/T}\right)=O\left(T^{\frac{1+\alpha}{2}}+\frac{1-\alpha}{2}\log T\right)=O\left(T^{\frac{1+\alpha}{2}}\right)$.

For the more general case where $n=2$ and $m$ is an arbitrary constant, we can use a similar approach that does $m(m-1)/2$ binary searches on all pairs of $(e_i,e_j),i\neq j$ to find a set of approximate indifferent points on each segment $\{x\in\Delta_m: x_i+x_j=1\}$ and then use them to reconstruct the facets $E_1^-$ and $E_2^-$, the reconstruction is possible under mild assumptions since the real facets $E_1$ and $E_2$ are separated by the hyperplane $x^T(Be_1-Be_2)=0$. We leave it as a open problem whether similar approach will work for $n>2$ case. There is an alternative view of \Cref{alg:PAAL} based on payoff matrix reconstruction, see discussion also in \Cref{app:proof_PAAL_stackreg_bound}.

\subsection{Learning to Steer Mirror Ascent Learners}\label{subsec:PAMD}
We now present an estimation algorithm that estimates the payoff matrix class $\mathcal{B}$ of $P_2$ given that $P_2$ is using stochastic mirror ascent with known regularizer. More specifically, we assume that the follower is using the following update rule:
\begin{equation}\label{eq:SMD_update_rule}\begin{aligned}
    y_{t+1}=\arg\min_{y\in\Delta_n}\left\{\eta_t D(y\|y_t)-(x_t^TB+\xi_t^T)y \right\}
\end{aligned}\end{equation}
where $\xi_t\in\R^n$ is some noise that is either innate in the problem or injected by $P_2$ to prevent from information leakage. We assume that $\eta_t$ and the Bregman divergence regularizer $D(\cdot\|\cdot)$ are both known to $P_1$ and the regularizer satisfies $\nabla_yD(y_{t+1}\|y_t)\rightarrow \infty$ if there exists $i\in[n]$ such that $y_{t+1,i}\rightarrow 0$.

At each time step $t$, through the update rule (which the optimzer knows by knowing the regularizer and step size), the relationship between $y_{t+1}$ and $y_t$ only depends on the term $x_t^TB+\xi_t^T$, therefore if the optimizer selects $x_t=e_i$, it can get some information of the $i$-th row $B_i$ of $B$. By uniformly exploring all such rows, it is able to fully recover the entire matrix class $B$. Interestingly, since the update rule includes projection onto the simplex $\Delta_n$, the information of one dimension is lost, so the optimizer cannot fully recover the exact matrix $B$, but luckily the projection preserve all information needed to recover $B$ \textit{up to the equivalence class}, which suffices to steer the learner to Stackelberg. Based on the intuition above, we propose \Cref{alg:PAMD} as shown in \Cref{app:proof_PAMD_stackreg_bound} with the following regret bound.

\begin{theorem}\label{thm:PAMD_stackreg_bound}
    If the learner payoff matrix $B$ statisfies the assumptions needed in \Cref{thm:stackreg_given_estimation}, $P_2$ follows update rule \eqref{eq:SMD_update_rule}, and each entry $\xi_{t,i}$ is i.i.d. $R$-sub-Gaussian, then with probability at least $1-\delta$, $P_1$ using \Cref{alg:PAMD} with $k=\left(T/g(T)\right)^2 2R^2\log(2mn/\delta)$, incurs Stackelberg regret of at most $StackReg_1(\mathcal{A}_1,\mathcal{A}_2)=O(\sqrt{Tf(T)}+g(T)+\left(\frac{T}{g(T)}\right)^2)$.
\end{theorem}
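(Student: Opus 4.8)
The plan is to instantiate the explore-then-commit template of \Cref{thm:stackreg_given_estimation} with a concrete estimation phase tailored to the mirror-ascent update rule \eqref{eq:SMD_update_rule}, then add up the regret contributions from the exploration phase, the commitment phase, and the failure event. The overall budget is: the exploration phase lasts some $T_0$ rounds, during which the (trivially bounded) per-round Stackelberg regret contributes $O(T_0)$; conditioned on the estimation succeeding, the commitment phase contributes $O(\sqrt{Tf(T)}+g(T))$ by \Cref{thm:stackreg_given_estimation}; and the failure event (probability $\le \delta$) contributes at most $\delta\cdot T\cdot\|A\|_{\max}=O(\delta T)$, which is absorbed once $\delta$ is chosen small (the statement fixes a generic $\delta$, so this term is treated as lower order or folded into the ``with probability $1-\delta$'' qualifier). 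The heart of the argument is therefore to show that \Cref{alg:PAMD}, after $T_0$ rounds with the prescribed $k=(T/g(T))^2\cdot 2R^2\log(2mn/\delta)$ samples per row, returns $\hat{\mathcal{B}}$ with $\|d_i(\mathcal{B},\hat{\mathcal{B}})\|_\infty\le \epsilon=O(g(T)/T)$ with probability $\ge 1-\delta$, and that $T_0=O((T/g(T))^2)=o(T)$.

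The first step is to extract, from \eqref{eq:SMD_update_rule}, a usable estimating equation for a single row $B_i$. When $P_1$ plays $x_t=e_i$, the first-order optimality condition for the mirror-ascent step reads $\eta_t\nabla_y D(y_{t+1}\|y_t) = B_i + \xi_t - \lambda_t 1_n$ on the support of $y_{t+1}$, where $\lambda_t$ is the Lagrange multiplier of the simplex constraint. Since the regularizer is assumed to blow up at the boundary ($\nabla_y D(y_{t+1}\|y_t)\to\infty$ as a coordinate of $y_{t+1}\to 0$), the iterate stays interior and the multiplier is pinned down by the constraint $1_n^T y_{t+1}=1$; subtracting the equations across coordinates $k$ and $i$ cancels $\lambda_t$ and yields an observable quantity equal to $(B_i)_k-(B_i)_i + (\xi_{t,k}-\xi_{t,i})$, scaled by the known $\eta_t$ and the known map $\nabla D$. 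This is exactly the coordinate of $\mathcal{B}_i^\circ$ we need (up to the normalization by $\max_{j_1,j_2}\|B_{:,j_1}-B_{:,j_2}\|_\infty$), corrupted by a zero-mean $O(R)$-sub-Gaussian noise. Averaging over the $k$ samples collected while playing $x_t=e_i$ and invoking a standard sub-Gaussian concentration bound gives, for each of the $(n-1)m$ relevant entries, deviation at most $\sqrt{2R^2\log(2mn/\delta)/k}$ with probability $\ge 1-\delta/(mn)$; a union bound over all entries and all $m$ rows gives the claimed $1-\delta$ guarantee, and the choice of $k$ makes this deviation $\le g(T)/T$ up to constants, hence $\le \epsilon$. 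The exploration horizon is $T_0 = mk = m\cdot(T/g(T))^2\cdot 2R^2\log(2mn/\delta) = O((T/g(T))^2)$, which is $o(T)$ because $g(T)=\omega(\sqrt{T})$ is forced (otherwise $(T/g(T))^2\not= o(T)$)—this is implicitly the regime the theorem addresses, consistent with the $(T/g(T))^2$ term appearing in the final bound.

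The main obstacle I anticipate is not the concentration step but cleanly handling the \emph{normalization constant} $\max_{j_1,j_2}\|B_{:,j_1}-B_{:,j_2}\|_\infty$ that defines $\mathcal{B}_i^\circ$ and hence $d_i(\mathcal{B},\hat{\mathcal{B}})$: the raw averages estimate the unnormalized differences $B_{:,k}-B_{:,i}$, and one must argue that dividing by an estimate of the (unknown) normalizer does not inflate the error—this requires a lower bound on that normalizer (which holds because $B$ has non-identical columns, or can be assumed bounded away from $0$ as part of ``the assumptions needed in \Cref{thm:stackreg_given_estimation}'') and a short perturbation argument on the ratio. A secondary technical point is justifying that the boundary-blowup condition on $D$ genuinely forces all iterates to be interior so that the Lagrange-multiplier elimination is valid at every step of the exploration phase; this follows from the first-order condition itself (a boundary iterate would require an infinite subgradient against a finite right-hand side $B_i+\xi_t$, once $\eta_t>0$), but it should be stated carefully, perhaps after conditioning on $\xi_t$ being finite which holds almost surely. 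Once these two points are dispatched, the regret decomposition above assembles into $O(T_0 + \sqrt{Tf(T)} + g(T)) = O(\sqrt{Tf(T)} + g(T) + (T/g(T))^2)$ on the $1-\delta$ event, completing the proof.
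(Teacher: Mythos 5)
Your proposal is correct and follows essentially the same route as the paper's proof: extract $B_i$ (up to an all-ones shift from the simplex multiplier, which is irrelevant for the equivalence class) from the first-order condition of \eqref{eq:SMD_update_rule} when playing $x_t=e_i$, concentrate the averaged sub-Gaussian noise with a union bound under the prescribed $k$, handle the normalization of $\mathcal{B}_i^\circ$ via a perturbation argument on the ratio, and then assemble the exploration cost $m(k+1)=O((T/g(T))^2)$ with the commitment-phase bound from \Cref{thm:stackreg_given_estimation}. The two technical obstacles you flag (the normalizer perturbation and the interiority of iterates forcing the nonnegativity multipliers to vanish) are exactly the points the paper's proof addresses.
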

The detailed version of \Cref{thm:PAMD_stackreg_bound} and its proof can be found in \Cref{app:proof_PAMD_stackreg_bound}. Here since for all no-regret algorithms $\sA_2$ we have $f(T)=\Omega(\sqrt{T})$, if we take $g(T)=\sqrt{Tf(T)}$, $\left(\frac{T}{g(T)}\right)^2=o(g(T))$ and we have $StackReg_1(\mathcal{A}_1,\mathcal{A}_2)=O(\sqrt{Tf(T)})$.

\section{Conclusion}
We studied the problem of learning to steer no-regret learners to Stackelberg equilibrium through repeated interactions. While we showed this to be impossible against a general no-regret learner, we provided sufficient conditions under which the learner can be exploited and designed algorithms that learns to steer the learner under further assumptions on their algorithm. Our work provides several future directions for learning in strategic environments, including but not limited to finding a more precise characterization on learnable and steerable learner algorithm classes, and learning in environments where neither payoff matrices are known.

\section*{Acknowledgements}
EM acknowledges support from NSF Award 2240110.
YM is supported by: NSF Award CCF-2112665 (TILOS), DARPA AIE program, the U.S. Department of Energy, Office of Science, and CDC-RFA-FT-23-0069 from the CDC’s Center for Forecasting and Outbreak Analytics.

\bibliographystyle{plainnat}
\bibliography{references}

\newpage
\appendix
\onecolumn
\newpage
\appendix
\onecolumn
\section{Proof of \Cref{thm:impossibility_wo_update_rule}}\label{app:proof_impossibility_wo_update_rule}
Consider the game instances $G_1=(A,B_1)$ and $G_2=(A,B_2)$ where:
\begin{equation}
    A=\begin{bmatrix}
                0 & 0\\
                1 & \epsilon
            \end{bmatrix},
            B_1=\begin{bmatrix}
                0 & \epsilon\\
                0 & 1
            \end{bmatrix},
            B_2=\begin{bmatrix}
                1 & 0\\
                0 & 1
            \end{bmatrix}.
\end{equation}
with $\epsilon\in(0,1/2)$ being a small positive constant. Fix an optimizer algorithm $\mathcal{A}_1$ and suppose it is no-Stackelberg-regret on $G_1$. We first show that there exists a learner algorithm $\mathcal{A}_2$ (that may not be a no-regret algorithm itself) that achieves $Reg_2(\mathcal{A}_1,\mathcal{A}_2)=o(T)$ on both $G_1$ and $G_2$, and then modify this $\mathcal{A}_2$ to make it a no-regret algorithm itself. In this proof section we use $Stackreg_1(\cdot,\cdot;G_i)$ and $Reg_2(\cdot,\cdot;G_i)$ to denote the corresponding regret notions evaluated on $G_i$.

For the first step, we show that a simple algorithm $\mathcal{A}_2$ that takes $y=(0,1)^T$ satisfies the conditions above for both $G_1$ and $G_2$. Notice that the unique Stackelberg equilibrium for $G_1$ is:
\begin{equation}
    x_1^*=(0,1)^T , y_1^*=(0,1)^T
\end{equation}
with the corresponding Stackelberg value $V(A,B_1)=\epsilon$. In order to achieve a sublinear Stackelberg regret, the selections $x_t$ by $\mathcal{A}_1$ must satisfy
\begin{equation}
    StackReg_1(\mathcal{A}_1,\mathcal{A}_2;G_1)=\E_{\{x_t\}_{t=1}^T\sim\mathcal{A}_1}\left[T\epsilon -\sum_{t=1}^T x_t^TAy_t \middle| y_t=y_1^*,\forall t\right]=o(T),
\end{equation}
which simplifies to:
\begin{equation}\label{eq:appendixA_xt_asymptotic_condition}
    \E\left[\sum_{t=1}^T(\begin{bmatrix}
        0& 1
    \end{bmatrix}-x_t^T)\begin{bmatrix}
        0\\
        \epsilon
    \end{bmatrix}\right]=o(T).
\end{equation}
That is, the average of $\{x_t\}_{t=1}^T$ must be asymptotically close to $(0,1)^T$ in expectation. Also notice that $y_1^*$ is a strictly dominant strategy for the learner, we have $Reg_2(\mathcal{A}_1,\mathcal{A}_2;G_1)=0$.

Now consider $G_2$, the unique Stackelberg equilibrium for $G_2$ is:
\begin{equation}
    x_2^*=(\frac{1}{2},\frac{1}{2})^T, y_2^*=(1,0)^T,
\end{equation}
yielding a Stackelberg value of $V(A,B_2)=1/2$. Since $\mathcal{A}_1$ only takes the $\{y_t\}_{t=1}^T$ sequence as input, it must behave identically as in $G_1$, with expected average $\{x_t\}_{t=1}^T$ asymptotically close to $(0,1)^T$ as well. That is,
\begin{equation}\begin{aligned}
    &StackReg_1(\mathcal{A}_1,\mathcal{A}_2;G_2)\\
    =&\E_{\{x_t\}_{t=1}^T\sim\mathcal{A}_1}\left[\frac{1}{2}T -\sum_{t=1}^T x_t^TAy_t \middle| y_t=y_1^*,\forall t\right]\\
    =&\frac{1}{2}T-\E_{\{x_t\}_{t=1}^T\sim\mathcal{A}_1}\left[\sum_{t=1}^T x_t^TAy_t \middle| y_t=y_1^*,\forall t\right]\\
    =& \frac{1}{2}T-(\epsilon T-o(T))\\
    =&(\frac{1}{2}-\epsilon) T+o(T).
\end{aligned}\end{equation}
Therefore, as long as $\mathcal{A}_1$ is no-Stackelberg-regret on $G_1$ against $\mathcal{A}_2$, it incurs linear Stackelberg regret on $G_2$ against the same $\mathcal{A}_2$. Also, since \eqref{eq:appendixA_xt_asymptotic_condition} still holds under $G_2$, we have the following upper bound on the learner regret:
\begin{equation}
    \begin{aligned}
        &Reg_2(\mathcal{A}_1,\mathcal{A}_2;G_2)\\
        =& T-\E_{\{x_t\}_{t=1}^T\sim\mathcal{A}_1}\left[\sum_{t=1}^T x_t^TB_2y_t \middle| y_t=y_1^*,\forall t\right]\\
        =&T-\E_{\{x_t\}_{t=1}^T\sim\mathcal{A}_1}\left[\sum_{t=1}^T (x_t-\begin{bmatrix}0 & 1\end{bmatrix})^TB_2y_t \middle| y_t=y_1^*,\forall t\right]-\E_{\{x_t\}_{t=1}^T\sim\mathcal{A}_1}\left[\sum_{t=1}^T \begin{bmatrix}0 & 1\end{bmatrix}^TB_2y_t \middle| y_t=y_1^*,\forall t\right]\\
        =&T+\E_{\{x_t\}_{t=1}^T\sim\mathcal{A}_1}\left[\sum_{t=1}^T (\begin{bmatrix}0 & 1\end{bmatrix}-x_t)^TB_2y_t \middle| y_t=y_1^*,\forall t\right]-T\\
        =&\E_{\{x_t\}_{t=1}^T\sim\mathcal{A}_1}\left[\sum_{t=1}^T (\begin{bmatrix}0 & 1\end{bmatrix}-x_t)^TB_2y_t \middle| y_t=y_1^*,\forall t\right]\\
        =&\E\left[\sum_{t=1}^T (\begin{bmatrix}0 & 1\end{bmatrix}-x_t)^T\begin{bmatrix}
            0 \\ 1
        \end{bmatrix}\right]\\
        =&\frac{1}{\epsilon}\E\left[\sum_{t=1}^T(\begin{bmatrix}
        0& 1
    \end{bmatrix}-x_t^T)\begin{bmatrix}
        0\\
        \epsilon
    \end{bmatrix}\right]\\
        =&o(T),
    \end{aligned}
\end{equation}
which completes the first part of the proof.

We now modify $\mathcal{A}_2$ into a no-regret algorithm. Notice that although $\mathcal{A}_2$ constructed above obtains sublinear regret against $\mathcal{A}_1$ on both $G_1$ and $G_2$, it is not no-regret on $G_2$ since it may incur linear regret on some $\{x_t\}_{t=1}^T$ sequence, e.g. $x_t=(1,0)^T, \forall t$. Since $\mathcal{A}_1$ is fixed, we can use a function $g(T)=o(T)$ to characterize its Stackelberg regret, namely we select $g(T)$ such that
\begin{equation}
    StackReg_1(\mathcal{A}_1,\mathcal{A}_2;G_1)=\E_{\{x_t\}_{t=1}^T\sim\mathcal{A}_1}\left[T\epsilon -\sum_{t=1}^T x_t^TAy_t \middle| y_t=y_1^*,\forall t\right]=O(g(T)).
\end{equation}
Our idea is to let the learner keep track of the cumulated regret upon the current time step $t$ to identify whether the trajectory $\{x_\tau\}_{\tau=1}^t$ is generated by $\mathcal{A}_1$ or not. Consider the modified algorithm $\tilde{\mathcal{A}}_2$ as follows:
\begin{enumerate}
    \item When the interaction process starts, stick to $(0,1)^T$;
    \item At each time step $t$, calculate the running Stackelberg regret $SR(\{x_\tau,y_\tau\}_{\tau=1}^{t-1}):=(t-1)\epsilon-\sum_{\tau=1}^{t-1} x_\tau^T A y_\tau$;
    \item If $SR(\{x_\tau,y_\tau\}_{\tau=1}^{t-1})\geq \sqrt{Tg(T)}$, switch and stick to online mirror ascent, otherwise keep playing $(0,1)^T$.
\end{enumerate}
Notice that here we can use the knowledge of $g(T)$, which serves as the Stackelberg regret bound of $\mathcal{A}_1$ because we only aim to prove the existence of such algorithm as $\tilde{\mathcal{A}}_2$.
To complete the proof of \Cref{thm:impossibility_wo_update_rule}, notice that on game instance $G_1$ no matter what trajectory $\{x_\tau\}_{\tau=1}^{t-1}$ it faces, since $(0,1)^T$ is a dominant learner action, $\tilde{\mathcal{A}}_2$ will play $(0,1)^T$ for all time steps, and therefore have $0$ Stackelberg regret. It suffices to prove that on the game instance $G_2$, $\tilde{\mathcal{A}}_2$ is no-regret and $\tilde{\mathcal{A}_2}$ still incurs $\Theta(T)$ Stackelberg regret to the optimizer against $\mathcal{A}_1$. To simplify calculation we use notations $x=(x_1,x_2)^T$ and $y=(y_1,y_2)^T$ here.

To show that $\tilde{\mathcal{A}}_2$ is no-regret, notice that before switching to mirror ascent, the learner regret has the following form:
\begin{equation}
    Reg_2(\{x_t,y_t\}_{\tau=1}^t;G_2)=\max\{\sum_{\tau=1}^tx_{\tau,1},\sum_{\tau=1}^tx_{\tau,2}\}-\sum_{\tau=1}^t x_{\tau,2}=\max\{\sum_{\tau=1}^t(1-2x_{\tau,2}),0\}.
\end{equation}
Also, $\tilde{\mathcal{A}}_2$ sticks to $(0,1)^T$, and therefore,
\begin{equation}
    SR(\{x_\tau,y_\tau\}_{\tau=1}^{t-1})=(t-1)\epsilon-\epsilon\sum_{\tau=1}^{t-1}x_{\tau,2}.
\end{equation}
Let $T_s$ denote the time step at which $\tilde{\mathcal{A}}_2$ switches, or $T_s=T$ if the algorithm doesn't switch until the end, we have:
\begin{equation}\begin{aligned}
    &Reg_2(\{x_t,y_t\}_{t=1}^{T_s};G_2)\\
    =&\max\{T_s-2\sum_{t=1}^{T_s}x_{t,2},0\}\\
    =&\max\{(T_s-1)-2(T_s-1-\frac{SR(\{x_t,y_t\}_{t=1}^{T_s-1})}{\epsilon}),0\}+O(1)\\
    =&\max\{2\frac{SR(\{x_t,y_t\}_{t=1}^{T_s-1})}{\epsilon}-T_s+1,0\}+O(1)\\
    \leq &2\frac{SR(\{x_t,y_t\}_{t=1}^{T_s-1})}{\epsilon}+O(1)\\
    =&O(\sqrt{Tg(T)})\\
    =&o(T),
\end{aligned}\end{equation}
and therefore,
\begin{equation}\begin{aligned}
    &Reg_2(\{x_t,y_t\}_{t=1}^{T};G_2)\\
    =&\max_{y\in\Delta_n}\sum_{t=1}^T x_t^TBy-\sum_{t=1}^T x_t^TBy_t\\
    \leq &\max_{y\in\Delta_n}\sum_{t=1}^{T_s} x_t^TBy-\sum_{t=1}^{T_s} x_t^TBy_t+\max_{y\in\Delta_n}\sum_{t=T_s+1}^T x_t^TBy-\sum_{t=T_s+1}^T x_t^TBy_t\\
    = & Reg_2(\{x_t,y_t\}_{t=1}^{T_s};G_2)+Reg_2(\{x_t,y_t\}_{t=T_s+1}^{T};G_2)\\
    =&o(T)+O(\sqrt{T})\\
    =&o(T).
\end{aligned}\end{equation}
Since this holds for arbitrary $\{x_t\}_{t=1}^T$ sequence, we deduce that $\tilde{\mathcal{A}}_2$ is a no-regret algorithm.

To show that $\tilde{\mathcal{A}}_2$ incurs $\Theta(T)$ Stackelberg regret to the optimizer against $\mathcal{A}_1$, consider the event
\begin{equation}\begin{aligned}
    \mathcal{E}&=\{SR(\{x_\tau,y_1^*\}_{\tau=1}^{t};G_1)\geq \sqrt{Tg(T)}, \text{ for some }t\in [T]\}\\
    &=\{SR(\{x_\tau,y_1^*\}_{\tau=1}^{T};G_1)\geq \sqrt{Tg(T)}\}
\end{aligned}\end{equation}
that captures the case where the Stackelberg regret of $\mathcal{A}_1$ exceeds $\sqrt{Tg(T)}$ under $G_1$ given the learner fixes $y_1^*$, since $\mathcal{A}_1$ is no-Stackelberg-regret on $G_1$, the probability of $\mathcal{E}$ should be small:
\begin{equation}
    \begin{aligned}
        \Pr(\mathcal{E})\leq \frac{\E\left[SR(\{x_\tau,y_\tau\}_{\tau=1}^{T};G_1) \right]}{\sqrt{Tg(T)}}=O(\sqrt{\frac{g(T)}{T}}).
    \end{aligned}
\end{equation}
Conditioned on $\mathcal{E}$ doesn't happen, $\tilde{\mathcal{A}}_2$ will not switch to online mirror descent, the Stackelberg regret under $G_2$ satisfies:
\begin{equation}\label{eq:proof_appendix_A_stackreg_under_Ebar}
    \begin{aligned}
        &\E_{\{x_t,y_t\}_{t=1}^T\sim\mathcal{A}_1,\tilde{\mathcal{A}}_2} [StackReg_1(\{x_t,y_t\}_{t=1}^T;G_2)|\bar{\mathcal{E}}]\\
        =&\frac{1}{2}T-\E_{\{x_t\}_{t=1}^T\sim\mathcal{A}_1}\left[\sum_{t=1}^T x_tA\begin{bmatrix}
            0 \\ 1
        \end{bmatrix}\right]\\
        =&\frac{1}{2}T-(\epsilon T-\E_{\{x_t\}_{t=1}^T\sim \mathcal{A}_1}\left[SR(\{x_t,(0,1)^T\}_{t=1}^T)\right])\\
        =&(\frac{1}{2}-\epsilon)T+O(\sqrt{Tg(T)}).
    \end{aligned}
\end{equation}
Since $\mathcal{A}_1$ should respond identically on $G_2$ we can write the Stackelberg regret of $\mathcal{A}_1$ as:
\begin{equation}
    \begin{aligned}
        &StackReg_1(\mathcal{A}_1,\tilde{\mathcal{A}}_2;G_2)\\
        =&\E_{\{x_t,y_t\}_{t=1}^T\sim\mathcal{A}_1,\tilde{\mathcal{A}}_2} StackReg_1(\{x_t,y_t\}_{t=1}^T;G_2)\\
        =&\E_{\{x_t,y_t\}_{t=1}^T\sim\mathcal{A}_1,\tilde{\mathcal{A}}_2} [StackReg_1(\{x_t,y_t\}_{t=1}^T;G_2)|\mathcal{E}]\Pr(\mathcal{E})\\
        &+\E_{\{x_t,y_t\}_{t=1}^T\sim\mathcal{A}_1,\tilde{\mathcal{A}}_2} [StackReg_1(\{x_t,y_t\}_{t=1}^T;G_2)|\bar{\mathcal{E}}](1-\Pr(\mathcal{E}))\\
        =&O(T\cdot\sqrt{\frac{g(T)}{T}})+\E_{\{x_t,y_t\}_{t=1}^T\sim\mathcal{A}_1,\tilde{\mathcal{A}}_2} [StackReg_1(\{x_t,y_t\}_{t=1}^T;G_2)|\bar{\mathcal{E}}](1-\Pr(\mathcal{E}))\\
        \stackrel{\text{(i)}}{\geq} &O(T\cdot\sqrt{\frac{g(T)}{T}})+\left((\frac{1}{2}-\epsilon)T +O(\sqrt{Tg(T)})\right)(1-\sqrt{\frac{g(T)}{T}})\\
        \geq & cT
    \end{aligned}
\end{equation}
for some constant $c$, where we have used \eqref{eq:proof_appendix_A_stackreg_under_Ebar} in (i).
As a result, $\tilde{\mathcal{A}}_2$ incurs $\Theta(T)$ Stackelberg regret against $\mathcal{A}_1$, which completes the proof of \Cref{thm:impossibility_wo_update_rule}.

\newpage
\section{Proofs for \Cref{subsec:facets_and_equivalence_classes}}
\subsection{Proof of \Cref{prop:pessimistic_value_facet}}\label{app:proof_pessimistic_value_facet}
Fix some $i$. Since $d$-pessimism implies $E_i^-$ and $E_i$ are non-empty, let $x_i^-$ denote the optimal solution to \eqref{opt:pessimistic_facet} and $x^*$ be the optimal solution to \eqref{opt:exact_facet}, since $E_i^-\subseteq E_i$ we have:
\begin{equation}
    V_i^-(A,B)=(x_i^-)^TA_{:,i}\leq (x^*)^TA_{:,i}=V_i(A,B).
\end{equation}
Also, $d_H(E_i,E_i^-)\leq d$ implies there exists $\hat{x}\in E_i^-$ satisfying $\|\hat{x}-x^*\|_1\leq d$, and thus:
\begin{equation}\begin{aligned}
    V_i^-(A,B)=&(x_i^-)^TA_{:,i}\\
    \geq& \hat{x}^TA_{:,i}\\
    =&(x^*+\hat{x}-x^*)^TA_{:,i}\\
    \geq& (x^*)^Ta_i-d\|A_{:,i}\|_\infty\\
    =&V_i(A,B)-d\|A_{:,i}\|_\infty,
\end{aligned}\end{equation}
where the first inequality holds due to the optimality of $x_i^-$ as a solution to \eqref{opt:pessimistic_facet} and in the second inequality we use H\"older's inequality that gives $|a^Tb|\leq \|a\|_1\|b\|_\infty$, which completes the proof.

\subsection{Refined Statement and Proof of \Cref{thm:stackreg_given_facet_estimation}}\label{app:proof_stackreg_given_facet_estimation}
We first provide a refined statement of \Cref{thm:stackreg_given_facet_estimation} that expands the big $O(\cdot)$ notation in the original statement.
\begin{theorem}
    If $P_1$ has a set of $d_1$-pessimistic facets $E_i^-, \forall i\in [n]$ such that $\forall i\neq j$, $\inf_{x\in E_i^-,x'\in E_j} \|x-x'\|_1\geq d_2$, as long as $P_2$ is using an $f$-no-regret algorithm $\mathcal{A}_2$ with constant $C$, $P_1$ can guarantee a Stackelberg regret of
    \begin{equation}
    StackReg_1(\mathcal{A}_1,\mathcal{A}_2)=\left(Td_1+\frac{2Cf(T)}{\epsilon d_2}\right) \|A\|_{\max}
    \end{equation}
    by sticking to the corresponding $x^-$ obtained from \eqref{opt:pessimistic_facet} and \eqref{eq:pessimistic_facet}. Here $\epsilon$ is a constant that depends only on the learner's payoff matrix $B$.
\end{theorem}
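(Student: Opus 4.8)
The plan is to show that the optimizer algorithm $\mathcal{A}_1$ that simply commits to the single action $x^-$ returned by \eqref{opt:pessimistic_facet}--\eqref{eq:pessimistic_facet} forces a dichotomy on the learner: on each round the learner either places essentially all of its mass on the pure action $e_{i^-}$ whose pessimistic facet contains $x^-$ --- in which case the optimizer earns (close to) the pessimistic value $V^-(A,B)$, which \Cref{prop:pessimistic_value_facet} shows is within $d_1\|A\|_{\max}$ of $V(A,B)$ --- or it deviates, in which case it incurs learner regret $\Omega(d_2)$ on that round, so by its $f$-no-regret guarantee it can afford to deviate on only $O(f(T)/d_2)$ rounds in expectation. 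Summing the two sources of loss yields the stated bound.

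Concretely, I would fix the maximizing index $i^-$ and the optimal point $x^-\in E_{i^-}^-$ attaining \eqref{eq:pessimistic_facet}. Since $E_{i^-}^-\subseteq E_{i^-}$, the action $e_{i^-}$ is a best response to $x^-$, so with $x_t\equiv x^-$ we have $\max_{y\in\Delta_n}\sum_t x_t^TBy = T\,(x^-)^TBe_{i^-}$ and hence
\[
Reg_2\!\big(\{x^-,y_t\}_{t=1}^T\big)=\sum_{t=1}^T\sum_{j\neq i^-}y_{t,j}\,(x^-)^TB(e_{i^-}-e_j).
\]
The central step is the geometric claim that $(x^-)^TB(e_{i^-}-e_j)\geq\epsilon\,d_2$ for every $j\neq i^-$, with $\epsilon>0$ depending only on $B$. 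To prove it, set $\delta_j:=(x^-)^TB(e_{i^-}-e_j)\geq0$; because $x^-\in E_{i^-}$, for every $k$ we have $(x^-)^TB(e_k-e_j)=(x^-)^TB(e_k-e_{i^-})+\delta_j\leq\delta_j$, so $x^-$ violates each defining inequality of the (nonempty, by hypothesis) polytope $E_j=\{x\in\Delta_m:\ x^TB(e_k-e_j)\leq0\ \forall k\}$ by at most $\delta_j$ while satisfying the simplex constraints exactly. The classical Hoffman error bound then gives $\dist_1(x^-,E_j)\leq\mu_j\,\delta_j$, where $\mu_j$ is the Hoffman constant of that linear system and therefore depends only on $B$. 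Combining with the hypothesis $\dist_1(x^-,E_j)\geq d_2$ gives $\delta_j\geq d_2/\mu_j$, and $\epsilon:=\min_{i\neq j}1/\mu_j$ works. (If any $E_j$ were empty the separation hypothesis would be vacuous there, and a compactness argument on $E_{i^-}$ together with $d_2\leq2$ would supply the needed lower bound.)

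Given the claim, $Reg_2(\{x^-,y_t\})\geq\epsilon d_2\sum_t(1-y_{t,i^-})$, so taking expectations and using that $\mathcal{A}_2$ is $f$-no-regret with constant $C$ yields $\E\big[\sum_t(1-y_{t,i^-})\big]\leq Cf(T)/(\epsilon d_2)$. Finally I would decompose $(x^-)^TAy_t=y_{t,i^-}(x^-)^TAe_{i^-}+\sum_{j\neq i^-}y_{t,j}(x^-)^TAe_j$, bound $(x^-)^TAe_{i^-}=V_{i^-}^-(A,B)\geq V(A,B)-d_1\|A\|_{\max}$ via \Cref{prop:pessimistic_value_facet}, $(x^-)^TAe_j\geq-\|A\|_{\max}$ via H\"older's inequality, and $V(A,B)\leq\|A\|_{\max}$, to obtain
\[
T\,V(A,B)-\sum_{t=1}^T(x^-)^TAy_t\;\leq\;Td_1\|A\|_{\max}+2\|A\|_{\max}\sum_{t=1}^T(1-y_{t,i^-}).
\]
Taking expectations and substituting the bound on $\E[\sum_t(1-y_{t,i^-})]$ gives $StackReg_1(\mathcal{A}_1,\mathcal{A}_2)\leq\big(Td_1+2Cf(T)/(\epsilon d_2)\big)\|A\|_{\max}$, as claimed. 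The main obstacle is the geometric claim --- specifically, obtaining a lower bound on the per-round deviation loss that is \emph{linear} in $d_2$ rather than merely positive; this is exactly what the Hoffman bound supplies, and everything else is routine bookkeeping with the no-regret hypothesis and H\"older's inequality.
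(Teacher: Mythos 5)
Your proposal is correct and follows essentially the same route as the paper's proof: commit to the single action $x^-$, decompose the learner's regret as $\sum_{j\neq i^-}\sum_t y_{t,j}\,(x^-)^TB(e_{i^-}-e_j)$, lower-bound each coefficient by $\epsilon d_2$ to conclude $\E[\sum_t(1-y_{t,i^-})]\leq Cf(T)/(\epsilon d_2)$, and combine with \Cref{prop:pessimistic_value_facet} and H\"older's inequality to get the two-term bound. The one place you genuinely add something is the key inequality $(x^-)^TB(e_{i^-}-e_j)\geq\epsilon d_2$: the paper simply asserts this ``for some constant $\epsilon$'' from the separation hypothesis, whereas you derive it via Hoffman's error bound for the polyhedron $E_j$ (distance to $E_j$ is at most a $B$-dependent constant times the maximal constraint violation $\delta_j$, so $d_2\leq\dist_1(x^-,E_j)\leq\mu_j\delta_j$), together with the correct observation that $x^-\in E_{i^-}$ makes every defining inequality of $E_j$ violated by at most $\delta_j$. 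This fills in the justification for why $\epsilon$ depends only on $B$ and is linear in $d_2$, and your handling of the degenerate case $E_j=\emptyset$ (via the inducibility gap and $d_2\leq 2$) is also sound; the remaining bookkeeping matches the paper's up to an equivalent reorganization of the final H\"older step.
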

\begin{proof}
Let $x_i^-$ denote the optimal solution to \eqref{opt:pessimistic_facet} and $i^-=\argmax_{i} V_i^-(A,B)$ be the index of the best response under $x^-$ so that $x^-=x_{i^-}^-$. Since the pessimistic facet $E_{i^-}^-$ satisfies $\inf_{x\in E_{i^-}^-,y\in E_j} \|x-x'\|_1\geq d_2$ for all $j\neq i^-$, $e_{i^-}$ is a unique best response to $x^-$, we have
\begin{equation}
    (x^-)^TB e_{i^-} -(x^-)^TB e_{j}\geq \epsilon d_2, \forall j\in[n], j\neq i
\end{equation}
for some constant $\epsilon$. Since the learner regret has the following expression:
\begin{equation}\begin{aligned}
    &Reg_2(\{x^-,y_t\}_{t=1}^T)\\
    =&(x^-)^TB\sum_{t=1}^T(e_{i^-}-y_t)\\
    =&(x^-)^TB\sum_{t=1}^T(e_{i^-}-\sum_{j\in [n]}y_{t,j}e_j)\\
    =&(x^-)^TB\sum_{t=1}^T\left((1-y_{t,i^-})e_{i^-}-\sum_{j\in [n],j\neq i^-}y_{t,j}e_j\right)\\
    =&(x^-)^TB\sum_{t=1}^T\left(\sum_{j\in [n],j\neq i^-} y_{t,j}e_{i^-}-\sum_{j\in [n],j\neq i^-}y_{t,j}e_j\right)\\
    =&\sum_{j\in [n],j\neq i^-}\sum_{t=1}^Ty_{t,j}\left((x^-)^TBe_{i^-}-(x^-)^TBe_j \right)\\
    \geq & \sum_{j\in [n],j\neq i^-}\sum_{t=1}^Ty_{t,j} \epsilon d_2,
\end{aligned}\end{equation}
where we have used the fact that $1-y_{t,i}=\sum_{j\in [n],j\neq i} y_{t,j}$ for all $y_t\in\Delta_n$ in the fourth equation. Since $\mathcal{A}_2$ used by the learner is $f$-no-regret, assume the regret constant is $C$, we have:
\begin{equation}\label{eq:proof_stackreg_given_facet_estimation_1}\begin{aligned}
    &\|\sum_{t=1}^T(e_{i^-}-y_t)\|_1\\
    =& \|\sum_{t=1}^T(e_{i^-}-\sum_{j\in [n]}y_{t,j}e_j)\|_1\\
    =&\sum_{t=1}^T\left((1-y_{t,i^-})+\sum_{j\in[n],j\neq i^-}y_{t,j} \right) \\
    =&2\sum_{t=1}^T\left(\sum_{j\in[n],j\neq i^-}y_{t,j} \right) \\
    \leq& \frac{2Reg_2(\{x^-,y_t\}_{t=1}^T)}{\epsilon d_2}\\
    \leq & \frac{2Cf(T)}{\epsilon d_2} ,
\end{aligned}\end{equation}
where the second equality follows from $y_{t,i}\in[0,1],\forall y_t\in\Delta_n, i\in [n]$ and the third equality follows from the same argument as above. Let $i^*$ denote $\argmax_i V_i(A,B)$, the Stackelberg regret of $P_1$ satisfies:
\begin{equation}\begin{aligned}
    StackReg_1(\mathcal{A}_1,\mathcal{A}_2)=&T\cdot V(A,B)-\sum_{t=1}^T(x^-)^TAy_t\\
    =&T\cdot V(A,B)-\sum_{t=1}^T(x^-)^TAe_{i^-}+\sum_{t=1}^T(x^-)^TA(e_{i^-}-y_t)\\
    =&T\cdot \left(V(A,B)-V_{i^-}^-(A,B)\right)+\sum_{t=1}^T(x^-)^TA(e_{i^-}-y_t) \\
    \stackrel{\text{(i)}}{\leq} &T\cdot \left(V(A,B)-V_{i^-}^-(A,B)\right)+\|A^Tx^-\|_\infty \|\sum_{t=1}^T(e_{i^-}-y_t)\|_1 \\
    \stackrel{\text{(ii)}}{\leq} & T\cdot \left(V(A,B)-V_{i^*}^-(A,B)\right)+\|A^Tx^-\|_\infty \|\sum_{t=1}^T(e_{i^*}-y_t)\|_1\\
    =&T\cdot\left(V_{i^*}(A,B)-V_{i^*}^-(A,B)\right)+\|A^Tx^-\|_\infty \|\sum_{t=1}^T(e_{i^*}-y_t)\|_1\\
    \stackrel{\text{(iii)}}{\leq} & Td_1\|A\|_{\max}+\|A^Tx^-\|_\infty \|\sum_{t=1}^T(e_{i^*}-y_t)\|_1\\
    \stackrel{\text{(iv)}}{\leq} & Td_1\|A\|_{\max}+\|A\|_{\max}\|\sum_{t=1}^T(e_{i^*}-y_t)\|_1\\
    \stackrel{\text{(v)}}{\leq}& Td_1\|A\|_{\max}+2\|A\|_{\max}\frac{Cf(T)}{\epsilon d_2},
\end{aligned}\end{equation}
where we have used H\"older's inequality in (i), the maximizing argument of \eqref{eq:pessimistic_facet} in (ii), \Cref{prop:pessimistic_value_facet} in (iii), H\"older's inequality in (iv) and \eqref{eq:proof_stackreg_given_facet_estimation_1} in (v). This completes the proof of \Cref{thm:stackreg_given_facet_estimation}.
\end{proof}

\subsection{Regret Invariance Properties of Equivalent Payoff Matrices}\label{app:proof_equivalence_class_preserves_noregret}
We now state and prove \Cref{lem:equivalence_class_preserves_noregret}, which shows that the same trajectory yields the same asymptotic learner regret for all learner payoff matrices within the same equivalence class.
\begin{proposition}\label{lem:equivalence_class_preserves_noregret}
    Consider an interaction history $\{x_t,y_t\}_{t=1}^T$ that is $f$-no-regret for the learner on matrix $B_1$, then for all matrices $B_2$ being equivalent to $B_1$, the same interaction history is also $f$-no-regret. As a result, for all $f$-no-regret learner algorithm $\mathcal{A}_2$ on $B_1$, there exists another learner algorithm $\mathcal{A}_2'$ on $B_2$ that simulates $\mathcal{A}_2$ on $B_1$ which is also $f$-no-regret on $B_2$.
\end{proposition}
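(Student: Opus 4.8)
The plan is to exploit the definition of equivalence: $B_2 = cB_1 + \mu 1_n^T$ for some $c \in \R^+$, $\mu \in \R^m$, and track how each term in the learner's regret transforms. First I would write out $Reg_2(\{x_t,y_t\}_{t=1}^T)$ with respect to $B_2$ and substitute the relation. For any fixed $y \in \Delta_n$ we have $x_t^T B_2 y = c\, x_t^T B_1 y + x_t^T \mu \,(1_n^T y) = c\, x_t^T B_1 y + x_t^T \mu$, using $1_n^T y = 1$. The additive term $x_t^T \mu$ depends only on $x_t$, not on $y$, so it appears identically in the ``best fixed action'' term and the ``realized'' term of the regret and cancels. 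Hence
\begin{equation*}
Reg_2^{B_2}(\{x_t,y_t\}_{t=1}^T) = c \cdot Reg_2^{B_1}(\{x_t,y_t\}_{t=1}^T) \leq c\,C f(T),
\end{equation*}
so the same trajectory is $f$-no-regret on $B_2$ with constant $cC$ (and $c$ is a fixed constant not depending on $T$). This handles the trajectory-level claim and, since the argument holds for every trajectory, it immediately upgrades: if $\mathcal{A}_2$ is $f$-no-regret on $B_1$ on \emph{all} optimizer sequences, then the \emph{same sequence of maps} from history to action is $f$-no-regret on $B_2$ on all optimizer sequences.

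For the algorithmic statement I would be slightly careful about what ``simulates'' means. Define $\mathcal{A}_2'$ to be literally the same (possibly randomized) map from interaction prefixes $\{x_\tau, y_\tau\}_{\tau=1}^{t-1}$ to a distribution over $y_t$ as $\mathcal{A}_2$. Because $\mathcal{A}_2'$ ignores payoff matrices entirely and only sees the interaction history, against any optimizer sequence $\{x_t\}_{t=1}^T$ the induced distribution over trajectories $\{x_t,y_t\}_{t=1}^T$ is identical whether the underlying matrix is $B_1$ or $B_2$ --- the learner's behavior is matrix-agnostic by construction. Then taking expectations over this common trajectory distribution and applying the trajectory-level identity inside the expectation gives $Reg_2(\mathcal{A}_2', \{x_t\}) = \mathbb{E}\,Reg_2^{B_2}(\{x_t,y_t\}) = c\,\mathbb{E}\,Reg_2^{B_1}(\{x_t,y_t\}) = c\,Reg_2(\mathcal{A}_2, \{x_t\}) \leq cCf(T)$, which is $o(T)$. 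Since this holds on all optimizer sequences, $\mathcal{A}_2'$ is $f$-no-regret on $B_2$.

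The computation itself is routine; the only place requiring a little thought is the claim that $c$ is a genuine constant independent of $T$ so that $cCf(T)$ is still of the form (constant)$\cdot f(T)$, and the (essentially bookkeeping) point that $c > 0$ is needed so the regret does not flip sign --- if $c$ were allowed to be negative the ``best response'' direction would reverse. Both are immediate from Definition~\ref{def:equivalence_class}. I would also remark in passing (as the paper does) that this is precisely why the optimizer cannot distinguish $B_1$ from $B_2$ through the interaction alone without extra knowledge of $\mathcal{A}_2$, which is the conceptual payoff of the lemma; but that remark is not part of the proof obligation. So there is no real obstacle here --- the main ``step'' is just noticing the $1_n^T y = 1$ cancellation and being precise about the fact that a history-only learner algorithm induces matrix-independent trajectory distributions.
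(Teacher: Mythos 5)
Your proposal is correct and follows essentially the same route as the paper: substitute $B_2 = cB_1 + \mu 1_n^T$ into the regret, use $1_n^T y = 1$ to cancel the $\mu$-term, and conclude $Reg_2^{B_2} = c\cdot Reg_2^{B_1} \le cCf(T)$. Your additional care in spelling out that a history-only learner map induces the same trajectory distribution under either matrix is a welcome (if minor) elaboration of the ``simulates'' claim that the paper's proof leaves implicit.
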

\begin{proof}
We use the notation $Reg_2(\cdot;B)$ to denote the learner regret on its payoff matrix $B$. Since $B_1$ and $B_2$ are in the same equivalence class, by definition we have $B_2=cB_1+\mu 1_n^T$ for some $c\in \R^+, \mu\in \R^m$.
The interaction history $\{x_t,y_t\}_{t=1}^T$ being $f$-no-regret on $B_1$ implies for some constant $C$:
\begin{equation}
    Reg_2(\{x_t,y_t\}_{t=1}^T; B_1)=\max_{y\in\Delta_n}\sum_{t=1}^T x_t^TB_1y-\sum_{t=1}^T x_t^TB_1y_t\leq C\cdot f(T).
\end{equation}
Therefore, we have the following bound on the learner regret on $B_2$:
\begin{equation}
    \begin{aligned}
        Reg_2(\{x_t,y_t\}_{t=1}^T; B_2)=&\max_{y\in\Delta_n}\sum_{t=1}^T x_t^TB_2y-\sum_{t=1}^T x_t^TB_2y_t\\
        =&\max_{y\in\Delta_n}\sum_{t=1}^T x_t^T(cB_1+\mu 1_n^T)y-\sum_{t=1}^T x_t^T(cB_1+\mu 1_n^T)y_t\\
        =&c\left(\max_{y\in\Delta_n}\sum_{t=1}^T x_t^TB_1y-\sum_{t=1}^T x_t^TB_1y_t\right)+\max_{y\in\Delta_n}\sum_{t=1}^T x_t^T\mu 1_n^T y-\sum_{t=1}^T x_t^T\mu 1_n^Ty_t\\
        =&c\left(\max_{y\in\Delta_n}\sum_{t=1}^T x_t^TB_1y-\sum_{t=1}^T x_t^TB_1y_t\right)+\max_{y\in\Delta_n}\sum_{t=1}^T x_t^T \mu-\sum_{t=1}^T x_t^T\mu\\
        =&c\left(\max_{y\in\Delta_n}\sum_{t=1}^T x_t^TB_1y-\sum_{t=1}^T x_t^TB_1y_t\right)\\
        \leq& cCf(T),
    \end{aligned}
\end{equation}
where the fourth equation holds because $1_n^T y=\sum_{i=1}^n y_i=1$ for all $y\in \Delta_n$. This shows that $\{x_t,y_t\}_{t=1}^T$ is also $f$-no-regret on $B_2$.
\end{proof}

\newpage
\section{Proofs for \Cref{subsec:approx_stack_w_mat_estimation}}
Within the proofs in this section we will make extensive use of the following property that for all $x\in\Delta_m$:
\begin{equation}\label{eq:estimation_difference_bound_by_maxnorm}
    -\|d_i(\mathcal{B},\hat{\mathcal{B}})\|_{\max}1_{n-1}\leq (\mathcal{B}_i^\circ-\hat{\mathcal{B}}_i^\circ)^Tx\leq\|d_i(\mathcal{B},\hat{\mathcal{B}})\|_{\max}1_{n-1}.
\end{equation}
This property can be obtained by bounding each row of the column vector $(\mathcal{B}_i^\circ-\hat{\mathcal{B}}_i^\circ)^Tx$ with the fact that each entry of $x$ is in $[0,1]$.
\subsection{Proof of \Cref{prop:opt_pes_margin}}\label{app:proof_opt_pes_margin}
Suppose we have $d\geq \|d_i(\mathcal{B},\hat{\mathcal{B}})\|_{\max}$.

To prove that $E_i^-(\hat{\mathcal{B}},d)\subseteq E_i(\mathcal{B})$, notice that if $E_i^-(\hat{\mathcal{B}},d)=\emptyset$ the inclusion naturally holds. Otherwise for all $x\in E_i^-(\hat{\mathcal{B}},d)$, it holds that
\begin{equation}
    \begin{aligned}
        (\mathcal{B}_i^\circ)^Tx =& (\mathcal{B}_i^\circ-\hat{\mathcal{B}}_i^\circ)^Tx+(\hat{\mathcal{B}}_i^\circ)^Tx\\
        \stackrel{\text{(i)}}{\leq} & (\mathcal{B}_i^\circ-\hat{\mathcal{B}}_i^\circ)^Tx-d1_{n-1}\\
        \stackrel{\text{(ii)}}{\leq} & (\|d_i(\mathcal{B},\hat{\mathcal{B}})\|_{\max}-d)1_{n-1}\\
        \leq & 0,
    \end{aligned}
\end{equation}
where (i) holds by definition of $E_i^-(\hat{\mathcal{B}},d)$ and (ii) holds due to \eqref{eq:estimation_difference_bound_by_maxnorm}, so that $x\in E_i(\mathcal{B})$.

Similarly to prove $E_i(\mathcal{B})\subseteq E_i^+(\hat{\mathcal{B}},d)$, we only need to consider the case where $E_i(\mathcal{B})\neq \emptyset$ for all $x\in E_i(\mathcal{B})$, we have:
\begin{equation}
    \begin{aligned}
        (\hat{\mathcal{B}}_i^\circ)^Tx=&
        (\hat{\mathcal{B}}_i^\circ-\mathcal{B}_i^\circ)^Tx+(\mathcal{B}_i^\circ)^Tx\\
        \stackrel{\text{(i)}}{\leq} & (\hat{\mathcal{B}}_i^\circ-\mathcal{B}_i^\circ)^Tx\\
        \stackrel{\text{(ii)}}{\leq} & \|d_i(\mathcal{B},\hat{\mathcal{B}})\|_{\max}1_{n-1}\\
        \leq & d1_{n-1}, 
    \end{aligned}
\end{equation}
where again (i) holds by definition of $E_i(\mathcal{B})$ and (ii) uses \eqref{eq:estimation_difference_bound_by_maxnorm}. Therefore $x\in E_i^+(\hat{\mathcal{B}},d)$.

\subsection{Relaxed Empty Facet Condition under Positive Inducibility Gap }\label{app:proof_relaxed_emptyset_condition}
\begin{proposition}\label{prop:relaxed_emptyset_condition}
    $C_i>0$ is equivalent to $E_i=\emptyset$, both imply the following:
    \begin{equation}
        \{x\in \Delta_m:(\mathcal{B}^\circ_i)^Tx\leq \frac{C_i}{2}1_{n-1}\}=\emptyset.
    \end{equation}
\end{proposition}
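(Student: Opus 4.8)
The plan is to reduce the whole statement to an elementary fact about the convex function
$h(x):=\max_{j}\, x^{T}(\mathcal{B}^\circ_i)_{:,j}$ on the simplex, which is exactly the objective appearing in \Cref{def:response_optimality_gap}, so that $C_i=\min_{x\in\Delta_m}h(x)$ (the minimum is attained since $h$ is continuous and $\Delta_m$ is compact, though below only $C_i\le h(x)$ for every $x\in\Delta_m$ is needed).

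First I would record the translation between coordinatewise inequalities and this scalar quantity: for any $t\in\R$ and any $x\in\Delta_m$ one has $(\mathcal{B}^\circ_i)^{T}x\le t\,1_{n-1}$ if and only if $h(x)\le t$, because the $j$-th coordinate of $(\mathcal{B}^\circ_i)^{T}x$ is precisely $x^{T}(\mathcal{B}^\circ_i)_{:,j}$. Taking $t=0$ gives $E_i(\mathcal{B})=\{x\in\Delta_m:\,h(x)\le 0\}$, so $E_i\neq\emptyset$ iff there is an $x$ with $h(x)\le 0$, iff $\min_{x\in\Delta_m}h(x)\le 0$, iff $C_i\le 0$. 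Contrapositively, $E_i=\emptyset$ if and only if $C_i>0$, which is the claimed equivalence.

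For the second assertion, assume $C_i>0$ and suppose toward a contradiction that some $x\in\Delta_m$ satisfies $(\mathcal{B}^\circ_i)^{T}x\le\tfrac{C_i}{2}\,1_{n-1}$. By the translation above this means $h(x)\le\tfrac{C_i}{2}$; but $C_i>0$ forces $\tfrac{C_i}{2}<C_i$, hence $h(x)<C_i=\min_{x'\in\Delta_m}h(x')$, contradicting the definition of $C_i$. Therefore $\{x\in\Delta_m:(\mathcal{B}^\circ_i)^{T}x\le\tfrac{C_i}{2}1_{n-1}\}=\emptyset$, and since $E_i=\emptyset$ is equivalent to $C_i>0$ the same conclusion follows from $E_i=\emptyset$. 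I do not anticipate any genuine obstacle here: the statement is just an unwinding of the min-max definition of $C_i$ together with the equivalence between a componentwise vector inequality and the value of $\max_j$; the only points deserving a word of care are that translation and the fact that the infimum defining $C_i$ is an honest minimum, both immediate from compactness of $\Delta_m$.
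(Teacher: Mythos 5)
Your proof is correct and follows essentially the same route as the paper: both arguments reduce to the observation that the componentwise inequality $(\mathcal{B}^\circ_i)^Tx\leq t\,1_{n-1}$ is equivalent to $\max_j x^T(\mathcal{B}^\circ_i)_{:,j}\leq t$ and then unwind the min--max definition of $C_i$ (the paper phrases the second part as ``nonemptiness of the relaxed set implies $C_i\leq 0$'' via the maximizing index $j^*$, whereas you contradict the minimality of $C_i$ directly, which is marginally cleaner but not a different idea). One small remark: your parenthetical that only $C_i\leq h(x)$ is needed understates matters, since the direction $E_i=\emptyset\Rightarrow C_i>0$ does use attainment of the minimum, which you correctly secure by compactness.
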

\begin{proof}
We first prove that $C_i>0 \Leftrightarrow E_i=\emptyset$. Notice that
\begin{equation}\begin{aligned}
    C_i>0\Leftrightarrow & \forall x\in\Delta_m, \max_j x^T(\mathcal{B}_i^\circ)_{:,j}>0\\
    \Leftrightarrow & \forall x\in\Delta_m, \exists j, \text{s.t. } x^T(\mathcal{B}_i^\circ)_{:,j}>0\\
    \Leftrightarrow & \forall x\in\Delta_m, \exists j, \text{s.t. } (\mathcal{B}_i^\circ)_{:,j}^T x>0\\
    \Leftrightarrow & E_i=\emptyset.
\end{aligned}\end{equation}

We now prove the second part by proving that if there exists $x_0\in\Delta_m$ satisfying
\begin{equation}
    (\mathcal{B}^\circ_i)^Tx_0\leq \frac{C_i}{2}1_{n-1},
\end{equation}
we have $C_i\leq 0$. This is because
\begin{equation}
    (\mathcal{B}^\circ_i)^Tx_0\leq \frac{C_i}{2}1_{n-1}= \frac{1}{2}\min_{x\in\Delta_m} \max_{j} x^T (\mathcal{B}^\circ_i)_{:,j}1_{n-1}
\end{equation}
implies
\begin{equation}
(\mathcal{B}^\circ_i)^T x_0\leq \frac{1}{2} \max_{j} x_0^T (\mathcal{B}^\circ_i)_{:,j}    1_{n-1},
\end{equation}
which means that for $j^*$ attaining the maximum,
\begin{equation}
    x_0^T(\mathcal{B}^\circ_i)_{:,j^*}\leq \frac{1}{2} x_0^T (\mathcal{B}^\circ_i)_{:,j^*},
\end{equation}
which is equivalent to
\begin{equation}
    x_0^T(\mathcal{B}^\circ_i)_{:,j^*}\leq 0.
\end{equation}
This means that
\begin{equation}\begin{aligned}
    C_i=&\min_{x\in\Delta_m} \max_j x^T (\mathcal{B}^\circ_i)_{:,j}\\
    \leq &\max_j x_0^T (\mathcal{B}^\circ_i)_{:,j} \\
    =&x_0^T (\mathcal{B}^\circ_i)_{:,j^*}\\
    \leq& 0,
\end{aligned}\end{equation}
which completes the proof.
\end{proof}

\subsection{Proof of \Cref{thm:facet_identification}}\label{app:proof_facet_identification}
\begin{itemize}
    \item Proof of part 1:\\
    By \Cref{prop:relaxed_emptyset_condition}, $E_i=\emptyset$ implies
    \begin{equation}
        \{x\in \Delta_m:(\mathcal{B}^\circ_i)^Tx\leq \frac{C_i}{2}1_{n-1}\}=\emptyset.
    \end{equation}
    Therefore,
    \begin{equation}
        \{x\in \Delta_m:(\hat{\mathcal{B}}^\circ_i)^T x\leq (\hat{\mathcal{B}}^\circ_i-\mathcal{B}^\circ_i)^T x+\frac{C_i}{2}1_{n-1}\}=\emptyset.
    \end{equation}
    Combining \eqref{eq:estimation_difference_bound_by_maxnorm} and \eqref{eq:opt_pes_cond} we obtain
    \begin{equation}
        (\hat{\mathcal{B}}^\circ_i-\mathcal{B}^\circ_i)^T x\geq -\|d_i(\mathcal{B},\hat{\mathcal{B}})\|_{\max} 1_{n-1}\geq -d1_{n-1}\geq -\frac{C_i}{4}1_{n-1}.
    \end{equation}
    Based on the two equations above, we further have
    \begin{equation}
        \{x\in \Delta_m:(\hat{\mathcal{B}}^\circ_i)^T x\leq \frac{C_i}{4}1_{n-1}\}=\emptyset,
    \end{equation}
    and since $d<\frac{C_i}{4}$, it holds that $E_i^+(\hat{\mathcal{B}},d)=\emptyset$.
    
    \item Proof of part 2:\\
    Let $x_0=\argmin_{x\in\Delta_m}\max_j x^T (\mathcal{B}^\circ_i)_{:,j}$, if $E_i(\mathcal{B})\neq \emptyset$, by definition we have
    \begin{equation}
        (\mathcal{B}_i^\circ)^Tx_0\leq C_i1_{n-1}.
    \end{equation}
    That is,
    \begin{equation}\begin{aligned}
        (\hat{\mathcal{B}}^\circ_i)^T x_0\leq &(\hat{\mathcal{B}}^\circ_i-\mathcal{B}^\circ_i)^T x_0+C_i1_{n-1}\\
        \stackrel{\text{(i)}}{\leq} & \|d_i(\mathcal{B},\hat{\mathcal{B}})\|_{\max} 1_{n-1}+C_i1_{n-1}\\
        \stackrel{\text{(ii)}}{\leq} &(C_i+d)1_{n-1}\\
        \stackrel{\text{(iii)}}{\leq} & -d1_{n-1},
    \end{aligned}\end{equation}
    where we have used \eqref{eq:estimation_difference_bound_by_maxnorm} in (i), the condition \eqref{eq:opt_pes_cond} in (ii) and the assumption $d\leq -C_i/2$ in (iii). This means that $x_0\in E_i^-(\hat{\mathcal{B}},d)$ and hence $E_i^-(\hat{\mathcal{B}},d)\neq\emptyset$.
\end{itemize}

\subsection{Proof of \Cref{thm:sensitivity_analysis}}\label{app:proof_sensitivity_analysis}
To bound the difference term:
\begin{equation}
    V_i^+(A,\hat{\mathcal{B}})-V_i^-(A,\hat{\mathcal{B}}),
\end{equation}
notice that \eqref{opt:optimistic} can be written as:
\begin{equation}\label{opt:optistic_rearrange}
    \begin{aligned}
        \text{maximize} \quad & V^+_i(A,\hat{\mathcal{B}})=x^T A_{:,i}\\
        \text{subject to }\quad & \begin{bmatrix}
            (\hat{\mathcal{B}}_i^\circ)^T\\
            1_m^T\\
            -1_m^T\\
            -I_m
        \end{bmatrix}x\leq \begin{bmatrix}
            d 1_{n-1}\\
            1\\
            -1\\
            0
        \end{bmatrix},
    \end{aligned}
\end{equation}
and similarly for \eqref{opt:pessimistic}:
\begin{equation}\label{opt:pessimistic_rearrange}
    \begin{aligned}
        \text{maximize} \quad & V^-_i(A,\hat{\mathcal{B}})=x^T A_{:,i}\\
        \text{subject to }\quad & \begin{bmatrix}
            (\hat{\mathcal{B}}_i^\circ)^T\\
            1_m^T\\
            -1_m^T\\
            -I_m
        \end{bmatrix}x\leq \begin{bmatrix}
            -d 1_{n-1}\\
            1\\
            -1\\
            0
        \end{bmatrix}.
    \end{aligned}
\end{equation}
For notational simplicity, we use $M$ to denote the matrix $\begin{bmatrix}
            (\hat{\mathcal{B}}_i^\circ)^T\\
            1_m^T\\
            -1_m^T\\
            -I_m
\end{bmatrix}$ and we only need to bound the term $M_{\mathcal{I}}^{-1}\begin{bmatrix} 2\delta 1_k\\ 0_{m-k} \end{bmatrix}$ over all linearly independent index sets $\mathcal{I}$ such that $|\mathcal{I}|=m$ and $k=|[n-1]\cap \mathcal{I}|$ is the number of rows in $M_{\mathcal{I}}$ corresponding to those in $(\hat{\mathcal{B}}_i^\circ)^T$.
We begin with presenting some auxiliary lemmas:
\begin{lemma}\label{lem:linopt_perturbation_bound}
    Consider a linear optimization problem in the following form:
    \begin{equation}\label{opt:Axleqb}
    \begin{aligned}
        \text{maximize } \quad & V=c^Tx\\
        \text{subject to } \quad & Ax\leq b,\\
    \end{aligned}
\end{equation}
and its perturbed problem:
\begin{equation}\label{opt:Axleqb_perturbed}
    \begin{aligned}
        \text{maximize } \quad & V(\delta)=c^Tx\\
        \text{subject to } \quad & Ax\leq b+\delta,\\
    \end{aligned}
\end{equation}
where $x\in \R^{n}$ and $A\in \R^{m\times n}$ for some $m\geq n$ (notice that in the context of this lemma the matrix $A$ and its dimensions $m,n$ are in general not those considered in the broader setting of the game). Assume both problems are feasible and the constraint sets are bounded, recall that $A_{\mathcal{I}}$ denote the matrix constructed by selecting rows of $A$ from some index set $\mathcal{I}\subseteq [m]$, we have that
\begin{equation}\label{eq:sensitivity_upper_bound}
    V(\delta)-V\leq \max_{\mathcal{I}\in \mathcal{S}} c^TA_{\mathcal{I}}^{-1}\delta_{\mathcal{I}},
\end{equation}
where $\mathcal{S}$ denotes the set of all index sets corresponding to rows in any basic solution to \eqref{opt:Axleqb}, or equivalently, the maximization is over all linearly independent row combinations of size $n$.
\end{lemma}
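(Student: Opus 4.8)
The plan is to prove this via linear programming duality, so that the comparison between the two primal optimal values is witnessed by a single dual vector rather than by relating the two optimal vertices directly. Since \eqref{opt:Axleqb} is feasible with a bounded (hence nonempty) feasible region, it attains a finite optimum, and by strong duality so does its dual $\min\{b^Ty : A^Ty=c,\ y\ge 0\}$, with matching value $V$; crucially, the dual‑feasible set $\{y\ge 0 : A^Ty=c\}$ does not depend on $b$, so it is simultaneously the dual‑feasible set of the perturbed problem \eqref{opt:Axleqb_perturbed}.

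Concretely, I would first pick a basic optimal solution $y^\circ$ of the unperturbed dual, so that $b^Ty^\circ=V$, its support $\mathcal{I}_0=\{i:y^\circ_i>0\}$ indexes linearly independent rows of $A$, and $|\mathcal{I}_0|\le n$. Boundedness of $\{x:Ax\le b\}$ forces $A$ to have full column rank $n$ (a nonzero $v$ with $Av=0$ would give an unbounded line inside the feasible set), so $\mathcal{I}_0$ extends to an index set $\mathcal{I}$ of size $n$ with $A_{\mathcal{I}}$ invertible, and such an $\mathcal{I}$ lies in $\mathcal{S}$. Padding $y^\circ$ by zeros on $\mathcal{I}\setminus\mathcal{I}_0$ yields $z\in\R^n$ with $A_{\mathcal{I}}^Tz=A^Ty^\circ=c$, i.e. $z=(A_{\mathcal{I}}^{-1})^Tc$. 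Applying weak duality to the perturbed pair with the (still feasible) dual vector $y^\circ$ gives $V(\delta)\le(b+\delta)^Ty^\circ=V+\delta^Ty^\circ$, and since $y^\circ$ is supported on $\mathcal{I}_0\subseteq\mathcal{I}$ we have $\delta^Ty^\circ=\delta_{\mathcal{I}}^Tz=c^TA_{\mathcal{I}}^{-1}\delta_{\mathcal{I}}$; hence $V(\delta)-V\le c^TA_{\mathcal{I}}^{-1}\delta_{\mathcal{I}}\le\max_{\mathcal{I}\in\mathcal{S}}c^TA_{\mathcal{I}}^{-1}\delta_{\mathcal{I}}$, which is \eqref{eq:sensitivity_upper_bound}.

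I expect the real work to be bookkeeping rather than conceptual: identifying the paper's ``index sets of basic solutions'' with size-$n$ linearly independent row subsets, checking that a basic optimal dual solution exists (the dual polyhedron is pointed and its optimum equals $V$) and that its support has at most $n$ elements with linearly independent rows, and handling the degenerate case $|\mathcal{I}_0|<n$ via the zero-padding step. It is worth recording why a purely primal route is awkward: taking an optimal vertex $x^\delta=A_{\mathcal{I}}^{-1}(b+\delta)_{\mathcal{I}}$ of the perturbed problem and comparing with $A_{\mathcal{I}}^{-1}b_{\mathcal{I}}$ does not work, because $A_{\mathcal{I}}^{-1}b_{\mathcal{I}}$ need not be feasible for the unperturbed problem --- which is exactly what passing to the dual (equivalently, using that $\delta\mapsto V(\delta)$ is concave with subdifferential at $0$ equal to the optimal dual set) circumvents.
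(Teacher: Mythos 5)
Your proof is correct and follows essentially the same route as the paper's: pass to the dual $\min\{b^Ty : A^Ty=c,\ y\ge 0\}$, note its feasible set is independent of $b$, and bound $V(\delta)-V$ by evaluating $\delta^Ty$ at a basic optimal dual solution of the unperturbed problem, identified with an invertible $n\times n$ row submatrix $A_{\mathcal{I}}$. If anything, you are more careful than the paper about the degenerate case $|\mathcal{I}_0|<n$ and about why $A$ has full column rank, both of which the paper's proof leaves implicit.
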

\begin{proof}
    See \Cref{app:proof_linopt_perturbation_bound}.
\end{proof}

\begin{lemma}\label{lem:inverse_upper_bound}
    For arbitrary $\hat{\mathcal{B}}_i^\circ$ and the corresponding $M$ described above, we have:
    \begin{equation}
        \max_{\mathcal{I}\in\mathcal{S}}\left\|M_{\mathcal{I}}^{-1}\begin{bmatrix}
            2d 1_k\\
            0_{m-k}
        \end{bmatrix}\right\|_\infty\leq 2d Sen((\hat{\mathcal{B}}_i^\circ)^T),
    \end{equation}
    where $\mathcal{S}$ denotes the set of all index sets containing linearly independent rows of $M$ with size $m$ and $k=|[n-1]\cap \mathcal{I}|$.
\end{lemma}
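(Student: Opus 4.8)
The plan is to fix a single index set $\mathcal{I}\in\mathcal{S}$ and exploit the block structure of $M_{\mathcal{I}}$ to reduce $\|M_{\mathcal{I}}^{-1}v\|_\infty$ — where $v$ is the perturbation vector, equal to $2d$ in the $k$ coordinates aligned with the chosen rows of $(\hat{\mathcal{B}}_i^\circ)^T$ and $0$ elsewhere — to the $\infty$-norm of the inverse of a small invertible submatrix of $\left[\begin{smallmatrix}(\hat{\mathcal{B}}_i^\circ)^T\\ \epsilon 1_m^T\end{smallmatrix}\right]$, the object that $Sen$ controls. First I would classify the $m$ rows selected by $\mathcal{I}$: let $\mathcal{P}'=[n-1]\cap\mathcal{I}$ index the $k=|\mathcal{P}'|$ rows taken from $(\hat{\mathcal{B}}_i^\circ)^T$; since $1_m^T$ and $-1_m^T$ are linearly dependent, at most one of them lies in $M_{\mathcal{I}}$, contributing a count $\ell\in\{0,1\}$; and the remaining $m-k-\ell$ chosen rows come from $-I_m$ and pin a coordinate set $\mathcal{Q}^c$ of size $m-k-\ell$ to zero. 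Set $\mathcal{Q}=[m]\setminus\mathcal{Q}^c$, so $|\mathcal{Q}|=k+\ell$; if $k=0$ then $v=0_m$ and the bound is immediate, so assume $k\ge 1$.

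Next I would permute the rows of $M_{\mathcal{I}}$ so that the rows from $(\hat{\mathcal{B}}_i^\circ)^T$ come first, then the $\pm 1_m^T$ row (if $\ell=1$), then the rows from $-I_m$, and permute the columns so that those in $\mathcal{Q}$ come first. Because the selected $-I_m$ rows vanish on $\mathcal{Q}$ and restrict to $-I_{m-k-\ell}$ on $\mathcal{Q}^c$, the permuted $M_{\mathcal{I}}$ is block upper-triangular with diagonal blocks $N$ and $-I_{m-k-\ell}$, where $N$ is the $(k+\ell)\times(k+\ell)$ matrix with row block $(\hat{\mathcal{B}}_i^\circ)^T_{\mathcal{P}',\mathcal{Q}}$ and, only when $\ell=1$, the extra row $\pm 1_{\mathcal{Q}}^T$. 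Hence invertibility of $M_{\mathcal{I}}$ forces $N$ to be invertible, and solving $M_{\mathcal{I}}x=v$ yields $x_{\mathcal{Q}^c}=0$ together with $N x_{\mathcal{Q}}=v'$, where $v'$ has its first $k$ entries equal to $2d$ and (when $\ell=1$) a final entry $0$; consequently $\|M_{\mathcal{I}}^{-1}v\|_\infty=\|x_{\mathcal{Q}}\|_\infty$. The step I expect to take the most care is the following observation: when $\ell=1$, the equation contributed by the $\pm 1_m^T$ row is $\pm 1_{\mathcal{Q}}^Tx_{\mathcal{Q}}=0$, i.e. simply $1_{\mathcal{Q}}^Tx_{\mathcal{Q}}=0$, so the solution $x_{\mathcal{Q}}$ does not change if that coefficient $\pm 1$ is replaced by an arbitrary $\epsilon\neq 0$.

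Finally I would cash this in. For every $\epsilon\neq 0$ we have $x_{\mathcal{Q}}=N_\epsilon^{-1}v'$, where for $\ell=1$ the matrix $N_\epsilon$ has row block $(\hat{\mathcal{B}}_i^\circ)^T_{\mathcal{P}',\mathcal{Q}}$ and extra row $\epsilon 1_{\mathcal{Q}}^T$ (invertible, being a nonzero rescaling of one row of $N$), and for $\ell=0$ one simply takes $N_\epsilon=(\hat{\mathcal{B}}_i^\circ)^T_{\mathcal{P}',\mathcal{Q}}$, which is independent of $\epsilon$; therefore $\|x_{\mathcal{Q}}\|_\infty\le 2d\,\|N_\epsilon^{-1}\|_\infty$, using $\|v'\|_\infty=2d$ (here $k\ge 1$). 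In both cases $N_\epsilon$ is an invertible square submatrix of $\left[\begin{smallmatrix}(\hat{\mathcal{B}}_i^\circ)^T\\ \epsilon 1_m^T\end{smallmatrix}\right]$ — with row index set $\mathcal{P}'\cup\{n\}$ and column index set $\mathcal{Q}$ when $\ell=1$, and row index set $\mathcal{P}'$, column index set $\mathcal{Q}$ when $\ell=0$ — so $\|N_\epsilon^{-1}\|_\infty$ is at most the inner maximum in the definition of $Sen((\hat{\mathcal{B}}_i^\circ)^T)$. Taking the infimum over $\epsilon\neq 0$ gives $\|x_{\mathcal{Q}}\|_\infty\le 2d\,Sen((\hat{\mathcal{B}}_i^\circ)^T)$, and maximizing over $\mathcal{I}\in\mathcal{S}$ completes the proof. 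The main obstacle is precisely this reconciliation of the hard-coded $\pm 1$ sitting in the normalization rows of $M$ with the free scaling $\epsilon$ that $Sen$ optimizes over; the observation that the matching entry of the right-hand side is zero is what makes the two agree, and the remainder is bookkeeping about which rows of $M$ were chosen.
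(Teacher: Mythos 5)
Your proposal is correct and follows essentially the same route as the paper's proof: both use the fact that the selected $-I_m$ rows force the corresponding coordinates of the solution to zero, reducing the problem to inverting the square submatrix of $\left[\begin{smallmatrix}(\hat{\mathcal{B}}_i^\circ)^T\\ \epsilon 1_m^T\end{smallmatrix}\right]$ indexed by the remaining rows and columns, and both reconcile the hard-coded $\pm 1_m^T$ row with the free $\epsilon$ in $Sen$ by observing that its right-hand-side entry is unperturbed (zero). Your block-upper-triangular solve is a cleaner packaging of the paper's entrywise identities for $M_{\mathcal{I}}^{-1}$, and your explicit bookkeeping of $\ell\in\{0,1\}$ and the $k=0$ edge case is slightly more careful than the paper's, but the underlying argument is the same.
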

\begin{proof}
    See \Cref{app:proof_inverse_upper_bound}.
\end{proof}

\begin{lemma}\label{lem:inverse_matrix_perturbation_bound}
    For an invertible matrix $B$ and a small perturbation matrix $\delta B$, let $\|\cdot\|$ be any sub-multiplicative matrix norm, if $\|B^{-1}\|\|\delta B\|<1$, $B+\delta B$ is also invertible and its inverse is bounded by:
    \begin{equation}
        \|(B+\delta B)^{-1}\|\leq \frac{\|B^{-1}\|}{1-\|B^{-1}\|\|\delta B\|}.
    \end{equation}
\end{lemma}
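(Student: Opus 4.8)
The plan is to use the classical Neumann-series (Banach lemma) argument. First I would reduce the statement to a perturbation of the identity: write $B+\delta B = B\,(I + B^{-1}\delta B)$ and set $N := B^{-1}\delta B$, so that sub-multiplicativity gives $\norm{N} \le \norm{B^{-1}}\,\norm{\delta B} =: r < 1$. Since $B$ is already invertible, $B + \delta B$ is invertible exactly when $I + N$ is, with $(B+\delta B)^{-1} = (I+N)^{-1} B^{-1}$; hence it suffices to establish invertibility of $I+N$ together with a bound on $\norm{(I+N)^{-1}}$.

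For invertibility, I would consider the partial sums $S_K := \sum_{k=0}^{K} (-N)^k$: since $\norm{(-N)^k} \le r^k$ for $k \ge 1$ and $\sum_k r^k < \infty$, the sequence $(S_K)_K$ is Cauchy in the complete normed space of square matrices and converges to some $S$; telescoping gives $(I+N)S_K = S_K(I+N) = I - (-N)^{K+1} \to I$, so $(I+N)S = S(I+N) = I$ and $I+N$ is invertible with inverse $S$. (Alternatively one can bypass the series: $(I+N)v = 0$ forces $v = -Nv$, hence $\norm{v} \le r\,\norm{v}$ and so $v = 0$, so $I+N$ is injective and therefore invertible in finite dimension.) For the norm bound, from the identity $S = I - NS$ (read off from $S(I+N) = I$) and sub-multiplicativity, $\norm{S} \le \norm{I} + r\,\norm{S}$, which rearranges to $\norm{S} \le \norm{I}/(1-r)$ because $r < 1$; for the induced (operator) matrix norms used throughout the paper $\norm{I} = 1$. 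Then $\norm{(B+\delta B)^{-1}} = \norm{S B^{-1}} \le \norm{S}\,\norm{B^{-1}} \le \norm{B^{-1}}/(1-r)$, and since $r \le \norm{B^{-1}}\,\norm{\delta B}$ and $t \mapsto 1/(1-t)$ is increasing on $[0,1)$, this is at most $\norm{B^{-1}}/(1 - \norm{B^{-1}}\,\norm{\delta B})$, as claimed.

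There is essentially no hard step here---this is a textbook estimate. The only points requiring care are ordering the argument so that invertibility of $B + \delta B$ is secured \emph{before} its inverse is bounded (the displayed inequality would otherwise be vacuous), and noting that pinning down the exact constant $1/(1 - \norm{B^{-1}}\norm{\delta B})$ relies on $\norm{I} = 1$, which holds for the operator norms $\norm{\cdot}_1$ and $\norm{\cdot}_\infty$ actually invoked in \Cref{subsec:approx_stack_w_mat_estimation} (a general sub-multiplicative norm only gives an extra factor $\norm{I} \ge 1$).
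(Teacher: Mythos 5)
Your proof is correct and follows essentially the same route as the paper's: factor $B+\delta B = B(I+B^{-1}\delta B)$, invert the second factor via a convergent Neumann series, and bound the resulting geometric sum term by term. Your added observation that the exact constant $1/(1-\|B^{-1}\|\|\delta B\|)$ relies on $\|I\|=1$ (true for the operator norms actually used in the paper, whereas a general sub-multiplicative norm only guarantees $\|I\|\ge 1$) is a fair refinement that the paper's term-by-term bound glosses over, but it does not alter the argument.
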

\begin{proof}
    See \Cref{app:proof_inverse_matrix_perturbation_bound}.
\end{proof}

\begin{proof}[Proof of \Cref{thm:sensitivity_analysis}]

Compare equations \eqref{opt:optistic_rearrange} and \eqref{opt:pessimistic_rearrange} we obtain the following through \Cref{lem:inverse_upper_bound}:
\begin{equation}\begin{aligned}
    V_i^+(A,\hat{\mathcal{B}})-V_i^-(A,\hat{\mathcal{B}})\leq &\|A_{:,i}\|_\infty\max_{\mathcal{I}}\left\|M_{\mathcal{I}}^{-1}\begin{bmatrix}
        2\delta 1_k\\
        0_{m-k}
    \end{bmatrix}\right\|_\infty\\
    \leq & 2d \|A_{:,i}\|_\infty Sen((\hat{\mathcal{B}}_i^\circ)^T).
\end{aligned}\end{equation}
Since for all invertible submatrices $\begin{bmatrix}
    (\hat{\mathcal{B}}_i^\circ)^{T}\\
        1_m^T
\end{bmatrix}_{\mathcal{P},\mathcal{Q}}$ and $\begin{bmatrix}
    (\mathcal{B}_i^\circ)^{T}\\
        1_m^T
\end{bmatrix}_{\mathcal{P},\mathcal{Q}}$ such that $\|d_i(\mathcal{B},\hat{\mathcal{B}})\|Sen((\mathcal{B}_i^\circ)^T)\leq \frac{1}{2}$, the following inequality
\begin{equation}
    \left\|\begin{bmatrix}
    (\mathcal{B}_i^\circ)^{T}\\
        1_m^T
\end{bmatrix}^{-1}_{\mathcal{P},\mathcal{Q}}\right\|_\infty \left\|\begin{bmatrix}
    (\hat{\mathcal{B}}_i^\circ)^{T}\\
        1_m^T
\end{bmatrix}_{\mathcal{P},\mathcal{Q}}-\begin{bmatrix}
    (\mathcal{B}_i^\circ)^{T}\\
        1_m^T
\end{bmatrix}_{\mathcal{P},\mathcal{Q}}\right\|_\infty<1
\end{equation}
is satisfied, and for every possible combinations of $\mathcal{P}$ and $\mathcal{Q}$, \Cref{lem:inverse_matrix_perturbation_bound} implies:
\begin{equation}
    \begin{aligned}
        &\left\|\begin{bmatrix}
    (\hat{\mathcal{B}}_i^\circ)^{T}\\
        1_m^T
\end{bmatrix}_{\mathcal{P},\mathcal{Q}}^{-1}\right\|_\infty\\
\leq & \frac{\left\|\begin{bmatrix}
    (\mathcal{B}_i^\circ)^{T}\\
        1_m^T
\end{bmatrix}^{-1}_{\mathcal{P},\mathcal{Q}}\right\|_\infty}{1-\left\|\begin{bmatrix}
    (\mathcal{B}_i^\circ)^{T}\\
        1_m^T
\end{bmatrix}^{-1}_{\mathcal{P},\mathcal{Q}}\right\|_\infty \left\|\begin{bmatrix}
    (\hat{\mathcal{B}}_i^\circ)^{T}\\
        1_m^T
\end{bmatrix}_{\mathcal{P},\mathcal{Q}}-\begin{bmatrix}
    (\mathcal{B}_i^\circ)^{T}\\
        1_m^T
\end{bmatrix}_{\mathcal{P},\mathcal{Q}}\right\|_\infty}\\
\leq & \frac{Sen((\mathcal{B}_i^\circ)^T)}{1-Sen((\mathcal{B}_i^\circ)^T)\|d_i(\mathcal{B},\hat{\mathcal{B}})\|_\infty}.
    \end{aligned}
\end{equation}
We have that
\begin{equation}
    Sen((\hat{\mathcal{B}}_i^\circ)^T)\leq \frac{Sen((\mathcal{B}_i^\circ)^T)}{1-d Sen((\mathcal{B}_i^\circ)^T)},
\end{equation}
which leads to the final result:
\begin{equation}\begin{aligned}
     &V_i^+(A,\hat{\mathcal{B}})-V_i^-(A,\hat{\mathcal{B}})\\
     \leq & 2d \|A_{:,i}\|_\infty Sen((\hat{\mathcal{B}}_i^\circ)^T)\\
     \leq &\frac{2d \|A_{:,i}\|_\infty Sen((\mathcal{B}_i^\circ)^T)}{1-d Sen((\mathcal{B}_i^\circ)^T)}\\
     \leq & 4d \|A_{:,i}\|_\infty Sen((\mathcal{B}_i^\circ)^T),
\end{aligned}\end{equation}
where the last inequality holds because $1-d Sen((\mathcal{B}_i^\circ)^T)\geq \frac{1}{2}$.
\end{proof}

\subsection{Refined Statement and Proof of \Cref{thm:stackreg_given_estimation}}\label{app:proof_stackreg_given_estimation}
We first expand the big $O(\cdot)$ notation in the statement of \Cref{thm:stackreg_given_estimation} and obtain the following theorem:
\begin{theorem}
    Under \Cref{as:Ci_neq_0}, if $P_1$ has an estimator $\hat{\mathcal{B}}$ of $\mathcal{B}$ such that $\|d_i(\mathcal{B},\hat{\mathcal{B}})\|_\infty \leq\epsilon=O(g(T)/T), \forall i$ for some $g(T)=o(T)$, then if $P_2$ is using a $f$-no-regret algorithm $\mathcal{A}_2$ with constant $C$, there exists an algorithm $\mathcal{A}_1$ satisfying:
    \begin{equation}\begin{aligned}
    StackReg_1(\mathcal{A}_1,\mathcal{A}_2)=&\left(4\left(\epsilon T+\sqrt{Tf(T)}\right) Sen\left((\mathcal{B}_{i^*}^\circ)^T\right)
    +\frac{C\sqrt{Tf(T)}}{\max_{j,k} \|B(e_j-e_k)\|_\infty}\right)\|A\|_{\max}\\
    =&O\left(\sqrt{Tf(T)}+g(T)\right).
    \end{aligned}\end{equation}
\end{theorem}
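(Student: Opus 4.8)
The plan is an explore-then-commit scheme in which the ``explore'' phase is exactly the estimator $\hat{\mathcal{B}}$ furnished by the hypothesis, so all the work is in the ``commit'' phase and its analysis, which will glue together \Cref{prop:opt_pes_margin}, \Cref{thm:facet_identification} and \Cref{thm:sensitivity_analysis} with the telescoping argument already used for \Cref{thm:stackreg_given_facet_estimation}. Concretely, I would fix the margin $d:=\epsilon+\sqrt{f(T)/T}$, solve the pessimistic program \eqref{opt:pessimistic} with this $d$ for every $i\in[n]$, discard the indices for which $E_i^-(\hat{\mathcal{B}},d)=\emptyset$, let $\hat{i}$ maximize $V_i^-(A,\hat{\mathcal{B}})$ among the survivors, and have $P_1$ commit to the associated action $\hat{x}$ for all $T$ rounds. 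Choosing $d$ is the delicate point: $d$ must dominate the estimation error (I need $d\ge\epsilon\ge\|d_i(\mathcal{B},\hat{\mathcal{B}})\|_\infty$ to invoke \Cref{prop:opt_pes_margin} and \Cref{thm:sensitivity_analysis}), while its \emph{excess} $d-\epsilon=\sqrt{f(T)/T}$ must be large enough to keep the induced best response robust to the learner's regret slack, and $d$ itself must stay below $\min_i|C_i|/4$ and $1/(2\,Sen((\mathcal{B}_i^\circ)^T))$ --- all of which hold for $T$ large since $d\to 0$. The value $\epsilon+\sqrt{f(T)/T}$ is exactly what balances the two error contributions $\sim dT$ and $\sim f(T)/(d-\epsilon)$ that surface at the end.

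The second step is to certify that the committed action induces a \emph{strict} best response. Let $i^*$ attain $V(A,B)=\max_{i:\,E_i(\mathcal{B})\neq\emptyset}V_i(A,\mathcal{B})$. Since $E_{i^*}(\mathcal{B})\neq\emptyset$, \Cref{as:Ci_neq_0} forces $C_{i^*}<0$, so part~2 of \Cref{thm:facet_identification} gives $E_{i^*}^-(\hat{\mathcal{B}},d)\neq\emptyset$ for $T$ large; hence $i^*$ is among the survivors and $V_{\hat{i}}^-(A,\hat{\mathcal{B}})\ge V_{i^*}^-(A,\hat{\mathcal{B}})$. For whatever $\hat{i}$ is actually played, \Cref{prop:opt_pes_margin} gives $\hat{x}\in E_{\hat{i}}^-(\hat{\mathcal{B}},d)\subseteq E_{\hat{i}}(\mathcal{B})$, and feeding \eqref{eq:estimation_difference_bound_by_maxnorm} and the normalizing constant $N:=\max_{j,k}\|B(e_j-e_k)\|_\infty$ (from the definition of $\mathcal{B}_{\hat{i}}^\circ$) through the constraint $(\hat{\mathcal{B}}_{\hat{i}}^\circ)^T\hat{x}\le-d1_{n-1}$ yields $\hat{x}^TBe_{\hat{i}}-\hat{x}^TBe_k\ge N(d-\epsilon)=N\sqrt{f(T)/T}$ for every $k\neq\hat{i}$. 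So $e_{\hat{i}}$ is the unique pure best response to $\hat{x}$, strict by margin $N\sqrt{f(T)/T}$, and no tie-breaking subtlety arises.

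The third step mirrors the proof of \Cref{thm:stackreg_given_facet_estimation}: split $StackReg_1$ into the value-loss term $T\bigl(V(A,B)-V_{\hat{i}}^-(A,\hat{\mathcal{B}})\bigr)$ and the deviation-loss term $\sum_{t=1}^T\hat{x}^TA(e_{\hat{i}}-y_t)$, using $\hat{x}^TAe_{\hat{i}}=V_{\hat{i}}^-(A,\hat{\mathcal{B}})$. For the value loss, \Cref{prop:opt_pes_margin} gives $V(A,B)=V_{i^*}(A,\mathcal{B})\le V_{i^*}^+(A,\hat{\mathcal{B}})$, and \Cref{thm:sensitivity_analysis} --- applicable because both programs for $i^*$ are feasible and $\|d_{i^*}(\mathcal{B},\hat{\mathcal{B}})\|_\infty\le d\le 1/(2\,Sen((\mathcal{B}_{i^*}^\circ)^T))$ for $T$ large --- bounds $V_{i^*}^+(A,\hat{\mathcal{B}})-V_{i^*}^-(A,\hat{\mathcal{B}})$ by $4d\|A_{:,i^*}\|_\infty Sen((\mathcal{B}_{i^*}^\circ)^T)$; together with $V_{\hat{i}}^-(A,\hat{\mathcal{B}})\ge V_{i^*}^-(A,\hat{\mathcal{B}})$ this caps the value loss at $4dT\|A\|_{\max}Sen((\mathcal{B}_{i^*}^\circ)^T)$. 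For the deviation loss, the $f$-no-regret guarantee of $\mathcal{A}_2$ applied to the constant optimizer sequence $\{\hat{x}\}_{t=1}^T$ gives $\E[\sum_t\hat{x}^TB(e_{\hat{i}}-y_t)]\le Cf(T)$, and combining this with the strict-dominance margin exactly as in \eqref{eq:proof_stackreg_given_facet_estimation_1} yields $\E[\|\sum_t(e_{\hat{i}}-y_t)\|_1]\le 2Cf(T)/\bigl(N\sqrt{f(T)/T}\bigr)=2C\sqrt{Tf(T)}/N$, so a H\"older step caps the expected deviation loss at $2C\|A\|_{\max}\sqrt{Tf(T)}/N$. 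Adding the two caps and substituting $dT=\epsilon T+\sqrt{Tf(T)}$ and $\epsilon T=O(g(T))$ (since $\epsilon=O(g(T)/T)$) gives the stated explicit bound, hence $StackReg_1(\mathcal{A}_1,\mathcal{A}_2)=O(\sqrt{Tf(T)}+g(T))$.

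The step I expect to be the main obstacle is the joint calibration of $d$ in the first paragraph: it must simultaneously satisfy several one-sided inequalities, and the ``right'' order $\epsilon+\sqrt{f(T)/T}$ only becomes apparent after one has written down the value-loss and deviation-loss bounds --- together with the routine but necessary bookkeeping to carry every estimate through in expectation over $\mathcal{A}_2$'s internal randomness. Once $d$ is pinned down, each remaining step is a direct invocation of a result from \Cref{subsec:facets_and_equivalence_classes} or \Cref{subsec:approx_stack_w_mat_estimation}, or a routine H\"older/telescoping computation transplanted from the proof of \Cref{thm:stackreg_given_facet_estimation}.
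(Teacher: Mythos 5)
Your proposal is correct and follows essentially the same route as the paper's proof: commit to the pessimistic program with margin $\epsilon+\tilde{f}(T)/T$ (the paper's $\tilde{f}(T)=\sqrt{Tf(T)}$ is exactly your $d=\epsilon+\sqrt{f(T)/T}$), certify feasibility of $i^*$ via \Cref{thm:facet_identification}, bound the value loss with \Cref{prop:opt_pes_margin} and \Cref{thm:sensitivity_analysis}, and bound the deviation loss via the no-regret guarantee and the strict-dominance margin. The only difference is an immaterial factor of $2$ in the deviation-loss term (your accounting of $\|\sum_t(e_{\hat i}-y_t)\|_1$ is, if anything, the more careful one), which does not affect the stated $O(\sqrt{Tf(T)}+g(T))$ bound.
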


\begin{proof}
Consider the algorithm $\mathcal{A}_1$ that commits to the solution $(x^-,i^-)$ to:
    \begin{equation}\label{opt:pessimistic_with_extra_pessimism}
    \begin{aligned}
        \text{maximize}_{i,x} \quad & V^-_i(A,\hat{\mathcal{B}})=x^T A_{:,i}\\
        \text{subject to} \quad & (\hat{\mathcal{B}}_{i}^\circ)^Tx \leq -(\epsilon+\tilde{f}(T)/T) \mathrm{1}_{n-1},\\
        & x\in\Delta_m,i\in [n].
    \end{aligned}
\end{equation}
where $\tilde{f}(T)$ is some function satisfying $\tilde{f}(T)=o(T)$ and $f(T)=o(\tilde{f}(T))$. Since $\epsilon + \tilde{f}(T)/T\geq \|d_i(\mathcal{B},\hat{\mathcal{B}})\|_\infty$, \Cref{prop:opt_pes_margin} guarantees that $e_{i^-}$ is a best response to $x^-$ under $\mathcal{B}$.

Let $(x^*,i^*)$ denote the Stackelberg equilibrium of the game. We have that $E_{i^*}(\mathcal{B})\neq \emptyset$ and since $\epsilon=O(g(T)/T)$, $\epsilon+\tilde{f}(T)/T$ goes to zero as $T\rightarrow \infty$, \Cref{as:Ci_neq_0} and \Cref{thm:facet_identification} guarantees that $V_{i^*}^-(A,\hat{\mathcal{B}})$ is well-defined (and therefore so is $V_{i^*}^+(A,\hat{\mathcal{B}})$) for large enough $T$.

Since $\mathcal{A}_2$ is $f$-no-regret, it holds that
\begin{equation}
    \E_{\mathcal{A}_2}\left[\sum_{t=1}^T (x^-)^TB(e_{i^-}-y_t)\right]\leq C\cdot f(T)
\end{equation}
for some constant $C$.
We have that
\begin{equation}\label{eq:proof_stackreg_given_estimation_eq1}\begin{aligned}
    &\E_{\sA_1,\sA_2}\left[\sum_{t=1}^T x_t^TAy_t\right]\\
    =&\E_{\sA_1,\sA_2}\left[\sum_{t=1}^T (x^-)^TAy_t\right]\\
    =&\E_{\sA_1,\sA_2}\left[\sum_{t=1}^T (x^-)^TAe_{i^-}\right]+\E_{\sA_1,\sA_2}\left[\sum_{t=1}^T (x^-)^TA(y_t-e_{i^-})\right].
\end{aligned}\end{equation}
For the first term, notice that for large enough $T$, $\epsilon+\tilde{f}(T)/T\leq \frac{1}{2Sen((\mathcal{B}_{i^*}^\circ)^T)}$ and get:

\begin{equation}\label{eq:thmproof_first_term}
    \begin{aligned}
        &\E_{\sA_1,\sA_2}\left[\sum_{t=1}^T (x^-)^TAe_{i^-}\right]\\
        =& TV_{i^-}^-(A,\hat{\mathcal{B}})\\
        \stackrel{\text{(i)}}{\geq} & TV_{i^*}^-(A,\hat{\mathcal{B}})\\
        =&TV_{i^*}^+(A,\hat{\mathcal{B}})-(TV_{i^*}^+(A,\hat{\mathcal{B}})-TV_{i^*}^-(A,\hat{\mathcal{B}}))\\
        \stackrel{\text{(ii)}}{\geq} & TV(A,\mathcal{B})-4(\epsilon T+\tilde{f}(T)) \|A_{:,i^*}\|_\infty Sen((\mathcal{B}_{i^*}^\circ)^T),
    \end{aligned}
\end{equation}
where (i) holds since we are taking maximum in \eqref{opt:pessimistic_with_extra_pessimism}, and (ii) holds by \Cref{prop:opt_pes_margin} and \Cref{thm:sensitivity_analysis}.

for the second term, since $x^-$ satisfies
\begin{equation}
    (\hat{\mathcal{B}}_{i^-}^\circ)^Tx^- \leq -(\epsilon+\tilde{f}(T)/T) \mathrm{1}_{n-1},
\end{equation}
we have
\begin{equation}\begin{aligned}
    (\mathcal{B}_{i^-}^\circ)^Tx^-=&(\hat{\mathcal{B}}_{i^-}^\circ)^Tx^-+(\mathcal{B}_{i^-}^\circ-\hat{\mathcal{B}}_{i^-}^\circ)^T x^- \\
    \stackrel{\text{(i)}}{\leq} &(-(\epsilon+\tilde{f}(T)/T)+\|d_{i^-}(\mathcal{B},\hat{\mathcal{B}})\|_\infty) \mathrm{1}_{n-1}\\
    \stackrel{\text{(ii)}}{\leq} & -\frac{\tilde{f}(T)}{T} 1_{n-1},
\end{aligned}\end{equation}
where (i) holds by definition of $d_i(\cdot,\cdot)$ and \eqref{eq:estimation_difference_bound_by_maxnorm}, and (ii) holds by the assumption that $\|d_{i^-}(\mathcal{B},\hat{\mathcal{B}})\|_\infty\leq \epsilon$.

Thus $e_{i^-}$ is the unique best response to $x^-$ and H\"older's inequality yields
\begin{equation}\begin{aligned}
    (x^-)^TB\E_{\mathcal{A}_2}\left[\sum_{t=1}^T (e_{i^-}-y_t)\right]
    \geq \frac{\tilde{f}(T)}{T}\max_{j,k} \|B(e_j-e_k)\|_\infty \left\|\E_{\mathcal{A}_2}\left[\sum_{t=1}^T (y_t-e_{i^-})\right]\right\|_1.
\end{aligned}\end{equation}
Therefore,
\begin{equation}
    \left\|\E_{\mathcal{A}_2}\left[\sum_{t=1}^T (y_t-e_{i^-})\right]\right\|_1\leq \frac{Cf(T)T}{\tilde{f}(T)\max_{j,k} \|B(e_j-e_k)\|_\infty}.
\end{equation}
Thus we have:
\begin{equation}\label{eq:thmproof_second_term}\begin{aligned}
    &\E_{\sA_1,\sA_2}\left[\sum_{t=1}^T (x^-)^TA(y_t-e_{i^-})\right]\\
    \stackrel{\text{(i)}}{\geq}& -\|(x^-)^TA\|_\infty \left\|\E_{\mathcal{A}_2}\left[\sum_{t=1}^T (y_t-e_{i^-})\right]\right\|_1\\
    \stackrel{\text{(ii)}}{\geq}& -\|A\|_{\max} \frac{Cf(T)T}{\tilde{f}(T)\max_{j,k} \|B(e_j-e_k)\|_\infty},
\end{aligned}\end{equation}
where we apply H\"older again in (i) and use the fact that $x^-\in\Delta_m$ in (ii).

Combining \eqref{eq:proof_stackreg_given_estimation_eq1}, \eqref{eq:thmproof_first_term} and \eqref{eq:thmproof_second_term} we obtain as $T\rightarrow\infty$:
\begin{equation}
    \begin{aligned}
        &\E_{\sA_1,\sA_2}\left[\sum_{t=1}^T x_t^TAy_t\right]\\
        \geq& TV(A,\mathcal{B})- 4(\epsilon T+\tilde{f}(T)) \|A_{:,i^*}\|_\infty Sen((\mathcal{B}_{i^*}^\circ)^T)\\
        &-\|A\|_{\max} \frac{Cf(T)T}{\tilde{f}(T)\max_{j,k} \|B(e_j-e_k)\|_\infty}\\
        \geq & TV(A,\mathcal{B})-4(g(T)+\tilde{f}(T)) \|A_{:,i^*}\|_\infty Sen((\mathcal{B}_{i^*}^\circ)^T)\\
        &-\|A\|_{\max} \frac{Cf(T)T}{\tilde{f}(T)\max_{j,k} \|B(e_j-e_k)\|_\infty}.
    \end{aligned}
\end{equation}
taking $\tilde{f}(T)=\sqrt{Tf(T)}$ we have:
\begin{equation}\begin{aligned}
    &StackReg_1(\mathcal{A}_1,\mathcal{A}_2)\\
    \leq& 4\left(\epsilon T+\sqrt{Tf(T)}\right) \|A_{:,i^*}\|_\infty Sen\left((\mathcal{B}_{i^*}^\circ)^T\right)
    +\|A\|_{\max} \frac{C\sqrt{Tf(T)}}{\max_{j,k} \|B(e_j-e_k)\|_\infty}\\
    \leq & \left(4\left(\epsilon T+\sqrt{Tf(T)}\right) Sen\left((\mathcal{B}_{i^*}^\circ)^T\right)
    +\frac{C\sqrt{Tf(T)}}{\max_{j,k} \|B(e_j-e_k)\|_\infty}\right)\|A\|_{\max}\\
    =&O(g(T)+\sqrt{Tf(T)}).
\end{aligned}\end{equation}
This completes the proof of \Cref{thm:stackreg_given_estimation}.
\end{proof}

\newpage
\section{Lower Bound on Stackelberg Regret against General $f$-no-regret Learners}\label{app:proof_info_theoretic_lower_bound}
In this section we provide \Cref{thm:info_theoretic_lower_bound}, which shows that even if the optimizer knows $B$, the learner still has a no-regret algorithm with regret budget $f$ that can lead to a $\sqrt{Tf(T)}$ Stackelberg regret of $P_1$, indicating that our Stackelberg regret bound in \Cref{thm:stackreg_given_estimation} is essentially optimal.
\begin{theorem}\label{thm:info_theoretic_lower_bound}
    Consider a given function $f(T)=o(T)$ which serves as the regret budget of $P_2$, there exists a game instance $G=(A,B)$ that satisfies \Cref{as:Ci_neq_0} and an $f$-no-regret learner algorithm $\mathcal{A}_2$ that can be used by $P_2$ such that for all mixed strategy $x\in\Delta_m$, the non-adaptive algorithm $\mathcal{A}_1$ that plays $x_t=x$ at all time steps has Stackelberg regret at least $\Omega(\sqrt{Tf(T)})$.
\end{theorem}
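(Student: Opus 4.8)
The plan is to exhibit a single $2\times 2$ game instance together with a concrete ``budget-spending'' learner algorithm, and then show that \emph{every} constant optimizer strategy either collapses into a low-value region (incurring $\Omega(T)$ Stackelberg regret) or is punished by the learner's deviations (incurring $\Omega(\sqrt{Tf(T)})$ regret). Concretely I take $m=n=2$, $B=\left[\begin{smallmatrix}1&0\\0&1\end{smallmatrix}\right]$ and $A=\left[\begin{smallmatrix}0&1\\2&0\end{smallmatrix}\right]$. Writing $x=(p,1-p)$, the learner's best response is $e_1$ for $p>1/2$ and $e_2$ for $p<1/2$, so the facets are $E_1=\{p\ge 1/2\}$, $E_2=\{p\le 1/2\}$; a direct computation gives $C_1=C_2=-1\neq 0$, so \Cref{as:Ci_neq_0} holds. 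The per-round optimizer values are $V_1(p)=x^TA_{:,1}=2(1-p)$ and $V_2(p)=x^TA_{:,2}=p$, and solving the two linear programs \eqref{opt:exact_facet} gives Stackelberg value $V(A,B)=\max\{\max_{p\ge 1/2}2(1-p),\ \max_{p\le 1/2}p\}=1$, attained at $p^*=1/2$ with the learner playing $e_1$ under optimistic tie-breaking. Three features of this instance drive the bound: (i) $V_1$ is strictly decreasing on $E_1$, so committing at $p^*+\epsilon$ loses $2\epsilon$ per round even if the learner best-responds; (ii) $V_2\le 1/2<1$ everywhere on $E_2$ (and at $p^*$ itself the learner is free to play $e_2$), so committing in $E_2\cup\{p^*\}$ loses $\Omega(T)$; (iii) $V_1(p)-V_2(p)=2-3p\ge 1/4$ for $p\le 1/2+\tfrac1{12}$, so whenever $x$ sits just inside $E_1$, each round the learner plays $e_2$ rather than its best response $e_1$ costs the optimizer a constant.

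\textbf{The learner algorithm.} Let $\mathcal{A}_2$ be the analogue of the algorithm $\tilde{\mathcal{A}}_2$ from \Cref{app:proof_impossibility_wo_update_rule}: it plays the fixed pure action $e_2$ as long as its running learner regret stays below a budget $\tfrac12 C f(T)$, and the first time the running regret exceeds this budget it switches permanently to online mirror ascent. Online mirror ascent is $O(\sqrt T)$-no-regret, and by the same subadditivity of regret across the switch point used in \Cref{app:proof_impossibility_wo_update_rule}, on \emph{every} optimizer sequence the total learner regret of $\mathcal{A}_2$ is at most $\tfrac12 Cf(T)+O(1)+O(\sqrt T)=O(f(T))$, where we use $f(T)=\Omega(\sqrt T)$ (otherwise no $f$-no-regret learner algorithm exists at all and the statement is vacuous). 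Hence $\mathcal{A}_2$ is $f$-no-regret.

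\textbf{Lower bound against constant $x=(p,1-p)$.} If $p\le 1/2+\tfrac{Cf(T)}{4T}$, then playing $e_2$ for all $T$ rounds keeps the running regret at most $T(2p-1)\le\tfrac12 Cf(T)$, so $\mathcal{A}_2$ never switches and the optimizer collects $T\,V_2(p)=Tp\le T/2+o(T)$, i.e.\ Stackelberg regret $\Omega(T)$. Otherwise $p=1/2+\epsilon$ with $\epsilon>\tfrac{Cf(T)}{4T}$; the running regret from playing $e_2$ equals $2\epsilon t$, so $\mathcal{A}_2$ switches at a time $T_s=\lceil \tfrac{Cf(T)}{4\epsilon}\rceil<T$, and on the constant sequence the subsequent mirror-ascent phase places total mass $O(\sqrt T/(2p-1))$ on the suboptimal action $e_2$. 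Consequently the optimizer's total payoff is at most $T_sV_2(p)+(T-T_s)V_1(p)+O(\sqrt T)$, which (using $1-V_1(p)=2\epsilon$) yields
\[
StackReg_1(\mathcal{A}_1,\mathcal{A}_2)\ \ge\ 2\epsilon T + T_s\bigl(V_1(p)-V_2(p)\bigr) - O(\sqrt T).
\]
If $\epsilon>\tfrac1{12}$, the first term is $\ge T/6$ while $T_s=\Theta(f(T)/\epsilon)=o(T)$, so the bound is $\Omega(T)$. If $\tfrac{Cf(T)}{4T}<\epsilon\le\tfrac1{12}$, then $V_1(p)-V_2(p)\ge\tfrac14$ and $T_s\ge\tfrac{Cf(T)}{4\epsilon}$, so the bound is at least $2\epsilon T+\tfrac{Cf(T)}{16\epsilon}-O(\sqrt T)$; the $\epsilon$ minimizing the first two terms is $\epsilon^\star\asymp\sqrt{f(T)/T}$, which lies in the admissible range $(\tfrac{Cf(T)}{4T},\tfrac1{12}]$ for all large $T$ precisely because $f(T)/T\to 0$ and $f(T)=\Omega(\sqrt T)$, and the resulting value is $\Theta(\sqrt{Tf(T)})$, absorbing the $O(\sqrt T)$ correction since $\sqrt{Tf(T)}\gg\sqrt T$. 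Since the displayed bound is convex in $\epsilon$, its minimum over the admissible sub-range equals its unconstrained minimum, so taking the worst case over $p$ gives $StackReg_1\ge\Omega(\sqrt{Tf(T)})$ for every constant $x$.

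\textbf{Main obstacle.} The technical heart is the second bullet of the case analysis: one must verify that the near-optimal commitment $\epsilon^\star\asymp\sqrt{f(T)/T}$ is simultaneously (a) large enough that $\mathcal{A}_2$ actually exhausts its budget and switches before the horizon (so $T_s<T$ and the learner genuinely reverts to best-responding), and (b) small enough that both the ``off-$p^*$'' loss $2\epsilon T$ and the ``budget-spending'' loss $\asymp f(T)/\epsilon$ are in force and that $V_1-V_2$ is still bounded below by a constant --- all of which hold exactly because $f(T)$ sits strictly between $\Omega(\sqrt T)$ and $o(T)$. A secondary point, needed for the upper bound on the optimizer's payoff in the mirror-ascent phase, is the estimate that a no-regret learner on a constant sequence $x=(p,1-p)$ with $p$ bounded away from $1/2$ puts only $O(\sqrt T/(2p-1))$ total weight on the dominated action; this is immediate from the definition of regret, but its $1/(2p-1)$ dependence must be tracked carefully through the case split on $\epsilon$.
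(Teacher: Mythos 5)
Your proposal is correct and follows essentially the same construction as the paper's proof: a $2\times 2$ game with $B=I$ whose Stackelberg point sits on the facet boundary $p=1/2$, together with a learner that spends its $f(T)$ regret budget playing the optimizer-unfavorable action for $\Theta(f(T)/\epsilon)$ rounds when the optimizer commits at distance $\epsilon$ from the boundary, forcing the trade-off $\epsilon T + f(T)/\epsilon=\Omega(\sqrt{Tf(T)})$. The only cosmetic differences are your choice of $A$ (which necessitates the extra $\epsilon>1/12$ sub-case, absent from the paper's cleaner $A=\bigl[\begin{smallmatrix}0&0\\3&1\end{smallmatrix}\bigr]$) and your adaptive budget-tracking learner in place of the paper's precomputed defection schedule, the latter costing you the harmless assumption $f(T)=\Omega(\sqrt{T})$, which the paper itself invokes elsewhere.
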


\begin{proof}
Consider the game instance $G=(A,B)$ where:
\begin{equation}
    A=\begin{bmatrix}
                0 & 0\\
                3 & 1
            \end{bmatrix},
            B=\begin{bmatrix}
                1 & 0\\
                0 & 1
            \end{bmatrix},
\end{equation}
whose unique Stackelberg equilibrium is:
\begin{equation}
    x^*=(\frac{1}{2},\frac{1}{2})^T, y^*=(1,0)^T
\end{equation}
with a Stackelberg value $V(A,B)=\frac{3}{2}$. For an algorithm $\mathcal{A}_1$ that outputs a fixed optimizer action $x=(x_1,x_2)^T$ and given a sequence $\{y_t\}_{t=1}^T$ of learner actions, the learner regret can be expressed as:
\begin{equation}
    Reg_2(\{x_t=x,y_t\}_{t=1}^T)=T\max\{x_1,x_2\}-\sum_{t=1}^T x^Ty_t.
\end{equation}
Consider the learner algorithm $\mathcal{A}_2$ to be:
\begin{enumerate}
    \item If $x_2\geq x_1$, play $(0,1)^T$;
    \item Otherwise, play $(0,1)^T$ for $\frac{f(T)}{1-2x_2}$ rounds, and $(1,0)^T$ for the remaining $T-\frac{f(T)}{1-2x_2}$ rounds.
\end{enumerate}
The learner regret of $\mathcal{A}_2$ if $x_2\geq x_1$ is:
\begin{equation}
    \begin{aligned}
        Reg_2(\mathcal{A}_2,\{x_t=x\}_{t=1}^T)=Tx_2 -Tx_2=0,
    \end{aligned}
\end{equation}
and if $x_2<x_1$, we have:
\begin{equation}
    \begin{aligned}
        Reg_2(\mathcal{A}_2,\{x_t=x\}_{t=1}^T)=&Tx_1 -\frac{x_2f(T)}{1-2x_2}-(T-\frac{f(T)}{1-2x_2})x_1\\
        =&T(1-x_2)-\frac{x_2f(T)}{1-2x_2}-(T-\frac{f(T)}{1-2x_2})(1-x_2)\\
        =&\frac{(1-x_2)f(T)}{1-2x_2}-\frac{x_2f(T)}{1-2x_2}\\
        =&f(T),
    \end{aligned}
\end{equation}
and therefore $\mathcal{A}_2$ is $f$-no-regret. 

If the optimizer wants to achieve sublinear Stackelberg regret, $x$ must satisfy $x_2<x_1$, or otherwise $\mathcal{A}_2$ will stick to $(0,1)^T$ and incur a $\Theta(T)$ Stackelberg regret, now we calculate the Stackelberg regret of the optimizer when $x_2<x_1$:
\begin{equation}
    \begin{aligned}
        StackReg_1(\{x_t=x,y_t\}_{t=1}^T)=&\frac{3}{2}T-(3(T-\frac{f(T)}{1-2x_2})+\frac{f(T)}{1-2x_2})\cdot x_2\\
        =&(\frac{3}{2}-3x_2)T+\frac{2x_2}{1-2x_2}f(T).
    \end{aligned}
\end{equation}
The minimum Stackelberg regret over $x_2\in[0,\frac{1}{2})$ can be achieved by taking
\begin{equation}
    x_2=\frac{1}{2}-\sqrt{\frac{f(T)}{6T}},
\end{equation}
where $StackReg_1(\{x_t=x,y_t\}_{t=1}^T)=\Theta(\sqrt{Tf(T)})$, so in general the Stackelberg regret is $\Omega(\sqrt{Tf(T)})$.
\end{proof}

\newpage
\section{Algorithms and Proofs for \Cref{sec:Stack_with_update_rule}}
Before we start presenting the algorithms and proofs, we first prove an auxiliary lemma which suggests that in an explore-then-commit style algorithm, any interaction history that has length $o(T)$ before committing will have no impact on the asymptotic learner regret, stated formally as follows:
\begin{lemma}\label{lem:Stack_with_update_rule_proof}
    Consider an optimizer action $\tilde{x}\in\Delta_m$, if the optimizer action sequence $\{x_t\}_{t=1}^T$ satisfies $x_t=\tilde{x}, \forall t>\tau$ for some $\tau=O(f(T))$, then the interaction sequence $\{x_t,y_t\}_{t=1}^T$ is $f$-no-regret if and only if
    \begin{equation}
        Reg_2(\{x_t,y_t\}_{t=\tau +1}^T)\leq C\cdot f(T)
    \end{equation}
    for some constant $C$, and consequently, $\mathcal{A}_2$ is $f$-no-regret on $\{x_t\}_{t=1}^T$ if and only if
    \begin{equation}\label{eq:auxlemma_Stack_with_update_rule_proof}
        \tilde{x}^TB\E_{\mathcal{A}_2}\left[\sum_{t=\tau+1}^T (\tilde{y}-y_t)\right]\leq C\cdot f(T)
    \end{equation}
    for some constant $C$.
\end{lemma}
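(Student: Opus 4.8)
The plan is to show that deleting the first $\tau=O(f(T))$ rounds perturbs the learner's cumulative regret by only an additive $O(f(T))$ amount, so the $f$-no-regret property is unaffected, and then to evaluate the truncated regret in closed form using that $x_t\equiv\tilde{x}$ on the tail. The first step is the key estimate: for \emph{any} optimizer sequence $\{x_t\}_{t=1}^T$ and learner sequence $\{y_t\}_{t=1}^T$,
\[
\bigl| Reg_2(\{x_t,y_t\}_{t=1}^T)-Reg_2(\{x_t,y_t\}_{t=\tau+1}^T)\bigr|\;\le\;2\tau\,\|B\|_{\max}.
\]
To prove this I would split both sums in the definition of $Reg_2$ at index $\tau$. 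The realized-payoff terms then differ by $\sum_{t=1}^{\tau}x_t^TBy_t$, whose absolute value is at most $\tau\|B\|_{\max}$ since $|x_t^TBy_t|\le\|x_t\|_1\|By_t\|_\infty\le\|B\|_{\max}$; and the two ``best fixed response'' maxima $\max_y\sum_{t=1}^Tx_t^TBy$ and $\max_y\sum_{t=\tau+1}^Tx_t^TBy$ differ by at most $\tau\|B\|_{\max}$ as well, by plugging the maximizer of each objective into the other and invoking the same per-round bound. Since $\tau=O(f(T))$ and $f(T)>0$ for all large $T$, the right-hand side is eventually bounded by $c_0\|B\|_{\max}f(T)$; hence $Reg_2(\{x_t,y_t\}_{t=1}^T)\le Cf(T)$ for some constant $C$ if and only if $Reg_2(\{x_t,y_t\}_{t=\tau+1}^T)\le C'f(T)$ for some constant $C'$, the $O(f(T))$ discrepancy being absorbed into the constant. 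This gives the first equivalence.

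For the ``consequently'' part I would compute the truncated regret explicitly. Because $x_t=\tilde{x}$ for every $t>\tau$, and writing $\tilde{y}$ for a best response to $\tilde{x}$ (so $\max_{y\in\Delta_n}\tilde{x}^TBy=\tilde{x}^TB\tilde{y}$, a value shared by all elements of $BR(B,\tilde{x})$), we have
\[
Reg_2(\{x_t,y_t\}_{t=\tau+1}^T)=\max_{y\in\Delta_n}\sum_{t=\tau+1}^T\tilde{x}^TBy-\sum_{t=\tau+1}^T\tilde{x}^TBy_t=\tilde{x}^TB\sum_{t=\tau+1}^T(\tilde{y}-y_t).
\]
Since $\{x_t\}_{t=1}^T$ is a prescribed, non-adaptive sequence, $\max_y\sum_t x_t^TBy$ is deterministic, so $Reg_2(\mathcal{A}_2,\{x_t\}_{t=1}^T)=\E_{\mathcal{A}_2}\bigl[Reg_2(\{x_t,y_t\}_{t=1}^T)\bigr]$. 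Taking expectations in the key estimate and in the displayed identity, and absorbing the $O(f(T))$ error into the constant once more, yields that $\mathcal{A}_2$ is $f$-no-regret on $\{x_t\}_{t=1}^T$ if and only if $\tilde{x}^TB\,\E_{\mathcal{A}_2}\bigl[\sum_{t=\tau+1}^T(\tilde{y}-y_t)\bigr]\le Cf(T)$ for some constant $C$, which is \eqref{eq:auxlemma_Stack_with_update_rule_proof}.

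I do not anticipate a genuine difficulty: the argument is the standard ``a prefix of sublinear length is negligible'' truncation bound. The two points that require care are (i) keeping the existential ``for some constant $C$'' coherent along the chain of equivalences, which uses $\tau=O(f(T))$ together with $f(T)$ being eventually positive so that $2\tau\|B\|_{\max}\le cf(T)$ for large $T$; and (ii) justifying that the maximum defining $Reg_2$ commutes with the expectation over $\mathcal{A}_2$, which holds here precisely because the optimizer follows a fixed (non-random) action sequence $\{x_t\}_{t=1}^T$ in the statement.
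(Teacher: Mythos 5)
Your proposal is correct and follows essentially the same route as the paper: both arguments show that the prefix of length $\tau=O(f(T))$ perturbs the regret by at most an additive $O(\tau)\cdot\|B\|_{\max}=O(f(T))$ term (you via a single two-sided truncation estimate, the paper via subadditivity of regret for one direction and substituting the comparator $\tilde{y}$ for the other), and then both evaluate the tail regret as $\tilde{x}^TB\sum_{t=\tau+1}^T(\tilde{y}-y_t)$ and pass to expectations. Your explicit remark that the maximum in $Reg_2$ is deterministic because $\{x_t\}_{t=1}^T$ is fixed is a small point the paper leaves implicit, but the substance is identical.
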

\begin{proof}
    Let $\tilde{y}\in BR(B,\tilde{x})$ be a best response to $\tilde{x}$. 
    For the ``if'' direction, on observing that:
    \begin{equation}\begin{aligned}
        Reg_2(\{x_t,y_t\}_{t=1}^T)&=\max_{y\in\Delta_n}\sum_{t=1}^T x_t^TBy-\sum_{t=1}^T x_t^TBy_t\\
        &\leq\max_{y\in\Delta_n}\sum_{t=1}^\tau x_t^TBy-\sum_{t=1}^\tau x_t^TBy_t+\max_{y\in\Delta_n}\sum_{t=\tau+1}^T x_t^TBy-\sum_{t=\tau+1}^T x_t^TBy_t \\
        &= Reg_2(\{x_t,y_t\}_{t=1}^\tau)+Reg_2(\{x_t,y_t\}_{t=\tau +1}^T)\\
        &\leq \tau\|B\|_{\max}+C\cdot f(T)\\
        &\leq C'\cdot f(T)
    \end{aligned}\end{equation}
    for some constant $C'$ where we have used the fact that $\tau=O(f(T))$, and therefore $\{x_t,y_t\}_{t=1}^T$ is $f$-no-regret.
    For the ``only if'' direction, notice that
    \begin{equation}
        \begin{aligned}
            Reg_2(\{x_t,y_t\}_{t=1}^T)&=\max_{y\in\Delta_n}\sum_{t=1}^T x_t^TBy-\sum_{t=1}^T x_t^TBy_t\\
            &\geq \sum_{t=1}^T x_t^T B (\tilde{y}-y_t)\\
            &= \sum_{t=1}^\tau x_t^T B (\tilde{y}-y_t)+\tilde{x}^TB\sum_{t=\tau+1}^T(\tilde{y}-y_t)\\
            &\geq -\tau \|B\|_{\max}+Reg_2(\{x_t,y_t\}_{t=\tau +1}^T).
        \end{aligned}
    \end{equation}
    If we want $Reg_2(\{x_t,y_t\}_{t=1}^T)\leq C\cdot f(T)$ for some constant $C$, we must have:
    \begin{equation}
        Reg_2(\{x_t,y_t\}_{t=\tau +1}^T)\leq C\cdot f(T)+\tau\|B\|_{\max}\leq C'f(T)
    \end{equation}
    for some constant $C'$, which completes the proof. The proof of \eqref{eq:auxlemma_Stack_with_update_rule_proof} follows directly by taking expectation over all $\{y_t\}_{t=1}^T$ trajectories generated by $\mathcal{A}_2$.
\end{proof}

\subsection{\Cref{alg:PAAL}, Detailed Version and Proof of \Cref{thm:PAAL_stackreg_bound} and Discussion}\label{app:proof_PAAL_stackreg_bound}
We present the pseudocode for \Cref{alg:PAAL} as follows:
\begin{algorithm}[htb]
   \caption{Playing Against Ascending Learner}
   \label{alg:PAAL}
\begin{algorithmic}
    \STATE {\bfseries Input:} Accuracy margin $d$.
   \STATE Run $BR(0)\leftarrow\texttt{test}(0)$ and $BR(1)\leftarrow\texttt{test}(1)$.
   \IF{$\texttt{test}(0)=\texttt{test}(1)$}
   \STATE Let $y^*=\texttt{test}(0)=\texttt{test}(1)$.
   \STATE Compute $\tilde{x}\leftarrow \argmax_{x\in\Delta_m}x^TAy^*$.
   \ELSE
   \STATE $p_L\leftarrow \mathrm{1}[BR(1)=(1,0)]$, $p_R\leftarrow \mathrm{1}[BR(1)=(0,1)]$.
   \STATE $p_L^*,p_R^*\leftarrow \texttt{BinarySearch}(p_L,p_R,d)$.
   \IF{$p_L^*<p_R^*$}
   \STATE $E_1^-\leftarrow [0, \max \{p_L^*-d, 0\}]$; 
   \STATE$E_2^-\leftarrow [\min\{1, p_R^*+d\}, 1]$.
   \ELSE
   \STATE $E_2^-\leftarrow [0, \max \{p_R^*-d, 0\}]$; 
   \STATE $E_1^-\leftarrow [\min\{1, p_L^*+d\}, 1]$.
   \ENDIF
   \STATE Compute $\tilde{x}$ through \eqref{opt:pessimistic_facet} and \eqref{eq:pessimistic_facet}.
   \ENDIF
   \STATE Stick to $\tilde{x}$ for all remaining time steps.
\end{algorithmic}
\end{algorithm}

Here $\texttt{test}(\cdot)$ is a procedure that compares the learner's response at two consecutive time steps to determine the underlying best response to some optimizer action $x=(p,1-p)$, as shown in \Cref{alg:test}, and $\texttt{BinarySearch}(\cdot,\cdot,d)$ is a procedure that approximates the optimizer action $x^*$ at which the learner is indifferent up to an error margin $d$, as shown in \Cref{alg:binary_search}.
\begin{algorithm}[htb]
    \caption{\texttt{test}}
    \label{alg:test}
    \begin{algorithmic}
        \STATE {\bfseries Input:} Action parameter $p\in[0,1]$
        \STATE Use $t$ to denote the current timestep.
        \STATE Play $x_t=(p,1-p)^T$ and observe $y_t=(q_t,1-q_t)^T$.
        \STATE Play an arbitrary $x_{t+1}$ and observe $y_{t+1}=(q_{t+1},1-q_{t+1})^T$.
        \IF{$q_{t+1}>q_t$}
        \STATE \textbf{return} $(1,0)^T$
        \ELSE
        \STATE \textbf{return} $(0,1)^T$
        \ENDIF
    \end{algorithmic}
\end{algorithm}

\begin{algorithm}[htb]
    \caption{\texttt{BinarySearch}}
    \label{alg:binary_search}
    \begin{algorithmic}
        \STATE {\bfseries Input:} Interval endpoints $p_L,p_R$, accuracy margin $d$.
        \IF{$|p_L-p_R|\leq d$}
        \STATE \textbf{return} $p_L,p_R$
        \ENDIF
        \STATE $BR\leftarrow \texttt{test}(\frac{p_L+p_R}{2})$.
        \IF{$BR=(1,0)^T$}
        \STATE \textbf{return} $\texttt{BinarySearch}(\frac{p_L+p_R}{2},p_R,d)$
        \ELSE
        \STATE \textbf{return} $\texttt{BinarySearch}(p_L,\frac{p_L+p_R}{2},d)$
        \ENDIF
    \end{algorithmic}
\end{algorithm}

We also provide the detailed version of \Cref{thm:PAAL_stackreg_bound} as follows:
\begin{theorem}
    Suppose $m=n=2$ and the payoff matrix $B$ does not contain identical columns. For some chosen parameter $d$, if either one facet is empty, or each facet has diameter at least $d$ and $P_2$ uses an ascent algorithm $\mathcal{A}_2$ that is $f$-no-regret, \Cref{alg:PAAL} with accuracy margin $d$ achieves a Stackelberg regret of at most
    \begin{equation}
        \left(4+\frac{C}{\epsilon_1}f(T)\right) \|A\|_{\max}
    \end{equation}
    if the learner has a strictly dominated action, where $\epsilon_1=\min_{x\in\Delta_m} x^T B(e_1-e_2)$, and
    \begin{equation}
        \left(-2\log d+6+2Td+\frac{2Cf(T)}{\epsilon_2 d}\right)\|A\|_{\max}
    \end{equation}
    otherwise, where $\epsilon_2$ is a constant that depends only on $B$. Under either case, $C$ is a constant that depends only on the learner regret constant and payoff matrix.
    This indicates that the Stackelberg regret is at most $O(\frac{f(T)}{d}+dT-\log d)$ as long as $d=\Omega(\exp(-f(T)))$.
\end{theorem}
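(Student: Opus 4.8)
The plan is to split the analysis according to which branch of \Cref{alg:PAAL} is executed, which is dictated by the two hypotheses on the facets. Since $m=n=2$, identify an optimizer action with the scalar $p\in[0,1]$ via $x=(p,1-p)$; because $B$ has no identical columns the map $p\mapsto x^TB(e_1-e_2)$ is a non-trivial affine function, so each facet $E_1,E_2$ is a closed subinterval of $[0,1]$, the two cover $[0,1]$, and---ruling out the fully-indifferent matrix---if both are nonempty they meet at a single point $p^*$ with the best response being the unique pure strategy on either side of it. The engine of the argument is the correctness of \Cref{alg:test}: when the optimizer plays $x_t=(p,1-p)$ at round $t$ and the learner responds $y_t,y_{t+1}$ at rounds $t,t+1$, the ascent inequality $x_t^TB(y_{t+1}-y_t)\ge 0$ and the $2$-D parametrization give $(q_{t+1}-q_t)\,x_t^TB(e_1-e_2)\ge 0$, so the sign of $q_{t+1}-q_t$ reveals the (unique, for $p\neq p^*$) pure best response to $(p,1-p)$.

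First I would handle the case where one facet is empty, say $E_1=\emptyset$. Then $e_2$ is strictly dominant with margin $\epsilon_1=\min_{x\in\Delta_m}x^TB(e_2-e_1)>0$, so $\texttt{test}(0)=\texttt{test}(1)=e_2$, the algorithm sets $y^*=e_2$ and commits to $\tilde x=\argmax_x x^TAe_2$, which by \eqref{eq:Stackelberg_facet} equals the Stackelberg value $V(A,B)$. The $O(1)$ probing rounds contribute $O(\|A\|_{\max})$. For the committed phase, \Cref{lem:Stack_with_update_rule_proof} turns the learner's $f$-no-regret guarantee into $\tilde x^TB\sum_t(e_2-y_t)\le Cf(T)$; since $\tilde x^TB(e_2-y_t)=y_{t,1}\,\tilde x^TB(e_2-e_1)\ge\epsilon_1 y_{t,1}$, this forces $\sum_t y_{t,1}\le(C/\epsilon_1)f(T)$, and bounding $V(A,B)-\tilde x^TAy_t=\tilde x^TA(e_2-y_t)\le\|A\|_{\max}\,y_{t,1}$ up to an absolute constant and summing gives the first claimed bound $\bigl(4+\tfrac{C}{\epsilon_1}f(T)\bigr)\|A\|_{\max}$.

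In the remaining case both facets have diameter $\ge d$, hence $d\le p^*\le 1-d$ and $\texttt{test}(0)\neq\texttt{test}(1)$. \Cref{alg:binary_search} then keeps $p_L$ on the $e_1$-side and $p_R$ on the $e_2$-side while halving $|p_L-p_R|$, so after $O(\log(1/d))$ probes it returns $p_L^*,p_R^*$ with $|p_L^*-p_R^*|\le d$ and $p^*$ trapped between them. I would then verify that the sets $E_1^-,E_2^-$ produced by backing off an extra $d$ and clipping to $[0,1]$ are nonempty (this is where the diameter-$\ge d$ hypothesis enters), satisfy $E_i^-\subseteq E_i$, are $O(d)$-pessimistic in Hausdorff distance (via $|p^*-p_L^*|\le d$ and $|p^*-p_R^*|\le d$), and are $\Omega(d)$-separated in $L_1$ from the complementary true facet. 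These are exactly the hypotheses of \Cref{thm:stackreg_given_facet_estimation} with $d_1=O(d)$ and $d_2=\Omega(d)$; applying it, using \Cref{prop:pessimistic_value_facet} for the $V(A,B)-V^-(A,B)=O(d)\|A\|_{\max}$ term, invoking \Cref{lem:Stack_with_update_rule_proof} to discard the $O(\log(1/d))$ probing rounds (legitimate precisely when $d=\Omega(\exp(-f(T)))$, so that $\log(1/d)=O(f(T))$), and adding those rounds' $O(-\log d)\|A\|_{\max}$ cost, yields Stackelberg regret $O\!\left(\tfrac{f(T)}{d}+dT-\log d\right)\|A\|_{\max}$.

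The hard part is the correctness of \Cref{alg:test}, on which the whole binary-search branch rests. The sign-reading argument breaks down precisely when the learner enters a probe already playing a pure strategy---which an ascent algorithm is allowed to do, e.g.\ plain best-response dynamics, which jumps onto vertices: then $q_{t+1}-q_t=0$ even though the true best response is a vertex, and a careless tie-break can push the search into the wrong half-interval with no way to recover, because all subsequent probes then lie on the same side and the learner stays pinned. This must be dealt with by a careful analysis of the vertex cases---for instance probing $(p,1-p)$ for two consecutive rounds and, on a tie, reading the best response directly off $q_t\in\{0,1\}$ (which holds exactly when $y_t$ is already the unique best response and $p\neq p^*$)---together with checking that the $e_1$-side/$e_2$-side invariant of \Cref{alg:binary_search} survives such reads. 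Once \Cref{alg:test} is certified, the rest is Hölder-inequality bookkeeping on top of the already-established \Cref{thm:stackreg_given_facet_estimation}, \Cref{prop:pessimistic_value_facet}, and \Cref{lem:Stack_with_update_rule_proof}.
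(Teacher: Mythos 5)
Your proposal follows essentially the same route as the paper's proof: the same case split on whether a facet is empty, the same use of \Cref{lem:Stack_with_update_rule_proof} to discard the $O(-\log d)$ probing rounds and to convert the learner's $f$-no-regret guarantee into a bound on $\|\sum_t(e_i-y_t)\|$, and the same reduction of the nonempty-facet case to \Cref{thm:stackreg_given_facet_estimation} with $d_1=O(d)$ and $d_2=\Omega(d)$. The one place you go beyond the paper is in flagging the failure mode of \Cref{alg:test} when the learner enters a probe already playing its pure best response (so $q_{t+1}=q_t$ and the sign read is ambiguous); the paper's proof silently assumes \texttt{test} is correct, so your proposed tie-breaking fix is a legitimate refinement rather than a deviation.
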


\begin{proof}
If one of the facet is empty, we would have $\texttt{test}(0)=\texttt{test}(1)$. In this case, despite the first $4$ interaction steps used by $2$ \texttt{test} calls, \Cref{alg:PAAL} will output the Stackelberg equilibrium strategy for all subsequent actions.

Now we analyze the Stackelberg regret against an $f$-no-regret algorithm $\mathcal{A}_2$. W.l.o.g suppose $e_2$ is the strictly dominated action, since $e_2$ is strictly dominated, we would have $x^TBe_1-x^TBe_2>0$ for all $x\in\Delta_m$, there exists a constant $\epsilon$ that depends only on $B$ (but not $d$), such that:
\begin{equation}
    \tilde{x}^TBe_1-\tilde{x}^TBe_2\geq \epsilon.
\end{equation}
Therefore, by \Cref{lem:Stack_with_update_rule_proof} the sequence should satisfy:
\begin{equation}
    \begin{aligned}
        Reg_2(\{x_t,y_t\}_{t=5}^T)=\tilde{x}^TB\sum_{t=5}^T (e_1-y_t)\leq C\cdot f(T)
    \end{aligned}
\end{equation}
for some constant $C$, therefore, we have:
\begin{equation}
    \|\sum_{t=5}^T (e_1-y_t)\|_\infty\leq \frac{C}{\epsilon} f(T).
\end{equation}
This would lead to an upper bound on the Stackelberg regret:
\begin{equation}\begin{aligned}
    StackReg_1(\mathcal{A}_1,\mathcal{A}_2)&\leq 4V(A,B)-\sum_{t=1}^4 x_t^TAy_t+\tilde{x}^TA\sum_{t=5}^T (e_1-y_t)\\
    &\leq 4\|A\|_{\max}+\|\tilde{x}^TA\|_1\|\sum_{t=5}^T (e_1-y_t)\|_\infty\\
    &\leq 4\|A\|_{\max}+\|A\|_{\max}\frac{C}{\epsilon} f(T).
\end{aligned}\end{equation}

If neither facets are empty, the binary search phase shrinks the interval length from $1$ to $d$, which requires at most $-\log d+1$ calls of the \texttt{BinarySearch} function. Each Binary search calls the procedure \texttt{test} for at most one time, which corresponds to at most two interaction time steps. Therefore, the total number of time steps in the binary search phase is at most $-2\log d+6$, leading to a Stackelberg regret of at most $(-2\log d+6)\|A\|_{\max}$, which is $O(f(T))$.

Since each facet has length at least $d$, the pessimistic facets computed by $E_1^-$ and $E_2^-$ satisfies $d_H(E_i,E_i^-)\leq 2d$ and $\inf _{x\in E_{i}^-,x'\in E_j}\|x-x'\|_1\geq d$ for $i=1,2$. Therefore, combining \Cref{lem:Stack_with_update_rule_proof} and \Cref{thm:stackreg_given_facet_estimation} we obtain:
\begin{equation}\begin{aligned}
    StackReg_1(\mathcal{A}_1,\mathcal{A}_2)&\leq \lfloor-2\log d +6\rfloor V(A,B)-\sum_{t=1}^{\lfloor-2\log d+6\rfloor} x_t^TAy_t+\tilde{x}^TA\sum_{t=\lfloor-2\log d+6\rfloor+1}^T (e_1-y_t)\\
    &\leq (-2\log d +6)\|A\|_{\max}+\|\tilde{x}^TA\|_1\|\sum_{t=\lfloor-2\log d+6\rfloor+1}^T (e_1-y_t)\|_\infty\\
    &\leq (-2\log d +6)\|A\|_{\max}+\left(2Td+\frac{2Cf(T)}{\epsilon d}\right)\|A\|_{\max}.
\end{aligned}\end{equation}
\end{proof}

\paragraph{Discussion on matrix reconstruction view.}
Observe that by \Cref{def:equivalence_class} multiplication by a positive constant and shifting an all-one vector in a row preserves the equivalence class, there are only three possible forms of payoff matrices $B$:
\begin{equation}
    \begin{bmatrix}
        0 & \lambda\\
        0 & 1
    \end{bmatrix};
    \begin{bmatrix}
        0 & \lambda\\
        0 & -1
    \end{bmatrix};
    \begin{bmatrix}
        0 & \lambda\\
        0 & 0
    \end{bmatrix}.
\end{equation}
In the third case (and the instances where $\lambda=0$) \Cref{as:Ci_neq_0} is not satisfied, indicating that this $B$ instance is not learnable. So we focus on the first two cases where $\lambda\neq 0$.

For the first case where
\begin{equation}
    B=\begin{bmatrix}
        0 & \lambda\\
        0 & 1
    \end{bmatrix}
\end{equation}
for $x_t=(p,1-p)$ we know that if $\lambda>0$ then $y^*=(0,1)$ is a dominant action, and when $\lambda\leq 0$ we have:
\begin{equation}
    \begin{cases}
        p<\frac{1}{1-\lambda} & (0,1) \text{ is the best response};\\
        p=\frac{1}{1-\lambda} & \text{all actions are equivalent};\\
        p>\frac{1}{1-\lambda} & (1,0) \text{ is the best response}.
    \end{cases}
\end{equation}
Similarly for the second case where
\begin{equation}
    B=\begin{bmatrix}
        0 & \lambda\\
        0 & -1
    \end{bmatrix},
\end{equation}
if $\lambda<0$ then $(1,0)$ is a dominant action and when $\lambda\geq 0$ we have:
\begin{equation}
    \begin{cases}
        p<\frac{1}{1+\lambda} & (1,0) \text{ is the best response};\\
        p=\frac{1}{1+\lambda} & \text{all actions are equivalent};\\
        p>\frac{1}{1+\lambda} & (0,1) \text{ is the best response}.
    \end{cases}
\end{equation}

Now we have reduced the problem to identifying the matrix type and finding $\lambda$ (or equivalently, $p^*:=\frac{1}{1\pm\lambda}$ where all actions are equivalent to the learner).

\subsection{\Cref{alg:PAMD}, Detailed Version and Proof of \Cref{thm:PAMD_stackreg_bound}}\label{app:proof_PAMD_stackreg_bound}
We present the pseudocode for \Cref{alg:PAMD} as follows,
\begin{algorithm}[htb]
   \caption{Playing Against Mirror Descent}
   \label{alg:PAMD}
\begin{algorithmic}
    \STATE {\bfseries Input:} Per-row exploration step $k$.
   \FOR{$i=1,2,\dots,m$}
        \STATE $\hat{B}_i\leftarrow \texttt{ExploreRow}(e_i,k)$
   \ENDFOR
   \STATE Construct estimation $\hat{B}=\begin{bmatrix}
       \hat{B}_1 & \hat{B}_2 & \dots & \hat{B}_m
   \end{bmatrix}^T$
   \STATE Compute the equivalence class $\hat{\mathcal{B}}$ and commit to $\tilde{x}$ through \eqref{opt:pessimistic_with_extra_pessimism} with $\tilde{f}(T)=\sqrt{Tf(T)}$.
\end{algorithmic}
\end{algorithm}
where the \texttt{ExploreRow} procedure shown in \Cref{alg:ExploreRow}.
\begin{algorithm}[htb]
   \caption{\texttt{ExploreRow}}
   \label{alg:ExploreRow}
\begin{algorithmic}
    \STATE {\bfseries Input:} Action $x\in\Delta_m$, exploration step $k$.
    \FOR{$\tau=1,2,\dots, k+1$}
    \STATE Play $x$, observe the follower action $y_\tau$ in this time step.
    \ENDFOR
    \STATE \textbf{return} $\hat{B}_i=\frac{1}{k}\sum_{\tau=1}^k -\eta_\tau\nabla_yD(y_{\tau+1}\|y_\tau)$
\end{algorithmic}
\end{algorithm}

The detailed version of \Cref{thm:PAMD_stackreg_bound} is presented as follows:
\begin{theorem}
    If the learner payoff matrix $B$ statisfies the assumptions needed in \Cref{thm:stackreg_given_estimation}, $P_2$ follows update rule \eqref{eq:SMD_update_rule}, and each entry $\xi_{t,i}$ is i.i.d. $R$-sub-Gaussian, then with probability at least $1-\delta$, $P_1$ using \Cref{alg:PAMD} with $k=\left(T/g(T)\right)^2 2R^2\log(2mn/\delta)$, incurs Stackelberg regret of at most
    \begin{equation}
        \left(m\left(1+k\right)+4\left(\frac{8ng(T)}{\max_{j,k} \|B(e_j-e_k)\|_\infty}+\sqrt{Tf(T)}\right)Sen((\mathcal{B}_{i^*}^\circ)^T)
    +\frac{C\sqrt{Tf(T)}}{\max_{j,k} \|B(e_j-e_k)\|_\infty}\right)\|A\|_{\max}.
    \end{equation}
    where $C$ is a constant that depends only on the learner regret constant and payoff matrix.
\end{theorem}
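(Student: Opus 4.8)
The plan is to reduce the claim to the refined form of \Cref{thm:stackreg_given_estimation} together with \Cref{lem:Stack_with_update_rule_proof}, so the substantive work is a high-probability estimation guarantee for the exploration phase of \Cref{alg:PAMD}. I would split the horizon into the exploration phase --- the $m$ successive calls to \texttt{ExploreRow}, occupying $m(k+1)$ rounds in total --- and the commitment phase, in which $P_1$ sticks to the action $\tilde{x}$ obtained from \eqref{opt:pessimistic_with_extra_pessimism} with $\tilde{f}(T)=\sqrt{Tf(T)}$. The exploration phase contributes at most $O(mk\|A\|_{\max})$ to the Stackelberg regret by the crude per-round bound $|V(A,B)-x_t^TAy_t|\le 2\|A\|_{\max}$, which matches the $m(1+k)\|A\|_{\max}$ term in the statement up to constants. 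For the commitment phase I would run the argument of \Cref{thm:stackreg_given_estimation} essentially verbatim, the only change being that the $f$-no-regret budget of $\mathcal{A}_2$ must be charged over the suffix $\{x_t,y_t\}_{t=m(k+1)+1}^{T}$ rather than over the whole trajectory; \Cref{lem:Stack_with_update_rule_proof} licenses exactly this, since for the relevant choice $g(T)=\sqrt{Tf(T)}$ and using that every no-regret learner has $f(T)=\Omega(\sqrt{T})$, the prefix length satisfies $m(k+1)=O\big((T/g(T))^2\big)=O(T/f(T))=O(f(T))$.

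\textbf{Step 1 (inverting the mirror step).} During \texttt{ExploreRow}$(e_r,k)$ the optimizer plays $x_\tau=e_r$ for $k+1$ consecutive rounds. Because the Bregman regularizer in \eqref{eq:SMD_update_rule} blows up on the boundary of $\Delta_n$, each update $y_{\tau+1}$ lies in the interior, so the only binding constraint is $1_n^Ty=1$ and the first-order stationarity condition reads $\eta_\tau\nabla_y D(y_{\tau+1}\|y_\tau)=B^Te_r+\xi_\tau+\lambda_\tau 1_n$ for a scalar multiplier $\lambda_\tau$. Hence the quantity averaged inside \texttt{ExploreRow} equals the $\tau$-th vector $B^Te_r+\xi_\tau+\lambda_\tau 1_n$ up to the sign/normalization convention of \Cref{alg:ExploreRow}, and its average over $\tau=1,\dots,k$ recovers the $r$-th row of $B$ up to (i) an additive multiple of $1_n$ --- precisely the per-row degree of freedom of the equivalence class in \Cref{def:equivalence_class}, i.e.\ the single coordinate that the simplex projection destroys --- and (ii) the sample mean $\bar{\xi}^{(r)}:=\tfrac{1}{k}\sum_{\tau=1}^k\xi_\tau$ of the injected noise. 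When \Cref{alg:PAMD} forms the class representative $\hat{\mathcal{B}}_i^\circ$ from $\hat{B}$, the multiplier terms cancel in the column differences $\hat{B}_{:,k}-\hat{B}_{:,i}$, leaving, before normalization, only $(B_{:,k}-B_{:,i})$ corrupted by an $\R^m$ error vector whose $r$-th entry is $\bar{\xi}^{(r)}_k-\bar{\xi}^{(r)}_i$, hence bounded in magnitude by $2\beta$ with $\beta:=\max_{r\in[m],j\in[n]}|\bar{\xi}^{(r)}_j|$; dividing by the empirical normalizer moreover introduces the perturbation between it and the true $\max_{j_1,j_2}\|B_{:,j_1}-B_{:,j_2}\|_\infty=\max_{j,k}\|B(e_j-e_k)\|_\infty$. \Cref{as:Ci_neq_0} forces this quantity to be strictly positive (otherwise $\mathcal{B}_i^\circ\equiv 0$ and $C_i=0$), so for small noise the empirical normalizer is bounded away from $0$ and within $O(\beta)$ of the truth. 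Collecting the bounds, and recalling that $\|d_i(\mathcal{B},\hat{\mathcal{B}})\|_\infty$ is a matrix $\infty$-norm (a maximum of row sums over $n-1$ columns), yields $\|d_i(\mathcal{B},\hat{\mathcal{B}})\|_\infty\le \tfrac{8n}{\max_{j,k}\|B(e_j-e_k)\|_\infty}\,\beta$ for every $i\in[n]$.

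\textbf{Step 2 (concentration and assembly).} Each $\bar{\xi}^{(r)}_j$ is the average of $k$ i.i.d.\ $R$-sub-Gaussian variables, hence $\tfrac{R}{\sqrt{k}}$-sub-Gaussian, so a union bound over the $mn$ scalars $\{\bar{\xi}^{(r)}_j\}$ gives, for any $s>0$, failure probability at most $2mn\exp(-ks^2/2R^2)$. Taking $k=(T/g(T))^2\,2R^2\log(2mn/\delta)$ makes this at most $\delta$ at level $s=g(T)/T$, so with probability at least $1-\delta$ we have $\beta\le g(T)/T$ and hence $\|d_i(\mathcal{B},\hat{\mathcal{B}})\|_\infty\le \epsilon$ for all $i$ with $\epsilon T=\tfrac{8ng(T)}{\max_{j,k}\|B(e_j-e_k)\|_\infty}=O(g(T))$. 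On that event \Cref{thm:stackreg_given_estimation} (refined, with $\tilde{f}(T)=\sqrt{Tf(T)}$) bounds the commitment-phase Stackelberg regret by $\big(4(\epsilon T+\sqrt{Tf(T)})\,Sen((\mathcal{B}_{i^*}^\circ)^T)+\tfrac{C\sqrt{Tf(T)}}{\max_{j,k}\|B(e_j-e_k)\|_\infty}\big)\|A\|_{\max}$, and adding the $\le m(1+k)\|A\|_{\max}$ exploration cost and substituting the value of $\epsilon T$ reproduces exactly the stated bound.

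\textbf{Main obstacle.} The delicate step is Step 1: turning the slogan ``the simplex projection costs exactly one equivalence-class coordinate and nothing else'' into a proof --- i.e.\ showing that inverting \eqref{eq:SMD_update_rule} returns each row of $B$ up to an additive $1_n$-shift plus $\bar{\xi}^{(r)}$, with the multiplier terms cancelling when one passes to $\mathcal{B}_i^\circ$ --- and then tracking the constants through the perturbation of the normalizer $\max_{j,k}\|B(e_j-e_k)\|_\infty$ and through the passage from entrywise to matrix $\infty$-norm, so as to land on the coefficient $\tfrac{8n}{\max_{j,k}\|B(e_j-e_k)\|_\infty}$. Everything else --- the sub-Gaussian union bound, the trivial exploration-cost bound, and the two appeals to \Cref{thm:stackreg_given_estimation} and \Cref{lem:Stack_with_update_rule_proof} --- is routine bookkeeping.
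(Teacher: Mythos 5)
Your proposal is correct and follows essentially the same route as the paper's proof: the same exploration/commitment decomposition, the same KKT inversion of the mirror step (with the boundary blow-up of the regularizer killing the inequality multipliers and the $1_n$-shift absorbed into the equivalence class), the same sub-Gaussian union bound yielding $\|\bar{\xi}\|_\infty\le g(T)/T$, the same $\tfrac{8n}{\max_{j,k}\|B(e_j-e_k)\|_\infty}$ constant in the bound on $\|d_i(\mathcal{B},\hat{\mathcal{B}})\|_\infty$, and the same final assembly via \Cref{lem:Stack_with_update_rule_proof} and the refined \Cref{thm:stackreg_given_estimation}. Your explicit check that $m(k+1)=O(f(T))$ (needed to invoke \Cref{lem:Stack_with_update_rule_proof}) is a small point the paper glosses over, but otherwise the two arguments coincide.
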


\begin{proof}
The Lagrangian of this problem can be written as:
\begin{equation}
    L(y, \lambda,\mu)=\eta_t D(y\|y_t)-(x_t^TB+\xi_t^T)y-\lambda^T y+\mu (\sum_i y_i-1).
\end{equation}
Since $y_{t+1}$ is the optimal solution to \eqref{eq:SMD_update_rule}, the KKT condition yields:
\begin{equation}\label{eq:kkt_reg_descent}
    \begin{aligned}
        \eta_t\nabla_{t+1}yD(y_{t+1}\|y_t)-B^Tx_t-\lambda+\mu\mathrm{1}-\xi_t=0;\\
        y_{t+1,i}\cdot \lambda_i=0, \forall i\in [n].
    \end{aligned}
\end{equation}
Since the Bregman divergence regularizer satisfies $\nabla_yD(y_{t+1}\|y_t)\rightarrow \infty$ if there exists $i\in[n]$ such that $y_{t+1,i}\rightarrow 0$, we can deduce that $\lambda=0$, and the condition becomes:
\begin{equation}
    \eta_t\nabla_yD(y_{t+1}\|y_t)-B^Tx_t+\mu\mathrm{1}-\xi_t=0.
\end{equation}
Let $h_t$ denote $h_t:=\eta_t\nabla_yD(y_{t+1}\|y_t)$, we know that
\begin{equation}
    B^Tx_t=h_t+\mu_t 1-\xi_t,
\end{equation}
here we used $\mu_t$ instead of $\mu$ to indicate the different Lagrange multipliers at different time steps. If we set $x_t=e_i$, we obtain:
\begin{equation}
    B_i=h_t+\mu_t 1-\xi_t,
\end{equation}
where $B_i^T$ is the $i$-th row of $B$, so if we fix $x_t=e_i$ for $t=1$ to $k$, we have the following estimation of $B_i$:
\begin{equation}
    \hat{B}_i=\frac{1}{k}\sum_{t=1}^k h_t
\end{equation}
with error term:
\begin{equation}
    B_i-\hat{B}_i=\frac{1}{k}\sum_{t=1}^k \mu_t 1-\frac{1}{k}\sum_{t=1}^k\xi_t.
\end{equation}
The first term doesn't affect the equivalence class of $B$, and the second term diminishes over time. Hence we can obtain an estimation with arbitrarily small error using uniform exploration. Assuming that each entry $\xi_{t,i}$ is i.i.d. $R$-sub-Gaussian, we obtain through Chernoff bound:
\begin{equation}
    \Pr\left(\frac{1}{k}\|\sum_{t=1}^k \xi_t\|_\infty\leq \epsilon\right)\leq 2n\exp\left(-\frac{k\epsilon^2}{2R^2}\right).
\end{equation}
If we take $k=\frac{2R^2}{\epsilon^2}\log\frac{2mn}{\delta}$ it holds with probability at least $1-\delta/m$ that
\begin{equation}
    \|B_i-\hat{B}_i-\frac{1}{k}\sum_{t=1}^k \mu_t 1\|_\infty\leq \epsilon.
\end{equation}
Combining this with \Cref{thm:stackreg_given_estimation}, we take $\epsilon=\Theta(g(T)/T)$, the number of steps to explore one row of $B$ would be:
\begin{equation}
    k=\left(\frac{T}{g(T)}\right)^2 2R^2\log \frac{2mn}{\delta}.
\end{equation}
When $g(T)=\Omega(\sqrt{T})$ we would have $k=o(T)$, indicating that the exploration cost would also be sublinear in $T$.
The length of the exploration phase consists of at most $m(k+1)$ steps and achieves an estimation $\hat{B}$ in the equivalence class $\hat{\mathcal{B}}$. We first give an upper bound on $\|d_i(\mathcal{B},\hat{\mathcal{B}})\|$ for all $i$. Notice that:
\begin{equation}\begin{aligned}
    &\max_{j_1,j_2} \|B_{:,j_1}-B_{:,j_2}\|_\infty-\max_{j_1,j_2} \|\hat{B}_{:,j_1}-\hat{B}_{:,j_2}\|_\infty\\
    \leq & \max_{j_1,j_2}\{\|B_{:,j_1}-B_{:,j_2}\|_\infty-\|\hat{B}_{:,j_1}-\hat{B}_{:,j_2}\|_\infty\}\\
    \leq & \max_{j_1,j_2}\|(B-\hat{B})(e_{j_1}-e_{j_2})\|_\infty\\
    = & \max_{j_1,j_2}\max_i |(B_i^T-\hat{B}_i^T)(e_{j_1}-e_{j_2})|\\
    =&\max_{j_1,j_2}\max_i |\bar{\xi}^T (e_{j_1}-e_{j_2})|\\
    \leq &2\|\bar{\xi}\|_\infty,
\end{aligned}\end{equation}
where $\bar{\xi}=\frac{1}{k}\sum_{t} \xi_t$ for all $t$ in this \texttt{ExploreRow} function call.
To make notation simpler, in this proof we mildly overload the notation to let:
\begin{equation*}\begin{aligned}
    \mathcal{B}_i&=\begin{bmatrix} B_{:,1}-B_{:,i} & B_{:,2}-B_{:,i} & \dots & B_{:,n}-B_{:,i}\end{bmatrix},\\
    \hat{\mathcal{B}}_i&=\begin{bmatrix} \hat{B}_{:,1}-\hat{B}_{:,i} & \hat{B}_{:,2}-\hat{B}_{:,i} & \dots & \hat{B}_{:,n}-\hat{B}_{:,i}\end{bmatrix}.
\end{aligned}\end{equation*}
We have:
\begin{equation}
    \begin{aligned}
        d_i(\mathcal{B},\hat{\mathcal{B}})=&\frac{\mathcal{B}_i}{\max_{j,k} \|B_{:,j}-B_{:,k}\|_\infty}
    -\frac{\hat{\mathcal{B}}_i}{\max_{j,k} \|\hat{B}_{:,j}-\hat{B}_{:,k}\|_\infty}\\
    =&\frac{\max_{j_1,j_2}\|\hat{B}_{:,j_1}-\hat{B}_{:,j_2}\|_\infty \mathcal{B}_i-\max_{j_1,j_2}\|B_{:,j_1}-B_{:,j_2}\|_\infty\hat{\mathcal{B}}_i}{\max_{j_1,j_2} \|B_{:,j_1}-B_{:,j_2}\|_\infty\times \max_{j_1,j_2}\|\hat{B}_{:,j_1}-\hat{B}_{:,j_2}\|_\infty}\\
    =&\frac{\mathcal{B}_i(\max_{j_1,j_2}\|\hat{B}_{:,j_1}-\hat{B}_{:,j_2}\|_\infty-\max_{j_1,j_2}\|B_{:,j_1}-B_{:,j_2}\|_\infty)}{\max_{j_1,j_2} \|B_{:,j_1}-B_{:,j_2}\|_\infty\times \max_{j_1,j_2}\|\hat{B}_{:,j_1}-\hat{B}_{:,j_2}\|_\infty}\\
    &+\frac{\max_{j_1,j_2}\|B_{:,j_1}-B_{:,j_2}\|_\infty(\mathcal{B}_i-\hat{\mathcal{B}}_i)}{\max_{j_1,j_2} \|B_{:,j_1}-B_{:,j_2}\|_\infty\times \max_{j_1,j_2}\|\hat{B}_{:,j_1}-\hat{B}_{:,j_2}\|_\infty}\\
    \leq &\frac{2\|\bar{\xi}\|_\infty\mathcal{B}_i}{\max_{j_1,j_2} \|B_{:,j_1}-B_{:,j_2}\|_\infty\times \max_{j_1,j_2}\|\hat{B}_{:,j_1}-\hat{B}_{:,j_2}\|_\infty}+\frac{\mathcal{B}_i-\hat{\mathcal{B}}_i}{\max_{j_1,j_2}\|B_{:,j_1}-B_{:,j_2}\|_\infty-2\|\bar{\xi}\|_\infty}.
    \end{aligned}
\end{equation}
Since our exploration round $k=\left(\frac{T}{g(T)}\right)^2 2R^2\log \frac{2mn}{\delta}$ we obtain through union bound that with probability at least $1-\delta$, for all \texttt{ExploreRow} function calls, $\|\bar{\xi}\|_\infty\leq \frac{g(T)}{T}$ which goes to zero as $T\rightarrow \infty$, for large enough $T$ we have:
\begin{equation}
    \begin{aligned}
        \|d_i(\mathcal{B},\hat{\mathcal{B}})\|_\infty\leq &\frac{4\|\bar{\xi}\|_\infty\|\mathcal{B}_i\|_\infty}{(\max_{j_1,j_2} \|B_{:,j_1}-B_{:,j_2}\|_\infty)^2}+\frac{2\|\mathcal{B}_i-\hat{\mathcal{B}}_i\|_\infty}{\max_{j_1,j_2}\|B_{:,j_1}-B_{:,j_2}\|_\infty}\\
        \leq & \frac{4n\|\bar{\xi}\|_\infty}{\max_{j_1,j_2} \|B_{:,j_1}-B_{:,j_2}\|_\infty}+\frac{4n\|\bar{\xi}\|_\infty}{\max_{j_1,j_2}\|B_{:,j_1}-B_{:,j_2}\|_\infty}\\
        \leq & \frac{8n\|\bar{\xi}\|_\infty}{\max_{j_1,j_2}\|B_{:,j_1}-B_{:,j_2}\|_\infty}\\
        =&\frac{8n}{\max_{j_1,j_2}\|B_{:,j_1}-B_{:,j_2}\|_\infty}\frac{g(T)}{T}\\
        =& \Theta(\frac{g(T)}{T}).
    \end{aligned}
\end{equation}
Therefore, combining \Cref{lem:Stack_with_update_rule_proof} and \Cref{thm:stackreg_given_estimation} we obtain:
\begin{equation}\begin{aligned}
    &StackReg_1(\mathcal{A}_1,\mathcal{A}_2)\\
    &\leq m(k+1)V(A,B)+\left(4\left(\frac{8ng(T)}{\max_{j,k} \|B(e_j-e_k)\|_\infty}+\sqrt{Tf(T)}\right) Sen\left((\mathcal{B}_{i^*}^\circ)^T\right)
    +\frac{C\sqrt{Tf(T)}}{\max_{j,k} \|B(e_j-e_k)\|_\infty}\right)\|A\|_{\max}\\
    &\leq m(k+1)\|A\|_{\max}+\left(4\left(\frac{8ng(T)}{\max_{j,k} \|B(e_j-e_k)\|_\infty}+\sqrt{Tf(T)}\right) Sen\left((\mathcal{B}_{i^*}^\circ)^T\right)
    +\frac{C\sqrt{Tf(T)}}{\max_{j,k} \|B(e_j-e_k)\|_\infty}\right)\|A\|_{\max}\\
    &=O(\left(\frac{T}{g(T)}\right)^2)+O(\sqrt{Tf(T)}+g(T))\\
    &=O(\sqrt{Tf(T)}+g(T)+\left(\frac{T}{g(T)}\right)^2),
\end{aligned}\end{equation}
which completes the proof.
\end{proof}

\newpage
\section{Numerical Experiments}\label{app:numerical_experiments}
\subsection{Empirical Simulations for \Cref{subsec:PAAL}}
For all experiments in this section, we assume the learner is using Online Gradient descent (OGD) with step size
\begin{equation}
    \eta_t=\frac{\eta_0}{\sqrt{t}}
\end{equation}
For the purpose of properly displaying the interaction and learning process, we choose different $\eta_0$ for different game instances. For each game instance, we compare the performance and learning dynamics for optimizer algorithm being either OGD or Binary Search explore-then-commit (BS, \Cref{alg:PAAL}). For Binary Search, we set the accuracy margin $d=0.01$. For each game instance, we plot both the payoff and the strategy (indicated by its 0-th entry) of each player at different time steps. We assume optimizer is the row player and learner is the column player.

\paragraph{Matching pennies.} We first test repeated matching pennies, where the payoff matrices are given by:
\begin{equation}
    A=\begin{bmatrix}
        1 & -1 \\
        -1 & 1
    \end{bmatrix};
    B=\begin{bmatrix}
        -1 & 1 \\
        1 & -1
    \end{bmatrix}.
\end{equation}
Both the unique Nash equilibrium and the Stackelberg equilibria all have
\begin{equation}
    x=(\frac{1}{2},\frac{1}{2})^T.
\end{equation}
We obtain the curve shown in \Cref{fig:OGD_vs_BS_mp}.
\begin{figure}[htb]
    \centering
    \includegraphics[width=\linewidth]{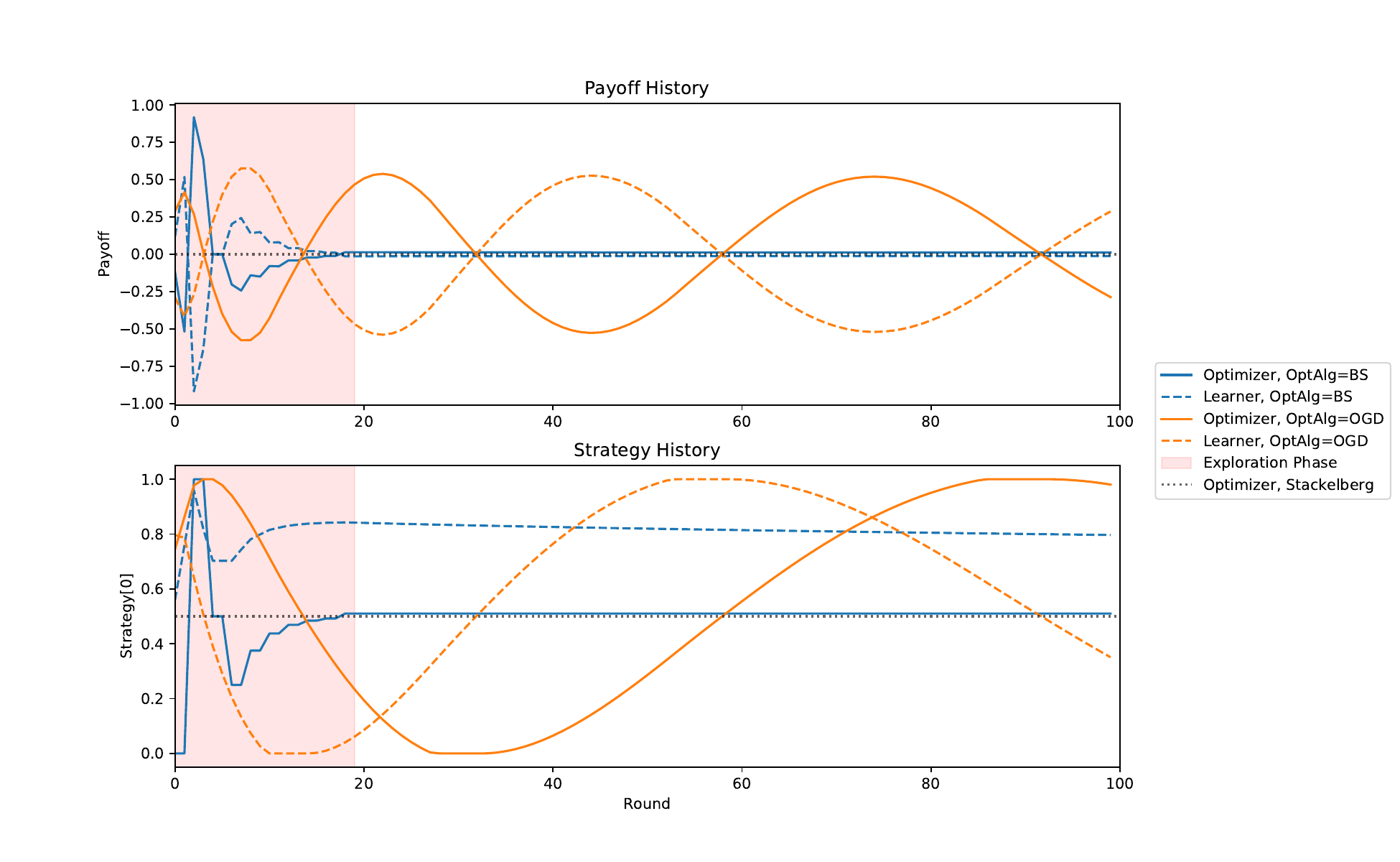}
    \caption{Learning dynamics for optimizer algorithms OGD and BS for matching pennies.}
    \label{fig:OGD_vs_BS_mp}
\end{figure}

Here the blue curves represent the learning dynamics when the optimizer uses binary search (\Cref{alg:PAAL}) with the exploration phase shaded in red. The orange curves represent the dynamics when the optimizer uses OGD. In both optimizer algorithms, the solid lines are curves for the optimizer and the dashed lines are curves for the learner. We plot the optimizer Stackelberg strategy and payoff in black dotted lines.

We can see that when both players are using OGD, the trajectory keeps oscillating and does not converge to the Nash equilibrium. In comparison, when the optimizer uses BS, it quickly learns its real underlying Stackelberg equilibrium (which is also the Nash) and commits to it, yielding a stable learning dynamics.

\paragraph{Constructed game instance 1.} Below we show that BS indeed yields a smaller Stackelberg regret than OGD. We construct the following game instance:
\begin{equation}
    A=\begin{bmatrix}
        5 & 0\\
        0 & 3
    \end{bmatrix};B=\begin{bmatrix}
        -2 & 2\\
        3 & -3
    \end{bmatrix}.
\end{equation}
The unique Stackelberg equilibrium action for the optimizer is:
\begin{equation}
    x=(\frac{3}{5},\frac{2}{5})^T
\end{equation}
with Stackelberg value $3$. We obtain the curve shown in \Cref{fig:OGD_vs_BS_plot1_sep}.
\begin{figure}[htb]
    \centering
    \includegraphics[width=\linewidth]{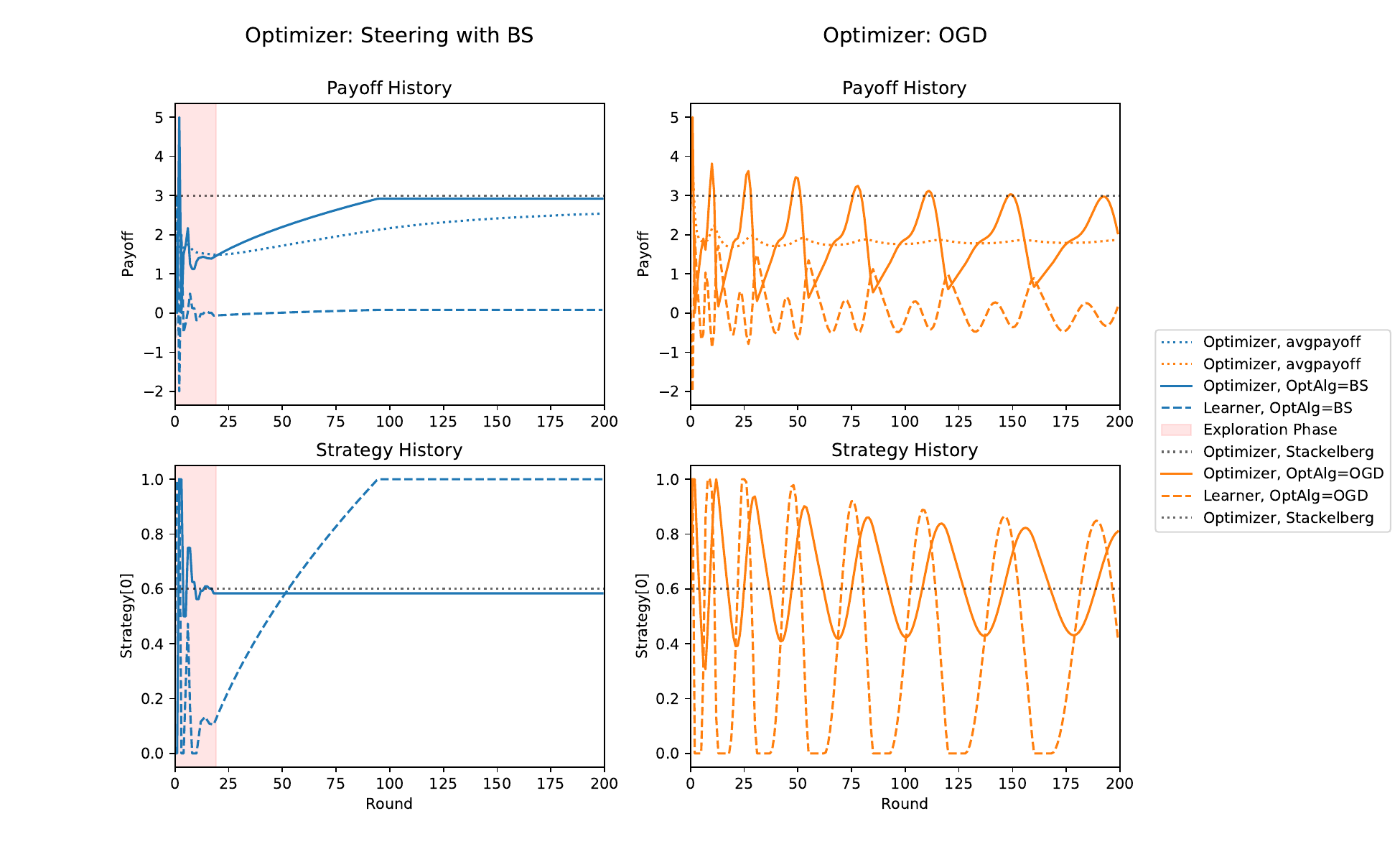}
    \caption{Learning dynamics for optimizer algorithms OGD and BS for game instance 1.}
    \label{fig:OGD_vs_BS_plot1_sep}
\end{figure}
In \Cref{fig:OGD_vs_BS_plot1_sep} all curves are drawn with the same line style as in \Cref{fig:OGD_vs_BS_mp}, in addition we use blue and orange dotted lines to plot the average optimizer payoff for BS and OGD respectively.

We notice again that when the optimizer is using OGD, the algorithm fails to converge. In addition, after the optimizer commits to the pessimistic Stackelberg solution, the learner slowly converges to the best response induced by the Stackelberg equilibrium and steers the optimizer payoff close to the Stackelberg value, which is higher on average than the payoff using SGD.

\paragraph{Constructed game instance 2.} One may argue that OGD fails because it doesn't converge, however it is not the case. Below we construct a game instance that has a unique Nash equilibrium to which OGD converges, and a unique Stackelberg equilibrium with higher optimizer utility than that of Nash. The game instance is as follows:
\begin{equation}
    A=\begin{bmatrix}
        2 & 0\\
        3 & 1
    \end{bmatrix};B=\begin{bmatrix}
        1 & 0\\
        0 & 2
    \end{bmatrix}.
\end{equation}
The unique Nash equilibrium is
\begin{equation}
    x=y=(0,1)^T,
\end{equation}
while the unique Stackelberg equilibrium is
\begin{equation}
    x=(2/3, 1/3)^T, y=(1,0)^T.
\end{equation}
The optimizer payoff at Nash is $1$, while its Stackelberg value is $2$. The simulation result is shown in \Cref{fig:OGD_vs_BS_plot2}.
\begin{figure}[htb]
    \centering
    \includegraphics[width=\linewidth]{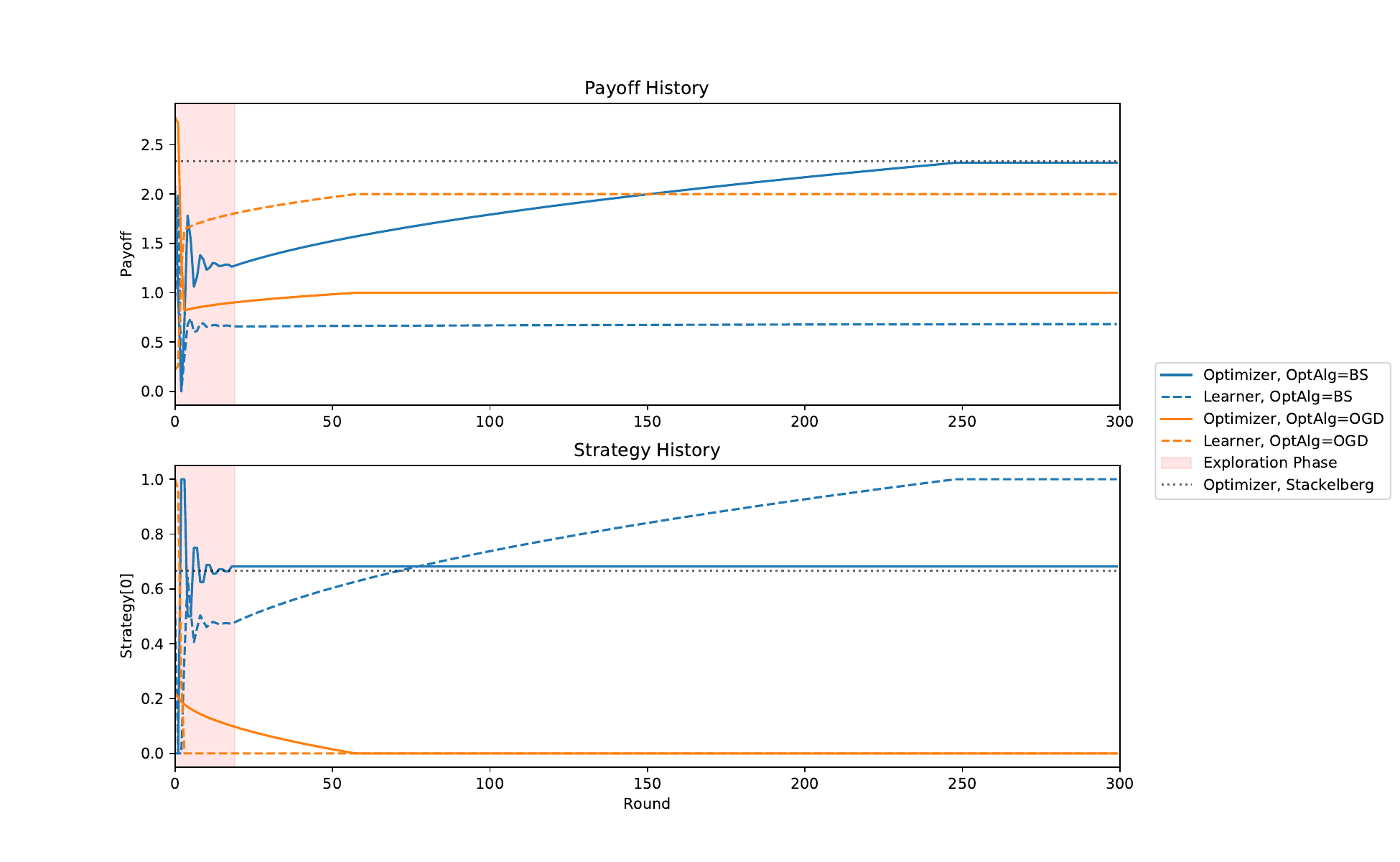}
    \caption{Learning dynamics for optimizer algorithms OGD and BS for game instance 2.}
    \label{fig:OGD_vs_BS_plot2}
\end{figure}

The plot above shows that even if both converges, BS and OGD converge to completely different equilibria, while BS always yields a higher payoff and therefore a lower Stackelberg regret.

\subsection{Empirical Simulations for \Cref{subsec:PAMD}}
In this section we show the effectiveness of \Cref{alg:PAMD}. and illustrate the necessity of pessimism. Here we assume the optimizer is using \Cref{alg:PAMD}, but with different pessimism levels $d\in\{0.01, 0.02, 0.05\}$. We assume that the learner is using Stochastic Mirror descent with KL regularizer. For each pure strategy of the optimizer, we set the number of steps for exploration to be $k=50$. We consider the following game instance:
\begin{equation}
    A=\begin{bmatrix}
        0 & 1\\
        5 & 0
    \end{bmatrix};B=\begin{bmatrix}
        2 & -2\\
        -3 & 3
    \end{bmatrix}.
\end{equation}
The unique Stackelberg equilibrium of this game is
\begin{equation}
    x=(\frac{3}{5}, \frac{2}{5})^T
\end{equation}
with optimizer payoff $2$. We plot the payoffs and strategies of both player at each time step with different $d$ in \Cref{fig:KLestimation_plot1}.
\begin{figure}[htb]
    \centering
    \includegraphics[width=\linewidth]{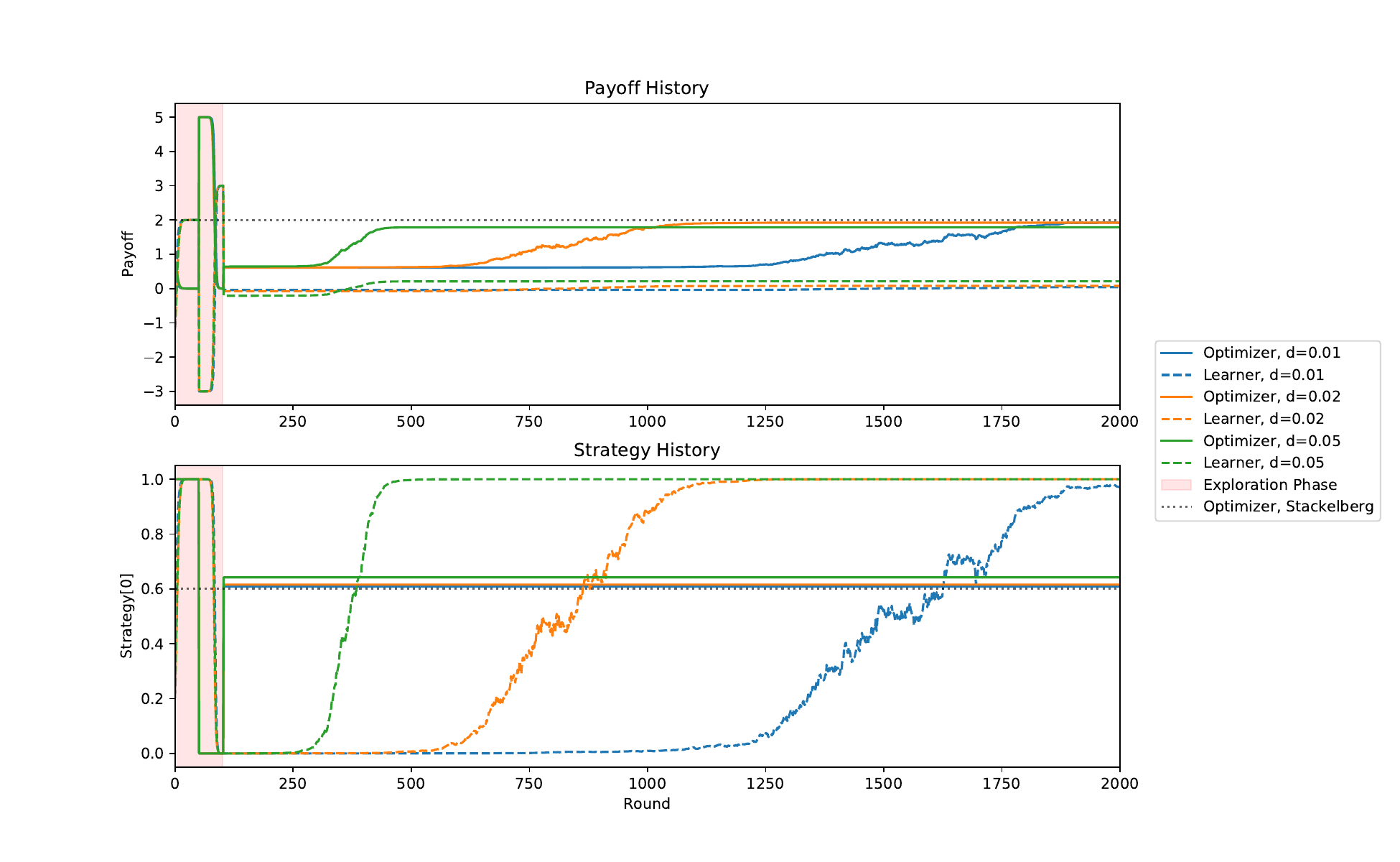}
    \caption{Learning dynamics of payoff estimation for different pessimism levels $d$.}
    \label{fig:KLestimation_plot1}
\end{figure}

In \Cref{fig:OGD_vs_BS_plot2}, different line colors indicate the learning dynamics of different pessimism levels. The solid lines are optimizer curves while the dashed lines being learner curves. The shaded region indicates the exploration phase and the black dotted line represents the optimizer Stackelberg payoff and strategy.

We can see that for larger $d$, the optimizer is being more pessimistic and chooses an action that is farther away from the Stackelberg equilibrium. This leads to a lower optimizer payoff after the learner converges to the unique best response induced by the action. However, for less pessimistic choices of $d$, since the committed optimizer strategy $\tilde{x}$ is too close to the Stackelberg equilibrium where the learner is indifferent from all mixed strategies, the gradients of the learner payoff with respect $\tilde{x}$ will be extremely small and thus takes a lot longer to converge. Once it hasn't converged the optimizer payoff of committing to $\tilde{x}$ will be smaller, illustrating the effectiveness of being pessimistic.

\newpage
\section{Proof of Auxiliary Lemmas}
\subsection{Proof of \Cref{lem:linopt_perturbation_bound}}\label{app:proof_linopt_perturbation_bound}
The dual problem of \eqref{opt:Axleqb} can be written as:
\begin{equation}
    \begin{aligned}
        \text{minimize } \quad & b^T y\\
        \text{subject to } \quad & A^Ty=c,\\
        & y\geq 0,
    \end{aligned}
\end{equation}
and that of \eqref{opt:Axleqb_perturbed} is:
\begin{equation}
    \begin{aligned}
        \text{minimize } \quad & (b+\delta)^T y\\
        \text{subject to } \quad & A^Ty=c,\\
        & y\geq 0.
    \end{aligned}
\end{equation}
We can now follow the approach in Section 5.4 and Section 5.5 in \citep{bertsimas1997introduction} to obtain the result: Since both \eqref{opt:Axleqb} and \eqref{opt:Axleqb_perturbed} are feasible and have bounded constraint sets, they have a finite optimal value and a corresponding optimal solution, so do their dual problems. Therefore for each dual problem there exists an optimal solution that is a basic feasible solution (which are the same for both dual problems), denoted by $y_1,y_2,\dots, y_I$. That is,
\begin{equation}
    V=\min_{i=1,2,\dots,I}b^Ty_i
\end{equation}
and
\begin{equation}
    V(\delta)=\min_{i=1,2,\dots,I}(b+\delta)^Ty_i.
\end{equation}
Therefore we have:
\begin{equation}
    \begin{aligned}
        V(\delta)-V=&\min_{i=1,2,\dots,I}(b+\delta)^Ty_i-\min_{i=1,2,\dots,I}b^Ty_i\\
        \leq& \max_{i=1,2,\dots,I}\delta^Ty_i.
    \end{aligned}
\end{equation}
Since for each $i=1,2,\dots,I$, a linearly independent set of columns $\mathcal{I}$ of $A^T$ of size $n$ captures the basic feasible solution (which is an independent set of rows of $A$) and $y_i$ is specified by $y_i=(A_\mathcal{I})^T c$ in its entries within $\mathcal{I}$ and $0$ elsewhere, we can rewrite $\delta^T y_i$ as
\begin{equation}
    \delta^Ty_i= y_i^T \delta=c^TA_\mathcal{I}\delta_{\mathcal{I}},
\end{equation}
which completes the proof.

\subsection{Proof of \Cref{lem:inverse_upper_bound}}\label{app:proof_inverse_upper_bound}
Consider an index set $\mathcal{I}$ containing linearly independent rows of $M$ that satisfies $|\mathcal{I}|=m$.

Since all $m\times m$ submatrices contain $k$ rows in $\begin{bmatrix}
        (\hat{\mathcal{B}}_i^\circ)^T\\
            1_m^T\\
            -1_m^T
    \end{bmatrix}$ and $m-k$ rows in $-I_m$ for some $0<k\leq m$. Thus $M_{\mathcal{I}}$ has the form:
    \begin{equation}\label{eq:proof_inverse_upper_bound_eq1}
        M_\mathcal{I}=\begin{bmatrix}
            \tilde{M}_\mathcal{I}\\
            -e^T_{\mathcal{I}_{k+1}-n-1}\\
            -e^T_{\mathcal{I}_{k+2}-n-1}\\
            \dots\\
            -e^T_{\mathcal{I}_{m}-n-1}
        \end{bmatrix},
    \end{equation}
    where $\mathcal{I}_j$ denotes the $j$-th index in $\mathcal{I}$ and $\tilde{M}_{\mathcal{I}}$ contains the first $k$ rows selected from $\begin{bmatrix}
        (\hat{\mathcal{B}}_i^\circ)^T\\
            1_m^T\\
            -1_m^T
    \end{bmatrix}$. Now we show some properties of $M_\mathcal{I}^{-1}$.

    First, since $\forall i\in\{k+1,k+2,\dots,m\}$ and $\forall j\in [m]$, it holds that
    \begin{equation}\begin{aligned}
        [i=j]=I_{ij}\stackrel{\text{(i)}}{=}(M_\mathcal{I})_i^T(M_\mathcal{I}^{-1})_{:, j}
        \stackrel{\text{(ii)}}{=}-e^T_{\mathcal{I}_i-n-1}(M_\mathcal{I}^{-1})_{:, j}=-(M_\mathcal{I}^{-1})_{\mathcal{I}_i-n-1, j},
    \end{aligned}\end{equation}
    where (i) comes from the definition of $M_{\mathcal{I}}^{-1}$ and (ii) holds due to \eqref{eq:proof_inverse_upper_bound_eq1}, 
    we have that:
    \begin{equation}\label{eq:proof_inverse_upper_bound_eq2}
        (M_\mathcal{I}^{-1})_{\mathcal{I}_i-n-1}=-e_i^T.
    \end{equation}

Second, consider the product for arbitrary $\Delta\in\R^k$,
    \begin{equation}
        M_{\mathcal{I}}^{-1}\begin{bmatrix}
            \Delta\\
            0_{m-k}
        \end{bmatrix}=\sum_{j=1}^k (M_{\mathcal{I}}^{-1})_{:, j} \Delta_j,
    \end{equation}
    whose $i$-th row can be written as:
    \begin{equation}
        \left(M_{\mathcal{I}}^{-1}\begin{bmatrix}
            \Delta\\
            0_{m-k}
        \end{bmatrix}\right)_i=\sum_{j=1}^k (M_{\mathcal{I}}^{-1})_{ij} \Delta_j.
    \end{equation}
    Let $\mathcal{E}=\{\mathcal{I}_{k+1}-n-1, \mathcal{I}_{k+2}-n-1, \dots,\mathcal{I}_m-n-1\}$, 
    we can see from \eqref{eq:proof_inverse_upper_bound_eq2} that if $i\in\mathcal{E}$,
    \begin{equation}
        \left(M_{\mathcal{I}}^{-1}\begin{bmatrix}
            \Delta\\
            0_{m-k}
        \end{bmatrix}\right)_i=0.
    \end{equation}

    Also, consider the product of $(M_{\mathcal{I}}^{-1})_{i, :}(M_\mathcal{I})_{:, j}$ for $i,j\in [m]\backslash \mathcal{E}$, we have:
    \begin{equation}\label{eq:proof_inverse_upper_bound_eq4}\begin{aligned}
        I_{ij}=&(M_{\mathcal{I}}^{-1})_{i, :}^T(M_\mathcal{I})_{:, j}\\
        =&\sum_{l=1}^k (M_{\mathcal{I}}^{-1})_{i, l}(M_\mathcal{I})_{l, j}+\underbrace{\sum_{l=k+1}^m (M_{\mathcal{I}}^{-1})_{i, l}(M_\mathcal{I})_{l, j}}_{=0}\\
        =&\sum_{l=1}^k (M_{\mathcal{I}}^{-1})_{i, l}(M_\mathcal{I})_{l, j},
    \end{aligned}\end{equation}
    where $\sum_{l=k+1}^m (M_{\mathcal{I}}^{-1})_{i, l}(M_\mathcal{I})_{l, j}=0$ because
    \begin{equation}
        (M_\mathcal{I})_{l, j}=0, \forall l\in \{k+1,\dots,m\},j\in[m]\backslash \mathcal{E}
    \end{equation}
    as shown in \eqref{eq:proof_inverse_upper_bound_eq1}.
    Similarly, for the product of $(M_{\mathcal{I}})_{i, :}(M_\mathcal{I}^{-1})_{:, j}$ for $i,j\in [k]$, we have:
    \begin{equation}\label{eq:proof_inverse_upper_bound_eq5}
        \begin{aligned}
            I_{ij}=&(M_{\mathcal{I}})_{i, :}^T(M_\mathcal{I}^{-1})_{:, j}\\
            =&\sum_{l\in [m]\backslash \mathcal{E}}(M_{\mathcal{I}})_{i, l}(M_\mathcal{I}^{-1})_{l, j}+\underbrace{\sum_{l\in \mathcal{E}}(M_{\mathcal{I}})_{i, l}(M_\mathcal{I}^{-1})_{l, j}}_{=0}\\
            =&\sum_{l\in [m]\backslash \mathcal{E}}(M_{\mathcal{I}})_{i, l}(M_\mathcal{I}^{-1})_{l, j},
        \end{aligned}
    \end{equation}
    where $\sum_{l\in \mathcal{E}}(M_{\mathcal{I}})_{i, l}(M_\mathcal{I}^{-1})_{l, j}=0$ due to
    \begin{equation}\label{eq:proof_inverse_upper_bound_eq3}
        (M_\mathcal{I}^{-1})_{l, j}=0, \forall l\in \mathcal{E},j\in [k]
    \end{equation}
    as indicated by \eqref{eq:proof_inverse_upper_bound_eq2}.
    
    The two equations \eqref{eq:proof_inverse_upper_bound_eq4} and \eqref{eq:proof_inverse_upper_bound_eq5} above show that if we consider the submatrix of $(M_\mathcal{I})_{[k], [m]\backslash \mathcal{E}}$ consisting of its first $k$ rows and columns in $[m]\backslash \mathcal{E}$, its inverse is equal to the corresponding entries of the inverse of $M_\mathcal{I}$. More specifically,
    \begin{equation}
        (M_\mathcal{I}^{-1})_{[m]\backslash \mathcal{E}, [k]}=(M_\mathcal{I})_{[k], [m]\backslash \mathcal{E}}^{-1}.
    \end{equation}
    Now that we have:
    \begin{equation}\begin{aligned}
        \left\|M_{\mathcal{I}}^{-1}\begin{bmatrix}
            \Delta\\
            0_{m-k}
        \end{bmatrix}\right\|_\infty=\|(M_{\mathcal{I}}^{-1})_{:, [k]} \Delta\|_\infty
        \stackrel{\text{(i)}}{=}\|(M_\mathcal{I}^{-1})_{[m]\backslash \mathcal{E}, [k]}\Delta \|_\infty=\|(M_\mathcal{I})_{[k], [m]\backslash \mathcal{E}}^{-1} \Delta \|_\infty,
    \end{aligned}\end{equation}
    where again (i) holds due to \eqref{eq:proof_inverse_upper_bound_eq3}. Since the selection of $\mathcal{I}$ is arbitrary, $\mathcal{E}$ can also be constructed arbitrarily, so the upper bound for $\left\|M_{\mathcal{I}}^{-1}\begin{bmatrix}
            \Delta\\
            0_{m-k}
        \end{bmatrix}\right\|_\infty$ would be
    \begin{equation}
        \max_{\mathcal{I},\mathcal{E}}\|(M_\mathcal{I})_{[k], [m]\backslash \mathcal{E}}^{-1} \Delta\|_\infty=\max_{\mathcal{P},\mathcal{Q}}\left\|\begin{bmatrix}
        (\hat{\mathcal{B}}_i^\circ)^T\\
            1_m^T
    \end{bmatrix}_{\mathcal{P},\mathcal{Q}}^{-1} \Delta \right\|_\infty,
    \end{equation}
    where the maximization is over all $\mathcal{P},\mathcal{Q}$ that satisfies:
    \begin{equation}
        \mathcal{P}\subseteq [n],\mathcal{Q}\subseteq [m], |\mathcal{P}|=|\mathcal{Q}|, \begin{bmatrix}
        (\hat{\mathcal{B}}_i^\circ)^T\\
            1_m^T
    \end{bmatrix}_{\mathcal{P},\mathcal{Q}} \text{ invertible}.
    \end{equation}
    Also, because the entries corresponding to $\pm 1_m^T$ and  are not perturbed, we can multiply those constraints by arbitrary nonzero $\epsilon$ and still get the same result, which completes the proof.

\subsection{Proof of \Cref{lem:inverse_matrix_perturbation_bound}}\label{app:proof_inverse_matrix_perturbation_bound}
Given that $\|B^{-1}\|\|\delta B\|<1$, we can expand the series $(I+B^{-1}\delta B)^{-1}$ as a convergent Neumann series:
\begin{equation}
    (I+B^{-1}\delta B)^{-1}=\sum_{k=0}^\infty (-B^{-1}\delta B)^k.
\end{equation}
So that $(I+B^{-1}\delta B)^{-1}$ is well-defined. As a result,
\begin{equation}
    (B+\delta B)^{-1}=(B(I+B^{-1}\delta B))^{-1}=(I+B^{-1}\delta B)^{-1}B^{-1}
\end{equation}
is also well-defined, and further
\begin{equation}
    \begin{aligned}
        \|(B+\delta B)^{-1}\|=& \|(I+B^{-1}\delta B)^{-1}B^{-1}\|\\
        \leq &\|(I+B^{-1}\delta B)^{-1}\| \|B^{-1}\|\\
        =& \|B^{-1}\|\|\sum_{k=0}^\infty (-B^{-1}\delta B)^k\|\\
        \leq & \|B^{-1}\|\sum_{k=0}^\infty \|(-B^{-1}\delta B)^k\|\\
        \leq & \|B^{-1}\|\sum_{k=0}^\infty \|B^{-1}\|^k\|\delta B\|^k\\
        = &\frac{\|B^{-1}\|}{1-\|B^{-1}\|\|\delta B\|},
    \end{aligned}
\end{equation}
which completes the proof.


\end{document}